\def\subsection{\@startsection{subsection}{2}%
\z@{.7\linespacing\@plus\linespacing}{.5\linespacing}%
{\normalfont\itshape\centering}}
\def\a{\alpha} 
\def\b{\beta} 
\def\d{\delta} 
\def\e{\varepsilon} 
\def\g{\gamma} 
\def\l{\lambda} 
\def\s{\sigma} 
\def\t{\tau} 
\def\z{\zeta}
\def\D{\Delta} 
\def\G{\Gamma}  
\def\S{\Sigma}
  \def\N{\mathcal N}
\def\P{\mathcal P}   
\def\Re{\mathbb{R}}
\def\rho{\varrho}
\def\mG{\mathbb{G}}
\def\mS{\mathbb{S}}
\def\mP{\mathbb{P}}
\def\G{\mathcal{G}}
\def\e{\varepsilon}
\def\supp{\mathrm{Supp}}
\def\BR{\mathrm{BR}}
\def\cl{\mathrm{cl}}
\long\def\@ympar#1{%
  \@savemarbox\@marbox{\tiny #1}%
  \global\setbox\@currbox\copy\@marbox
  \@xympar}
\begin{document}
\title[Generic Robustness of Equilibria]{Robust Equilibria in Generic Extensive-form Games$^*$}
\author[Lucas Pahl]{Lucas Pahl$^\dag$}
\author[Carlos Pimienta]{Carlos Pimienta$^\ddag$}
 \date{\today}
\thanks{$^*$~We thank Srihari Govindan for drawing our attention to the conjecture we prove in this paper and for comments, and G V A Dharanan for proofreading the manuscript. We are grateful to Paulo Barelli, Joel Sobel, John Levy and Klaus Ritzberger for comments and suggestions.  We thank audiences at the Paris Game Theory Seminar, BGSE Workshop in Bonn, CES-Sorbonne, LSE-Mathematics, U. of Surrey and Royal Holloway U. of London. Lucas thanks research support from the Hausdorff Center for Mathematics (DFG project no. 390685813).
Carlos thanks research support from the Australian Research Council's Discovery Project DP190102629. }
\thanks{$^\dag$~\textsc{School of Economics, University of Sheffield, Sheffield, United Kingdom}}
\thanks{$^\ddag$~\textsc{School of Economics, The University of New South Wales, Sydney, Australia}}%

\begin{abstract}
We prove the 2-player, generic extensive-form case of the conjecture of \citet{GW1997a,GW1997b} and \citet{HH2002} stating that an equilibrium component is essential in every equivalent game if and only if the index of the component is nonzero. 
This provides an index-theoretic characterization of the concept of hyperstable components of equilibria in generic extensive-form games, first formulated by~\citet{KM1986}. We also illustrate how to compute hyperstable equilibria in multiple economically relevant examples and show how the predictions of hyperstability compare with other solution concepts. 
\end{abstract}

\maketitle

\defcitealias{GW2005}{GW}


\section{Introduction}
Hyperstable equilibria, as formulated in \citet{KM1986}, are those Nash equilibria that are robust to perturbations not only of the payoffs of the given game, but also to \textit{equivalent games} of the given game (i.e., games that are obtained by adding duplicate strategies). In other words, it is a solution concept characterized by two simple properties: (1) robustness to payoff perturbations (\textit{strategic essentiality}); (2) the principle that equivalent games should have equivalent solutions (also known as \textit{invariance}).

Hyperstable equilibria constitute an important step in Kohlberg and Mertens' stability program as the first equilibrium concept to satisfy the basic properties of existence, sequential rationality, invariance and iterated dominance, making it an appealing refinement of Nash equilibria. Even if hyperstable equilibria do not always satisfy admissibility, in generic extensive-form games hyperstable equilibrium outcomes do satisfy all those properties.\footnote{In generic extensive-form games, hyperstable equilibrium outcomes are always induced by a Mertens-stable set \citep{M1989} which, in turn, does satisfy all those properties.} However, hyperstability is a highly intractable concept. Not only does its definition require robustness to payoff perturbations in the base game - which is in itself hard to check - but also in every equivalent game to the base game.  

In this paper we provide a characterization of hyperstable equilibria that makes it a more operational concept, as we intend to illustrate in the last section with a series of examples. This characterization dispenses with the need of considering equivalent games to the base game and, in several instances, makes the problem of checking for robustness to payoff perturbations almost trivial. 

This gain in simplicity is achieved by formulating the characterization in terms of a concept of topological fixed point theory called the \textit{index of fixed points}, which we now explain. In algebraic topology, tools such as the fixed point index are developed to study and distinguish among fixed points. In particular, when studying fixed points in an abstract setting, it is also of interest to identify robustness properties of the fixed points with respect to perturbations of their fixed-point maps. A topologically essential fixed point of a map is a point such that every nearby map has a close-by fixed point. Topologically essential fixed points do not always exist,\footnote{Consider the identity map from the interval $[0,1]$ to itself. Any point in $[0,1]$ is a fixed point of the map, but if we consider uniform perturbations of this map to near-by maps, none of the points of $[0,1]$ is robust.} but considering the sets of fixed points of a given map, one can extend the definition of topological essentiality from points to \textit{connected components of fixed points}:\footnote{\citet{ON1953} shows that a minimal topologically essential fixed point set of an Euclidean space must be a connected component of the set of fixed points, thus justifying the formulation of topological essentiality to connected components of fixed points.} a connected component of the set of fixed points is topologically essential if for sufficiently small perturbations of the fixed-point map, there are fixed points close to the component. Topologically essential components can be identified by an integer number called their \textit{index} (cf. \citet{ON1953}). The process through which this number is assigned is rather involved and will be explained later in more detail, but the important fact is that a component of fixed points is topologically essential if and only if the index is non-zero. 

 In a game-theoretic context, and for some function whose fixed points coincide with the Nash equilibria of a game, an index can therefore be assigned to every component of Nash equilibria in mixed strategies, and simple properties of the index of fixed points guarantee that at least one topologically essential component of equilibria exists. The conjecture that the strategically essential components coincide with those with nonzero index \citep{GW1997a,GW1997b}, i.e. the topologically essential components, was disproved by \citet{HH2002} by means of a counterexample. They recast the conjecture and hypothesized that the hyperstable components are precisely those that have nonzero index. We show that this is indeed the case for generic two-player extensive-form games. Our notion of genericity implies the commonly used property of extensive-form games according to which each connected component of equilibria in mixed strategies induces a single outcome in an extensive-form game. Therefore, our characterization can be seen as identifying the Nash equilibrium outcomes of an extensive-form game which are hyperstable. While intuitively every topologically essential Nash equilibrium component should be hyperstable, this result implies that the space of perturbations considered in the definition of hyperstability - i.e. payoff perturbations -  is as rich as the one used by topological essentiality, i.e. the space of perturbations of a map whose fixed points are the equilibria of a game. 

A similar result was proved by \citet{GW2005}, henceforth \citetalias{GW2005}, for arbitrary games. However, to obtain such a characterization they require a strengthening of hyperstability beyond the combination of invariance and robustness to payoff perturbations. \citetalias{GW2005} call such a concept \textit{uniform hyperstability}.\footnote{~We define uniform hyperstability in Section~\ref{sec:main}.} 

As mentioned above, only information about the game under study is necessary to verify if a component is topologically essential. Indeed, \citet{S1974} shows that the index of equilibria in generic two-player games can be computed by a simple formula that only depends on parameters of the game. \citet{AB2009} generalized this formula to include non-generic two player games. Even if such formulas can be useful, for many relevant game-theoretical applications in the literature, it is enough to use the properties of the index to know if it is different from zero and, given our result, verify whether a component is hyperstable.\footnote{This is illustrated in Section \ref{sec:application}.}

From an axiomatic perspective, our result can be interpreted as providing an index theoretic characterization for the solution concept that, for each (generic two-player extensive-form) game selects all (closed and connected) subsets of Nash equilibria that satisfy (1) invariance and (2) robustness to payoff perturbations. A component of equilibria with non-zero index is known to be robust in every equivalent game, so a solution concept that selects the collection of equilibrium components with non-zero index is both invariant and robust to payoff perturbations. Conversely, if a solution concept selects all (closed and connected) sets of Nash equilibria that are invariant and robust to payoff perturbations then the selected solutions must be connected components of Nash equilibria.\footnote{~Suppose that at least one solution is a proper subset of a connected component. Then our arguments imply that an equivalent game exists in which an equivalent solution is not robust to payoff perturbations.} Furthermore, our main result implies that any zero index component is not robust in some equivalent game.  
Hence, all selected solutions are connected components with non-zero index.

In parallel to the developments in the strategic stability literature, the index of equilibria has also been used independently as a tool to select equilibria with interesting properties, especially with regards to their dynamic stability.  \cite{KR1994} formulated the index of equilibria for the replicator dynamics and presented some useful applications in game theory, including highlighting that non-zero index components of equilibria satisfy a number of desirable properties. \cite{DR2003} obtained a necessary condition for the asymptotic stability of equilibria in dynamic adjustment processes that is formulated in terms of the index of equilibria. More recently, \cite{AM2016}  argued for the selection of equilibria with index $+1$ on both experimental and theoretical grounds, while  \cite{GLP2023} provided game-theoretic characterization for these. A natural question stemming from this strand of the literature is: what does the index precisely mean in game-theoretic terms? More precisely, while one can use non-zero index equilibria as a selection criterion to guarantee interesting properties for the selected equilibria, such a criterion is purely topological and may imply properties that are game-theoretically undesirable. Our main result provides an answer to this question by formulating the precise combination of game-theoretical properties that characterize non-zero index equilibria and define the classical concept of hyperstability. 

This paper is organized as follows.
Section \ref{sec:prem} presents all the required notation and definitions. 
In particular, it introduces the necessary properties of the index that will be applied throughout the paper and states the main result. 
Section \ref{sec:excluded} defines one the main objects of the analysis---the \textit{excluded game} associated to an equilibrium outcome. 
Section \ref{SEC:GenAssump} lays down the precise genericity assumptions under which our main result holds. 
We prove our main result in Section \ref{sec:proof}.
Such a proof is divided in three steps and is accompanied by a running example to illustrate the main ideas. 
To finalize, Section \ref{sec:application}, contains an illustration on how to compute the index in many economically relevant examples, allowing to identify non-hyperstable equilibria. 
The Appendix contains additional definitions and proofs that do not appear in the main text.

\subsection{An example}\label{Sec:Example}
\begin{figure}[t!]
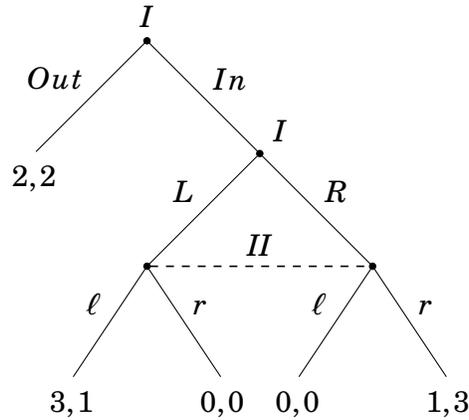

\caption{Entry game}\label{Entry}
\begin{istgame}
\centering

\setistmathTF{1}{1}{1}
\xtdistance{15mm}{30mm}
\istroot(0){I}
\istb{Out}[al]{2,2}
\istb{In}[ar]
\endist

\istroot(1)(0-2)<above right>{I}
\istb{L}[al]
\istb{R}[ar]
\endist

\xtdistance{15mm}{20mm}
\istroot(2)(1-1)
\istb{\ell}[al]{3,1}
\istb{r}[ar]{0,0}
\endist

\istroot(3)(1-2)
\istb{\ell}[al]{0,0}
\istb{r}[ar]{1,3}
\endist
\xtInfoset[dashed](2)(3){$\mathit{II}$}

\end{istgame}

\end{figure}

To illustrate the main ideas of the result and the proof, consider the entry game in Figure~\ref{Entry}.
To analyze the example, we use two simple facts about the index of equilibria.
The first one is that the index of a strict equilibrium is +1.
The second is that the sum of the indexes is always +1.
The game in Figure~\ref{Entry} has two components of Nash equilibria.
In the first component, player 1 moves $\mathit{In}$ and then both players play $(L,\ell)$ in the subgame.
Hence, this component is made of a single strategy profile that, furthermore, is a strict equilibrium as any deviation by either player leaves such player with a strictly lower payoff.
(Note as well that $(L,\ell)$ is also a strict equilibrium in the subgame.)
Since the index of a strict equilibrium is +1 and indexes must add up to one, the second component must have zero index.
This second component is such that player 1 moves $\mathit{Out}$ with probability one and player 2 plays $r$ in the subgame with probability at least 1/3.
We call the set of strategy profiles in the subgame such that player 2 plays in this way the \textit{supporting polytope} of the component because it supports on-path play, i.e., at any point in this component player 2 cannot affect the outcome and player 1 is at most indifferent between $\mathit{Out}$ and $\mathit{L}$ when player 2 plays $r$ with probability exactly equal to $1/3$.

We can think of the ``\textit{Out}''-component as \textit{excluding} the subgame, and of the supporting polytope as containing behavior in the subgame that supports that exclusion.
The supporting polytope contains two Nash equilibria of the subgame, $(R,r)$ and the mixed strategy profile $(\frac{3}{4}L+\frac{1}{4}R,\frac{1}{4}\ell+\frac{3}{4}r)$.
When considering the subgame in isolation, the former equilibrium has index +1 because it is strict, and the latter has index $-1$ because the sum of the indexes in the subgame must be +1 and the other two equilibria are strict ($(R,r)$ as just mentioned, and $(L,\ell)$).
If we extend the notion of index of a component of equilibria to the index of a (suitably chosen) neighborhood by adding the indexes of the components included in the neighborhood, we note that the index of any sufficiently small neighborhood of the supporting polytope must be zero.

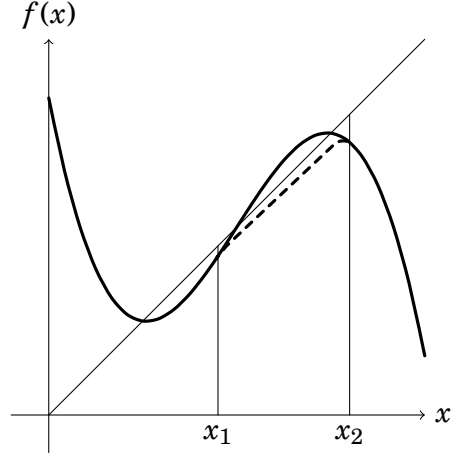
\begin{figure}[t]
\caption{Modifying function $f$ to eliminate two fixed points with indexes adding up to zero.}
\label{fig:perturbation}
\begin{tikzpicture}[domain=0:1,scale=5,line cap=round]
\tikzset{declare function={g(\x)=(-9*(\x-0.25)*(\x-0.5)*(\x-0.75)+\x);}}

\def\a{0.45}
\def\b{0.80}

\draw[->] (-0.1,0) -- (1,0) node[right] {$x$};
\draw[->] (0,-0.1) -- (0,1) node[above] {$f(x)$};
\draw[] plot (\x,\x) node[right] {};
\draw[line cap=round, color=black,very thick] plot[smooth] (\x,{g(\x)}) node[right] {};

\draw[line cap=round, color=black, very thick, rounded corners, dashed] (\a,{g(\a)}) -- (\a+0.01,{g(\a+0.01)}) -- (0.78,0.74) -- (\b,{g(\b)});
\draw[-] (\a,\a) -- (\a,0) node[below] {$x_1$};
\draw[-] (\b,\b) -- (\b,0) node[below] {$x_2$};
\end{tikzpicture}
\end{figure}

Given a game, we show in Section~\ref{sec:perturbation} that if the index of a neighborhood is zero, then there exists a payoff perturbation of an equivalent game with no equilibrium inside the neighborhood and such that, payoffs \textit{outside} such a neighborhood are as close as we want to those in the original game.
Figure~\ref{fig:perturbation} illustrates.
The interval $(x_1,x_2)$ contains two fixed points in which the function cuts the $45^\circ$ line from below (index $-1$) and then from above (index +1).%
\footnote{~In this one-dimensional case, if $x^*$ is a fixed point, the index of $x^*$ is $\mathrm{Ind}_f(x^*)=\left.\mathrm{sign}\left(\frac{\partial(\mathit{Id}-f)(x)}{\partial x}\right\vert_{x=x^*}\right)$}
We can pull the graph of $f$ so that the new function whose graph coincides with the dashed line when $x\in(x_1,x_2)$ has no fixed point in $(x_1,x_2)$.
Thus, if $G$ is the subgame in Figure~\ref{Entry}, there exists an equivalent game $\bar{G}$ (i.e., a game which is obtained by adding finitely many pure strategies that are duplicates of mixed strategies of $G$) and a perturbation of it $\bar{G}^\delta$ with no equilibria in the (subset of the strategy profile equivalent to the) supporting polytope and such that payoffs outside that polytope are at least $\delta$-close to those in $\bar{G}$. The existence of $\bar{G}^\delta$ relies on the techniques developed in \cite{GW2005}, but it is not an exact application of their result, as our Appendix \ref{sec:proofLemmaExcluded} demonstrates.

With this information, consider the extensive-form game in Figure~\ref{fig:equiv}.
Nature moves first.
With probability $1-\varepsilon$ players play the original entry game but with the subgame $G$ replaced by the equivalent subgame $\bar{G}$.
With probability $\varepsilon$, players play the perturbed game $\bar{G}^\delta$ that does not have an equilibrium in the supporting polytope.
Player 1 observes the move of Nature, but player 2 does not know if she is playing because Nature chose the perturbed game $\bar{G}^\delta$ or because player 1 chose $\mathit{In}$.
This is captured by player 2's information set connecting nodes representing games $\bar{G}$ and $\bar{G}^\delta$.
Note that when $\varepsilon=0$ this extensive-form game is equivalent to the game in Figure~\ref{Entry}.\footnote{More precisely, the game with $\e = 0$ is equivalent to the original game in the sense that both have the same reduced normal forms.}
It can be proved that when $\varepsilon$ is small enough this equivalent game does not have any Nash equilibrium component in which player 1 plays $\mathit{Out}$.
Intuitively, if there were such outcome, player 2 would have to be playing in the (set equivalent to the) supporting polytope - so that 
$\mathit{Out}$ is a best reply for player 1. Knowing that, in such an equilibrium, if player 2 is called to move, then she knows it is because Nature chose $\bar{G}^\delta$. But such a game does not have an equilibrium in the supporting polytope.
We can similarly prove that there is no sequence of equilibria in which player 1's strategy converges to playing $\mathit{Out}$ with probability one as $\varepsilon$ converges to zero.
Hence, when $\varepsilon$ is small enough, Figure~\ref{fig:equiv} gives us a perturbation of a game equivalent to the entry game of Figure~\ref{Entry} with no equilibrium close to the zero-index ``\textit{Out}''-component.

In the rest of the paper we show that under some generic conditions every key step in this example generalizes. 
In particular, given a two-player game and a component that induces a unique outcome, we can always define an ``excluded game'' (which is not necessarily a subgame) whose supporting polytope (capturing behavior in the excluded game that prevents it from being reached) has zero index in such an excluded game.
And with that, we can construct a game equivalent to the original and a perturbation of its payoffs analogous to Figure~\ref{fig:equiv} with the property that the resulting game has no equilibrium close to the zero index component.

\begin{figure}
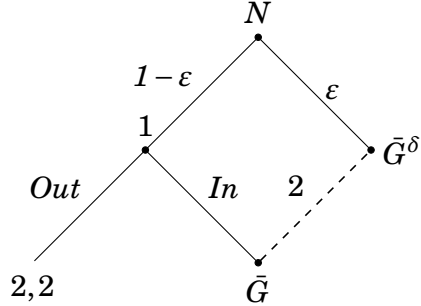

\caption{A game equivalent to that in Figure~\ref{Entry} when $\varepsilon=0$}
\label{fig:equiv}
\begin{istgame}
\xtdistance{15mm}{30mm}
\setistmathTF{1}{1}{1}
\istroot(n)(0,0){N} 
\istb{\mathit{1-\varepsilon}}[al] 
\istbA*<level distance=\xtlevdist>{\mathit{\varepsilon}}[r]{\bar{G}^{\delta}}[r]
\endist 

\istroot(1)(n-1){1}
\istb{\mathit{Out}}[al]{2,2} 
\istb*{\mathit{In}}[ar]{\bar{G}}[b] 
\endist

\xtInfoset[dashed](1-2)(n-2){2}[al]
\end{istgame}
\end{figure}

\section{Preliminaries}\label{sec:prem}

\subsection{Equivalent strategies, equivalent normal-form games} 
A \textit{finite two-player normal-form game} $\mathbb{G}\equiv(S_1,S_2,\mathbb{G}_1,\mathbb{G}_2)$ is a four-tuple where, for each $n=1,2$, player $n$'s finite set of pure strategies is $S_n$ and $\mathbb{G}_n: S_1\times S_2 \to \Re$ is player $n$'s payoff function.
As usual, $\S_n\equiv\D(S_n)$ is player $n$'s set of mixed strategies and $S\equiv S_1\times S_2$ and $\S\equiv\S_1\times\S_2$ are, respectively, the sets of pure and mixed strategy profiles. 
We also denote by $\mathbb{G}_n$ the multilinear extension of player $n$'s payoff function to the set of mixed strategy profiles $\S$.

Two strategies $\s_n, \s'_n \in \S_n$ are \textit{equivalent} if for both $m=1,2$ we have $\mG_m(\s_n, s_{-n}) = \mG_m(\s'_n, s_{-n})$ for all $s_{-n}\in S_{-n}$.
Given $\mG = (S_1,S_2, \mG_1, \mG_2)$, the \textit{reduced normal-form} of $\mG$ is a normal-form game $\mG' = (S'_1,S'_2,\mG_1,\mG_2)$ that is obtained from $\mG$ by eliminating pure strategies that are equivalent to some existing mixed strategy.
That is, $S'_n \subseteq S_n$ and if $s_n \in S_n \setminus S'_n$ then there is $\s_n \in \S'_n\equiv\Delta(S'_n)$ that is equivalent to $s_n$. Note that, up to relabeling of strategies, the reduced normal-form of a finite game is unique.

\begin{definition} 
Two games $\bar{\mathbb{G}}$ and $\mathbb{G}$ are \textit{equivalent} if they have the same reduced normal-form.
\end{definition}

Given two equivalent games $\mathbb{G}=(S_1,S_2, \mG_1, \mG_2)$ and $\bar{\mathbb{G}}=(\bar{S}_1,\bar{S}_2, \bar{\mG}_1, \bar{\mG}_2)$ we extend the notion of equivalence between strategy profiles in the same game to equivalence of strategy profiles in equivalent games.
Say that $\bar \s_n \in \bar \S_n$ and $\s_n \in \S_n$ are \textit{equivalent} if there exists a strategy $\s'_n$ in their (common) reduced normal-form game such that $\bar\s_n$ is equivalent to $\s'_n$ (both viewed as strategy profiles in $\bar \mG$) and $\s_n$ is equivalent to $\s'_n$ (both viewed as strategy profiles in $\mG$). 
If $\mG$ and $\bar{\mG}$ are two equivalent games we say profile $\s \in \S$ is equivalent to profile $\bar{\s} \in \bar{\S}$ if for each player $n$, the strategy $\s_n$ is equivalent to $\bar{\s}_n$.  
A subset $T\subset\S$ is \textit{equivalent} to $\bar{T}\subset\bar{\S}$, if for each $\s \in T$ there exists $\bar{\s}\in\bar{T}$ such that $\s$ is equivalent to $\bar{\s}$ and for each $\bar{\s}\in\bar{T}$ there exists $\s \in T$ such that $\bar{\s}$ is equivalent to  $\s$.

\subsection{Extensive-form games}\label{extensiveformgames}

We introduce notation and basic definitions.
For a formal definition of extensive-form game with perfect recall see, e.g., \citet{OR1994}.
Consider a  \textit{two-player finite game tree with perfect recall} $\Gamma \equiv  (T,\prec,U,A,\rho)$.
The set of nodes is $T$ and $\prec$ denotes the precedence relation in the tree. 
The set of terminal nodes is $Z \subset T$.
The collection $U$ is a partition of $T \setminus Z$ into information sets of players and Nature.
The set $U_n \subset U$ is the collection of information sets for player $n=0,1,2$, where player 0 represents Nature. 
(Every element in $U_0$ is a singleton.)
The set of all actions in the game is $A$ and $A_n(u)$ represents player $n$'s set of actions available at her information set $u \in U_n$.
Let $A_n\equiv\bigcup_{u\in U_n}A_n(u)$ be the entire set of player $n$'s actions.
Moves of Nature are given by the function $\rho$ that, to every $u\in U_0$, assigns a completely mixed probability distribution on $A_0(u)$. 

For $n=1,2$, player $n$'s set of pure strategies is $S_n\equiv\big\{s : U_n \rightarrow A_n \mid s(u) \in A_n(u) \big\}$ and her mixed strategy set is $\Sigma_n \equiv \Delta(S_n)$. 
Hence, we have the sets of strategy profiles $S\equiv S_1\times S_2$ and $\S=\S_1\times\S_2$.
Since we only consider games with perfect recall, Kuhn's theorem implies that we can work with behavior strategies whenever convenient. Given player $n=1,2$, a behavior strategy $b_n = (b_n(i))_{i \in A_n}$ satisfies $(b_n(i))_{i\in A_n(u)}\in\Delta(A_n(u))$ for every $u\in U_n$. 
We let $B_n$ be player $n$'s set of behavioral strategies.

For a fixed two-player game tree $\Gamma$, the space of games is $\mathcal{G} \equiv \mathbb{R}^{2\vert Z\vert}$.
A game $G \in \mathcal{G}$ assigns payoff $G_n(z)$ to player $n$ at final node $z$. We refer to the extensive-form game defined by $\Gamma$ and terminal payoffs $G$ simply by $G$.
The space of outcomes is $\Delta(Z)$, where an outcome $Q\in\Delta(Z)$ assigns probability $Q(z)$ to $z$. 

Given action $a\in A$ the set of terminal nodes that come after action $a$ is $Z(a)$.
Similarly, given an information set $u\in U$ we let $Z(u)$ be the set of terminal nodes that come after some action available at $u$.
We denote expressions $Q(Z(a))$ and $Q(Z(u))$ simply as $Q(a)$ and $Q(u)$.  
Given a profile of mixed strategies $\s\in\S$ the induced outcome when players play according to $\s$ is $\mP(\cdot \mid \s) \in \D(Z)$.
With abuse of notation we also let $G:\Delta(Z)\to\mathbb{R}$ denote the expected utility function associated with game $G\in\mathcal{G}$.
The normal-form representation of $G$ is $\mG=(S_1,S_2,\mG_1,\mG_2)$ where, for each $n=1,2$, we have $\mG_n(s)=G_n(\mathbb{P}(\cdot\mid s))$ for all $s\in S$.

\subsection{Index Theory}\label{indextheory}
The fixed point index contains information about the robustness of fixed points of a map when such a map is perturbed to a nearby map. 
Since Nash equilibria are fixed points, we can apply index theory to them (cf. \citealp{KR1994}).
The classical introduction to index theory usually requires some concepts from algebraic topology. 
For the purposes of this paper, this can be bypassed without much hindrance. 
Results in this section can be found in \citet[pp 245-265]{AM2018} and \citet{LP2023}.

Let $\mG$ be a normal-form game with mixed strategy $\S$. Given a neighborhood $O$ of $\S$ suppose $f: O \to \S$ is a differentiable map. Let $d_{f}$ be the displacement of $f$, i.e., $d_{f}(\s) = \s - f(\s)$. 
Then the fixed points of $f$ are the zeros of $d_{f}$. Suppose now that the Jacobian of $d_f$ at a zero $\s$ of $f$ is nonsingular. We assign the index $+1$ to $\s$ if the determinant of the Jacobian of $d_{f}$ is positive or $-1$ if such a determinant is negative. For the next definition, given a subset $\mathcal{O}$ of $\S$, we denote by $\text{cl}_{\S}(\mathcal{O})$ the closure of $\mathcal{O}$ with respect to $\S$. %

\begin{definition}
An open neighborhood $\mathcal{O} \subset \S$ of a component of equilibria (in mixed strategies) $K$ of normal-form game $\mG$ is \textit{admissible} if every equilibrium of $\mG$ in $\text{cl}_{\S}(\mathcal{O})$ belongs to $K$. 
\end{definition}

When a finite game $\mG$ has a component of equilibria $K$ that consists of more than a single strategy profile, we extend the definition of the index as follows.
Take a continuous map $g_{\mathbb{G}}: \S \to \S$ such that the fixed points of $g_{\mathbb{G}}$ are the Nash equilibria of game $\mG$ and $g_{\mathbb{G}}$ continuously depends on the payoffs of $\mathbb{G}$.
An example of such a map is the map~\cite{JN1951} constructed to prove existence of equilibria in mixed strategies. 
Consider a neighborhood $O$ of $\S$ and $r: O \to \S$ a retraction to the closest point in $\S$. 
Let $\mathcal{O} \subset O$ be an open neighborhood of $K$ in the affine space generated by $\S$, whose closure contains no other fixed point of $g$. 
We approximate $(g_{\mathbb{G}} \circ r)$ uniformly by a differentiable function $f: O \to \S$ without fixed points on the boundary of $\mathcal{O}$ and such that the displacement of $f$ at any fixed point has nonsingular Jacobian. For any sufficiently close approximation, the sum of the indexes of the (isolated) fixed points of $f$ in $\mathcal{O}$ is constant and can be taken as the definition of the index of the component $K$.\footnote{Under the continuous dependence of the payoffs of $\mathbb{G}$, this definition of index is independent of the particular map $g_\mathbb{G}$ used, it only depends on the game $\mG$ (cf. \cite{DG2000}).}

We can now define the index with respect to the best-reply correspondence of game $\mG$ following the same procedure as in \citetalias{GW2005}. Consider now $\mathcal{O} \cap \S$. For notational convenience, we will denote this intersection from now on as $\mathcal{O}$. 
Let $W$ be an open neighborhood of $\mathrm{Graph}(\BR^{\mG})$ such that $W \cap \big\{(\sigma, \sigma) \in \S \times \S \mid \sigma \in \text{cl}_{\S}(\mathcal{O}) - \mathcal{O}\big\} = \emptyset$. 
There exists $\tilde W \subset W$ a neighborhood of $\text{Graph}(\BR^{\mG})$ such that any two continuous functions $f_0$ and $f_1$ from $\S$ to $\S$ whose graphs are in the neighborhood $\tilde W$ are homotopic by a homotopy $H: [0,1] \times \S \rightarrow \S$ with $\text{Graph}(H(t,\cdot)) \subset \tilde W$ for all $t \in [0,1]$ (cf. \citealp{AM1989}).
Take a continuous map $f : \S \rightarrow \S$ with $\text{Graph}(f) \subset \tilde W$. 
We define the \textit{best-reply index} of component $K$, denoted $\text{Ind}_{\BR^{\mG}}(K)$, as the fixed point index of the continuous map $f\vert_{\mathcal{O}}: \mathcal{O} \rightarrow  \S$. 
The choice of the neighborhood $W$ and the homotopy property of the index (see \citealp{D1972}, Chapter VII, 5.15) imply that the index of the component is the same for any continuous map with graph in the neighborhood $\tilde W$. 
We note that defining the index of a component from, say, the map that Nash used in \citep{JN1951} or from the best-reply correspondence are two distinct ways of defining the index, which can be shown to be equivalent (cf. \citetalias{GW2005}).
In addition, this process to define the index applies with insignificant changes if the correspondence between the simplices of strategies is contractible valued instead of convex valued (cf. \citealp{AM1989}).%
\footnote{~A topological space $X$ is contractible if there exists a continuous map $T: [0,1] \times X  \to X$ and $x_0 \in X$ such that $T(0,\cdot) =\mathrm{id}_{X}$ and  $T(1, \cdot) = x_0$.} 
This fact will play an important role in Section~\ref{sec:step1}, as we need to consider selections of a best-reply correspondence which are not necessarily convex-valued, but are contractible-valued.

One can generalize the definition of the best-reply index of a component of equilibria to the \textit{best-reply index of an admissible neighborhood}, by using the exact same procedure as in the previous paragraph. This yields the index of a neighborhood by summing the indexes of the components of equilibria which are contained in it. For convenience, whenever we refer to the index of a component or an admissible neighborhood, it will be implicit that we refer to the best-reply index. 

We are now ready to recall a few known properties of the index of equilibria which we will use in the proof of our main result. The proofs that the index satisfies such properties can be found in \cite{AM2018} or \citetalias{GW2005}.

\begin{enumerate}[start=1,label={\bfseries I.\arabic*}]

\item\label{I1}\textit{The index of an admissible neighborhood is locally constant with respect to payoff perturbations of the game}. 
Formally, fix an admissible neighborhood $O$ in the mixed strategy set of a finite game $\mG$. 
Then there exist $\bar\d>0$, such that for any $0 \leq \d \leq \bar \d$ and a $\d$-payoff-perturbation $\mG^\d$ of game $\mG$, the index of $O$ (with respect to $\mG^\d$) is constant.

\item\label{I2} \textit{The index of an equilibrium component is invariant to equivalent presentations of a game}. 
If $K$ is a component of equilibria of $\mG$ with index $c$, then for any equivalent game $\bar \mG$, the index of the equivalent component $\bar K$ is also $c$.


\item\label{I3} \textit{The index of a component is invariant to the deletion of strictly inferior replies to the component}. 
If $K$ is a component of equilibria with index $c$ of game $\mG$, then deleting from the normal-form of game $\mG$ the pure strategies of player $n$ which are strictly inferior replies to every profile in the component $K$ yields a new game $\mG'$ with the same component $K$ as an equilibrium component in $\mG'$ and with the same index $c$. 
\end{enumerate} 

We also need three well-known properties of the index. 
For our purposes, we particularize their statements as follows: 
\begin{enumerate}[start=4,label={\bfseries I.\arabic*}]
\item\label{I4} \textit{Multiplication}: 
Let $\BR^{\mG}: \S \rightrightarrows \S$ and $\BR^{\mG'}: \S' \rightrightarrows \S'$ be the best-reply correspondences of (respectively) games $\mG$ and $\mG'$.
Let $\BR^{\mG} \times \BR^{\mG'}$ be the correspondence taking $(\s,\t) \in \S \times \S'$ to $\BR^{\mG}(\s) \times \BR^{\mG'}(\t)$. 
If $O \times O' \subset \S \times \S'$ contains no fixed points of $\BR^{\mG} \times \BR^{\mG'}$ in its boundary, then $O$ (respectively $O'$) has no fixed points of $\BR^{\mG}$ (respectively $\BR^{\mG'}$) in its boundary, and the index of $O \times O'$ with respect to  $\BR^{\mG} \times \BR^{\mG'}$ is the multiplication of the indexes of $O$ (with respect to  $\BR^{\mG}$) and of $O'$ (with respect to  $\BR^{\mG'}$).

\item\label{I5} \textit{Commutativity}: 
Let $F: \S \rightrightarrows \S$ be an upper hemicontinuous correspondence that is  nonempty, compact and convex valued.
Let $e:\S\to\S'$ be a continuous map with left-inverse $q:\S'\to\S$.
If $X$ is a component of fixed points of $e\circ F\circ q : \S' \rightrightarrows \S'$ then $q(X)$ is a component of fixed points of $F$ and their indexes agree.

\item\label{I6} \textit{Excision}: 
Let $F: \S \rightrightarrows \S$ be an upper hemicontinuous correspondence that is  nonempty, compact and convex valued.
Suppose $\tilde{\mathcal{O}}$ and $\mathcal{O}$ are both admissible neighborhoods in $\S$ whose closures in $\S$ contain the same fixed points of $F$. Then the index of $\mathcal{O}$ and that of $\tilde{\mathcal{O}}$ with respect to $F$ are identical.

\end{enumerate}

We conclude this section with the following result.
In particular, it implies that if a component has zero index, then some ending node is reached with probability zero.

\begin{proposition}\label{nonzeroindex}
If an equilibrium outcome $Q$ induced by a component of equilibria in mixed strategies $K$ has full support, then $K$ has non-zero index.  
\end{proposition}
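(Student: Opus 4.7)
The plan is to show that $K$ has non-zero index via a homotopy argument, using the structural fact that the outcome $Q$ having full support pins down the behavior strategy to a unique completely mixed profile $b^*$.

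The starting observation is that if $Q$ has full support, then every action at every information set must be played with positive probability under any $\sigma \in K$; since Nature's moves are completely mixed, this forces the behavior strategy $b^*$ induced by any $\sigma \in K$ to be completely mixed, and $b^*$ is uniquely determined by $Q$ via its conditional action probabilities. Hence $K$ coincides with the set of mixed strategy profiles Kuhn-equivalent to $b^*$. In particular, at every $\sigma \in K$, every pure strategy $s_n$ of each player is a best reply to $\sigma_{-n}$: under the product representation of $b^*_n$ all pure strategies receive positive weight, so the standard characterization of Nash equilibria in mixed strategies forces every such $s_n$ to be a best reply; Kuhn's theorem then transfers this conclusion to every $\sigma\in K$.

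Next, I would construct a continuous homotopy of payoffs $G_t$, $t \in [0,1]$, with $G_0 = G$, engineered so that at $t = 1$ a single pure strategy profile $s^*$ (chosen from the support of some $\sigma \in K$) becomes an isolated strict equilibrium, with the other equilibria of $G$ near $K$ flowing to $s^*$ along the homotopy. Concretely, one can boost the payoffs at terminal nodes that $s^*$ reaches together with Nature's moves; since $Q$ has full support, these terminal nodes are all on-path for $b^*$, and the boost creates a strict incentive for $s^*$ at $t = 1$. An admissible neighborhood $\mathcal{O}$ of $K$ can be chosen small enough that it remains admissible along the entire homotopy (by upper hemicontinuity of the Nash correspondence in payoffs and the construction of the boost). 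The homotopy invariance of the index of an admissible neighborhood~\ref{I1} then gives $\mathrm{Ind}(K) = \mathrm{Ind}(\{s^*\}) = +1$, which is non-zero.

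The main technical obstacle is the construction of the admissible neighborhood $\mathcal{O}$ and verifying that it remains admissible throughout the homotopy: along the way, new equilibria might bifurcate into the closure of $\mathcal{O}$ or existing equilibria might flow out. To control this, one can perform the deformation in two stages: first perturb infinitesimally to a regular nearby game where the completely mixed equilibrium corresponding to $b^*$ is isolated with non-singular Jacobian (hence of index $\pm 1$ directly), and then continue to deform toward the strict-equilibrium configuration while tracking equilibria using standard semi-algebraic continuity. The full-support property of $Q$ is central here, as it ensures that the perturbations act on all information sets simultaneously and that nearby equilibria in $\mathcal{O}$ continue to induce near-full-support outcomes, which prevents pathological bifurcations.
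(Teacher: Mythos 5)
Your opening structural observations are sound: full support of $Q$, together with Nature being completely mixed, pins down a unique completely mixed behavior strategy $b^*$, and $K$ is then exactly the Kuhn-equivalence class of $b^*$, at which every pure strategy is a best reply. The problem is the homotopy argument built on top of this, and in fact its conclusion $\mathrm{Ind}(K)=+1$ is false: take the one-shot $2\times 2$ coordination game with payoffs $(1,1),(0,0),(0,0),(1,1)$ viewed as an extensive form with simultaneous moves. The completely mixed equilibrium induces a full-support outcome, yet by Shapley's formula (or just by subtracting the two strict equilibria from the total index of $+1$) its index is $-1$. Since the proposition only claims \emph{non-zero} index, the statement itself is fine, but your argument proves too much.

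The reason the homotopy cannot be salvaged is exactly the issue you flagged and then waved off: as you boost the payoffs toward making $s^*$ strict, other equilibria necessarily move into or out of any neighborhood of $K$, so $\mathcal{O}$ does not remain admissible along the path, and Property~\ref{I1} does not apply. In the coordination-game example, deforming so that, say, $(T,L)$ becomes even more strict pushes the mixed equilibrium off toward the opposite corner; there is no small neighborhood of the mixed equilibrium that stays free of boundary fixed points throughout. ``Tracking equilibria by semi-algebraic continuity'' does not rescue this: semi-algebraic continuity is precisely what tells you the tracked equilibrium leaves $\mathcal{O}$, at which point the local index changes. The paper's actual proof avoids any deformation of payoffs. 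It passes to the enabling form of the game, where the fact that $Q$ has full support makes $q^e(K)=p$ a single isolated fixed point in the interior of the enabling polytope; then it invokes the two-player-specific fact (Shapley~1974, transferred via the normal-form/polytope-form index equivalence in~\citealp{LP2023}) that an isolated completely mixed equilibrium of a bimatrix game has a nonsingular displacement Jacobian and hence index $\pm 1$. Note that your argument, had it worked, would not have used the two-player hypothesis at all, which is another signal that something must be off: the paper's reliance on Shapley's nonsingularity result is precisely where two players enter.
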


\begin{proof}
See Appendix~\ref{app:nonzeroindex}.
\end{proof}

\subsection{Main Result}\label{sec:main}

We recall the definition of Hyperstability. 

\begin{definition}
A component $K$ of equilibria in mixed strategies of a normal-form game $\mG$ is \textit{hyperstable} if for each equivalent game $\bar \mG$ and every $\e>0$, there exists $\d>0$ such that any $\d$-payoff-perturbation $\bar \mG^{\d}$ of $\bar \mG$ has an equilibrium which is $\e$-close to $\bar K$ (the equivalent component in $\bar \mG$ to $K$). 
\end{definition}

Hyperstability is a property first formulated in \cite{KM1986} to refine Nash equilibria.
It implies several desirable properties.%
\footnote{~Hyperstable components of equilibria always contain proper (and therefore sequential) equilibria, satisfy invariance, and are robust to payoff perturbations.}
In Section~\ref{sec:proof} we prove the following result.

\begin{theorem}\label{mainthm}
Fix a two-player game tree with perfect recall.
Apart from a lower-dimensional, semi-algebraic set of payoffs, an equilibrium component is hyperstable if and only if it has non-zero index. 
\end{theorem}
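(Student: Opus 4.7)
\medskip

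\noindent\textbf{Proof plan.}

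The plan is to treat the two implications separately. For the direction \emph{non-zero index $\Rightarrow$ hyperstable}, the argument is immediate from the properties of the index listed in Section~\ref{indextheory}. Given a component $K$ with index $c\neq 0$ and an equivalent game $\bar{\mG}$, property~\ref{I2} gives that the equivalent component $\bar K$ also has index $c$. Picking any admissible neighborhood $\mathcal{O}$ of $\bar K$, property~\ref{I1} shows that for all sufficiently small $\delta$, every $\delta$-payoff-perturbation $\bar\mG^\delta$ has its index on $\mathcal{O}$ equal to $c\neq 0$, so $\bar\mG^\delta$ must have an equilibrium in $\mathcal{O}$. Shrinking $\mathcal{O}$ around $\bar K$ yields hyperstability. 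No genericity is needed here.

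For the converse \emph{zero index $\Rightarrow$ not hyperstable}, I will follow the pattern of the Entry game example in Section~\ref{Sec:Example}, executed under the genericity hypothesis of Section~\ref{SEC:GenAssump}. Let $K$ be a component of index $0$. By genericity I may assume $K$ induces a single outcome $Q$, and by Proposition~\ref{nonzeroindex} this $Q$ does not have full support. Step~A: associate to $K$ the \emph{excluded game} from Section~\ref{sec:excluded}, namely the subgame-like object capturing play at information sets reached with probability zero under $Q$, together with the \emph{supporting polytope} $P$ of behavior that makes the exclusion of these information sets optimal on-path. Step~B: show that the supporting polytope has index zero \emph{as a set of fixed points of the excluded game's best-reply correspondence}. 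This is the key index-theoretic reduction: using the multiplication property~\ref{I4} to split the index of $K$ into the contribution on-path and the contribution on the supporting polytope, the commutativity property~\ref{I5} to move between the excluded game and its embedding into the full game (via the linear map that plugs the polytope into the on-path part), and the excision/deletion properties~\ref{I6} and~\ref{I3} to discard strictly inferior replies, one concludes that the index of $P$ in the excluded game equals the index of $K$ in the original game, which is $0$.

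Step~C: because $P$ has zero index in the excluded game, there exists an equivalent game $\bar G$ to the excluded game and an arbitrarily small payoff perturbation $\bar G^\delta$ with \emph{no} equilibrium in the region equivalent to $P$; this is the multidimensional analogue of the picture in Figure~\ref{fig:perturbation}, and it is what Section~\ref{sec:perturbation} is designed to establish using that the local degree of the displacement of the best-reply map on $P$ vanishes. Step~D: build the equivalent extensive-form game analogous to Figure~\ref{fig:equiv}: prepend a move of Nature choosing with probability $1-\varepsilon$ the original game (with the excluded subgame replaced by $\bar G$) and with probability $\varepsilon$ the perturbation $\bar G^\delta$, where player~1 observes Nature but player~2 cannot distinguish the $\bar G^\delta$-node from the $\bar G$-node reached via $\mathit{In}$. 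At $\varepsilon=0$ this is equivalent to the original game. Step~E: show that for all sufficiently small $\varepsilon>0$ there is no Nash equilibrium of the perturbed equivalent game close to $K$. The argument is the one in the example: such an equilibrium would force player~2 to play in (a region equivalent to) the supporting polytope, and Bayes updating on her information set would then require the $\bar G^\delta$-copy of the excluded game to possess an equilibrium with player~2's marginal in $P$, which is precisely what $\bar G^\delta$ was constructed to preclude. An ancillary continuity/compactness argument rules out sequences of equilibria of the $\varepsilon$-games whose player-1 strategies converge to the off-path prescription of $K$.

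The main obstacle I expect is Step~B, the bookkeeping that identifies the index of $K$ with the index of the supporting polytope $P$ in the excluded game. This is where the full force of~\ref{I2}--\ref{I6} and the contractible-valued (rather than merely convex-valued) version of the index theory alluded to at the end of Section~\ref{indextheory} must be combined, and this is exactly where the semi-algebraic, lower-dimensional bad set of payoffs in the statement of Theorem~\ref{mainthm} enters: the genericity conditions are needed to guarantee a unique outcome on $K$, a well-defined excluded game with a regular supporting polytope, and enough transversality that the commutativity/excision reductions are legitimate. Step~C also hides nontrivial content (one must produce the perturbation within the space of games equivalent to the excluded game), but once the zero-index conclusion of Step~B is in hand this is the same phenomenon that~\citetalias{GW2005} exploit in their treatment of uniform hyperstability, adapted to the smaller equivalence class available here.
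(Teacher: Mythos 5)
Your outline follows the same three-step architecture as the paper (forward direction from~\ref{I1}--\ref{I2}; then, for a zero-index component, (i) transfer the zero-index conclusion to the excluded game, (ii) perturb the excluded game, (iii) embed the perturbation into an extensive-form equivalent of the original game). However, two points in your Step~B and Step~C are stated in a way that would not survive being made precise, and they are exactly where the paper's technical content lies.

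First, in Step~B you assert that \emph{the index of the supporting polytope in the excluded game equals the index of $K$}. This is not the paper's claim and is generally false. What the paper proves (via the auxiliary game $G^{\e_1,\e_2}$, a contractible-valued selection of the best-reply correspondence, and~\ref{I4}--\ref{I6}) is a \emph{three-factor} product decomposition
\[
\mathrm{Ind}_{\mG}(K)=\mathrm{Ind}_{\mG^1}(K^1)\times\mathrm{Ind}_{\mG^{\otimes}}(K^{\otimes})\times\mathrm{Ind}_{\mG^2}(K^2),
\]
one factor for each player's excluded game and one for the included game. You collapse this into a single ``supporting polytope'' factor. The crucial additional ingredient that you omit is Proposition~\ref{nonzeroindex}: the included game has a full-support equilibrium outcome, so $\mathrm{Ind}_{\mG^{\otimes}}(K^{\otimes})\neq 0$, and therefore a zero index of $K$ forces $\mathrm{Ind}_{\mG^n}(K^n)=0$ for \emph{some} (not a predetermined) $n$. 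Without this you cannot identify which player's excluded game to perturb, and ``equals'' should be ``one factor vanishes.''

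Second, in Step~C you say there is ``an arbitrarily small payoff perturbation $\bar G^{\delta}$'' of (an equivalent of) the excluded game with no equilibrium in the region equivalent to the supporting polytope. The paper explicitly flags that this is not available here: because the supporting polytope may contain several equilibrium components of the excluded game (not a single component), the construction of the perturbation via the Hopf extension on $K^{1,2\eta}$ is \emph{not} small-perturbation on the interior of that polytope --- only outside it. Lemma~\ref{lm:excluded}(3) makes the inequality on $K^{1,2\eta}$ non-trivial, and the accompanying footnote calls this ``a critical difference'' from \citetalias{GW2005}. What rescues the argument is that this large perturbation sits on the $\varepsilon$-probability branch of the equivalent game in Step~D, so the induced perturbation of the \emph{original} game is still $O(\varepsilon)$. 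If you try to run Step~C as you wrote it, asking for an $o(1)$ perturbation of the excluded game itself, the Hopf-extension construction does not deliver it, and the proof stalls.

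Both gaps are repairable along the paper's lines, and your Steps~A, D, E are in line with the paper's Steps~1 and~3, but as written the argument would not close.
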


\citetalias{GW2005} also offer a characterization of non-zero index Nash equilibrium components.
However, their characterization requires a strengthening of hyperstability.
Namely, \citetalias{GW2005} show that a component has non-zero index if and only if it is \textit{uniformly hyperstable}, that is, if for each $\e>0$, there exists $\d>0$ such that for any equivalent game $\bar \mG$ and any $\d$-perturbation $\bar \mG^{\d}$, there exists an equilibrium of $\bar \mG^{\d}$ which is within $\e$ of $\bar K$ (the equivalent component of $\bar \mG$). 
Note that $\d$ is fixed across all equivalent games, making uniform hyperstability a stronger concept than the combination of robustness to payoff perturbations and invariance. Nonetheless, our result implies that for generic two-player extensive-form games, a component is uniformly hyperstable if and only if it is hyperstable.

\section{Excluded Games}\label{sec:excluded}
The ``$\mathit{Out}$''-component in the game in Figure~\ref{Entry} has player 1 excluding the subgame.
In turn, behavior in the subgame as prescribed by such a component prevents player 1 from profiting by deviating and choosing $\mathit{In}$.
In general, observed behavior crucially depends on the robustness of unobserved behavior.
As we discussed in the Introduction, the reason the ``$\mathit{Out}$''-component has zero index is that it specifies a set of strategy profiles in the subgame that, when analyzed relative to such a subgame, has itself zero index.
In this section, we generalize this insight along with the notion of ``excluded game'' so that, given an equilibrium component, we can examine the interaction between the two players that occurs both on-path and off-path.

Take a two-player finite game tree with perfect recall $\Gamma$ and let $\mathcal{G}=\Re^{2|Z|}$ be the space of payoffs over terminal nodes. 
For payoffs $G \in \mathcal{G}$, let $\mathbb{G}$ be the induced normal form.
From now on, fix an equilibrium component $K$ that induces a unique outcome $Q \in \D(Z)$. 
Given the equilibrium component $K$, an information set $u$ is said to be \textit{on-path} if $Q(u)>0$.
Player $n$'s collection of information sets that are on-path is $U_n^+\equiv\{u_n\in U_n\mid Q(u)>0\}$.
Similarly, an information set is \textit{off-path} if $Q(u)=0$.
Player $n$'s collection of information sets that are off-path of $Q$ is $U_n^0\equiv U_n\setminus U^+_n$.

Let $Z^0$ be the set of terminal nodes which have probability $0$ under $Q$. 
We define
\begin{equation*}
S^n_n \equiv \big\{ s_n \in S_n \mid \text{ there exists } \s \in K\text{ with } \mP(Z^0 \mid s_n, \s_{-n} ) >0 \big\}
\end{equation*}
and call it player $n$'s set of \textit{observable deviations}. 
It consists of those player $n$'s pure strategies in which, at some information set on-path according to $K$, player~$n$ deviates and plays an action that has zero probability under $Q$.
As usual, $\S^n_n\equiv\D(S^n_n)$. 

Let 
\begin{equation}\label{eq:strat_pos}
S^{+}_n\equiv \Big\{s^+_n:U^+_n\to A_n\mid s^{+}_n(u)\in A_n(u)\text{ for every }u\in U^+_n\Big\}
\end{equation}
and $\S^+_n\equiv\Delta(S^+_n)$. Note that these are not pure strategies because they assign choices only to on-path (of $K$) information sets.

The subset $S^{\otimes}_n \subset S^+_n$ is the collection of assignments of actions at on-path information sets that only assign actions that are taken in the equilibrium component $K$, that is
\begin{equation*}
S^{\otimes}_n\equiv \Big\{s^{\otimes}_n\in S^+_n\mid Q(s^{\otimes}_n(u))>0\text{ for every }u\in U^+_n \Big\}, 
\end{equation*}

where the symbol $Q(s^{\otimes}_n(u))$ follows the notational convention established in section \ref{extensiveformgames}.

Define $\S^\otimes_n\equiv\Delta(S^\otimes_n)$ and $\S^\otimes\equiv \S^\otimes_1\times\S^\otimes_2$.
Analogously to \eqref{eq:strat_pos}, define also
\begin{equation}\label{eq:strat_zero}
S^0_n\equiv\Big\{s^{0}_n:U^0_n\to A_n\mid s^{0}_n(u)\in A_n(u)\text{ for every }u\in U^0_n\Big\}
\end{equation}
together with $\S^0_n\equiv\Delta(S^0_n)$.

We will now split the mixed strategies of a player between an on- and off-path part. Given some player $n$'s pure strategy $s_n\in S_n$ the restriction of $s_n$ to information sets in, respectively, $U^+_n$ and $U^0_n$ is $q_n^{+}(s_n)$ and $q_n^0(s_n)$. These define maps $q^+_n: S_n \to S^{+}_n$ and $q_n^0: S_n \to S^{0}_n$. Linear interpolation extends $q^+_n$ to a map from $\S_n$ to $\S^+_n$. Analogously, $q^0_n$ is extended to a map from $\S_n$ to $\S^0_n$. Given a mixed strategy $\s_n \in \S_n$, $q^+_n(\s_n)$ is, therefore, the marginal of $\s_n$ over $S^+_n$; and $q^0_n(\s_n)$ is the marginal of $\s_n$ over $S^0_n$. 
In particular, if $s_n \in S_n \setminus S^n_n$, then $s_n$ can be written as a pair $(s^{\otimes}_n, s^0_n) \in S^{\otimes}_n \times S^0_n$ and we let $q^{\otimes}_n(s_n) = s^{\otimes}_n$. Similarly to above, $q^{\otimes}_n: \D(S_n \setminus S^n_n) \to \S^{\otimes}_n$ is the affine map for which $q^{\otimes}_n(\s^{\otimes}_n)$ is the marginal of $\s_n \in \D(S_n \setminus S^n_n)$ over $\S^{\otimes}_n$. Let $q^+\equiv q^+_1\times q^+_2$, $q^\otimes\equiv q^\otimes_1\times q^\otimes_2$, and $q^0\equiv q^0_1\times q^0_2$.

Given any pair $(\s^+_n,\s^0_n)\in\S^+_n\times\S^0_n$ let $\s^+_n*\s^0_n$ represent the product strategy $\s_n$ that satisfies $\s_n(s^+_n,s^0_n)=\sigma^+_n(s^+_n)\cdot\sigma^0_n(s^0_n)$ for every $s_n=(s^+_n,s^0_n)$.

If $\s_n$ is a product strategy then, by definition, there is a pair $(\s^+_n, \s_n^0) \in \S^+_n \times \S^0_n$ such that $\s_n = \s^+_n*\s^0_n$.

Obviously, not every mixed strategy is a product strategy over $S^{+}_n \times S^0_n$.
Nonetheless, for any mixed strategy of a player there exists an outcome-equivalent product strategy. 
This observation is formalized below and follows directly from Kuhn's Theorem.

\begin{lemma}\label{lm:kuhn}
For every $\s_n\in\S_n$ there exists a pair $(\hat{\s}^+_{n},\hat{\s}^0_{n})\in\S^+_{n}\times\S^0_{n}$ such that, given any $\s_{-n} \in \S_{-n}$, the strategy $\hat{\s}^+_{n}*\hat{\s}^0_{n}$ induces the same outcome as $\s_n$.

\end{lemma}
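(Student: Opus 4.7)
The statement is, in effect, a repackaging of Kuhn's theorem: we want an outcome-equivalent mixed strategy that factorizes as a product of a piece over on-path information sets and a piece over off-path information sets. The natural route is to go through behavior strategies.

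First I would observe that the partition $U_n = U^+_n \sqcup U^0_n$ gives a natural bijection $S_n \leftrightarrow S^+_n \times S^0_n$ sending $s_n$ to $(q^+_n(s_n), q^0_n(s_n))$, since a pure strategy in $S_n$ is just an assignment of an action at each $u \in U_n$. This identification is what allows the product strategy $\hat{\s}^+_n * \hat{\s}^0_n$ to be viewed as an honest element of $\S_n$.

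Next, since $\Gamma$ has perfect recall, Kuhn's theorem furnishes a behavior strategy $b_n = (b_n(\cdot \mid u))_{u \in U_n} \in B_n$ that is outcome-equivalent to $\s_n$ against every $\s_{-n} \in \S_{-n}$. I would then define the two candidate product mixed strategies by independent randomization according to $b_n$ at each information set, namely
\begin{equation*}
\hat{\s}^+_n(s^+_n) := \prod_{u \in U^+_n} b_n(s^+_n(u) \mid u), \qquad \hat{\s}^0_n(s^0_n) := \prod_{u \in U^0_n} b_n(s^0_n(u) \mid u),
\end{equation*}
which are valid distributions on $S^+_n$ and $S^0_n$ respectively. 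Under the bijection above, the product $\hat{\s}^+_n * \hat{\s}^0_n \in \S_n$ assigns to each $s_n = (s^+_n, s^0_n)$ the weight $\prod_{u \in U_n} b_n(s_n(u) \mid u)$, which is exactly the mixed strategy in $\S_n$ produced from $b_n$ by the standard Kuhn construction.

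Finally, applying Kuhn's theorem in the reverse direction, $\hat{\s}^+_n * \hat{\s}^0_n$ induces the behavior strategy $b_n$ at every information set of player $n$, hence the same outcome distribution as $b_n$ against every $\s_{-n}$, and therefore the same outcome distribution as $\s_n$. There is no real obstacle: the only bookkeeping is checking the bijection $S_n \cong S^+_n \times S^0_n$ and that the displayed products sum to one, both routine.
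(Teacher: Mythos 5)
Your proposal is correct and follows exactly the route the paper has in mind: the paper gives no written proof, remarking only that the lemma ``follows directly from Kuhn's Theorem,'' which is precisely the behavior-strategy argument you spell out. The identification $S_n \cong S^+_n \times S^0_n$, passing to an outcome-equivalent behavior strategy $b_n$, and then factorizing the Kuhn mixed strategy along the partition $U_n = U^+_n \sqcup U^0_n$ is the standard and intended argument.
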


\begin{definition}[Excluded game]\label{def:excluded}
Fix $\hat\t\in K$ such that $\hat\tau_n=\hat\tau^\otimes_n*\hat\tau^0_n$ with $\hat\tau^\otimes_n\in\mathrm{int}(\S^\otimes_{n})$, where the interior is relative to the affine space generated by $\S^\otimes_{n}$.
Player $n$'s \textit{excluded game} is the normal-form game in which player $n$'s strategy set is $S_n^n$, player $-n$'s strategy set is $S^0_{-n}$ and the payoff function for player $m=1,2$ is defined for each profile $(s^n_n,s^0_{-n})\in S^n_n\times S^0_{-n}$ as
\begin{equation}\label{payoffsexcluded}
{\mG}^n_m(s^n_n, s^0_{-n}) \equiv \sum_{ s^{\otimes}_{-n} \in S^{\otimes}_{-n}}\hat{\t}^{\otimes}_{-n}({s}^{\otimes}_{-n})\mG_m(s^n_n, s^{\otimes}_{-n}, s^0_{-n}).
\end{equation}
\end{definition}
Since the equilibrium component $K$ induces a unique probability distribution $Q$, the payoff function~\eqref{payoffsexcluded} does not depend on the particular choice of $\hat\tau\in K$, on the fact that $\hat\t_n$ is a product (as $q_n^{\otimes}(\hat\t_n)$ can be used instead), or on the fact that $\hat\tau^\otimes_n$ is chosen (for convenience) to be in the interior of $\S^\otimes_{n}$.

\begin{remark}
We note two particular cases. First, when no information set of player $-n$ is off-path in component $K$ then player $n$'s excluded game is just a decision problem. Second, if $K$ is such that player $n$ does not have any observable deviation, as player 2 in the ``\textit{Out}''-component of Figure~\ref{Entry}, then we say that player $n$ does not have an excluded game.
Also note that while in Figure~\ref{Entry} player 1's excluded game coincides with the proper subgame, in general, an excluded game is not necessarily a subgame. 
\end{remark}

Let $K^n_{-n}$ be the subset of player $-n$'s mixed strategies in player $n$'s excluded game such that player $n$ obtains a payoff that is no larger than what she obtains under the equilibrium component $K$:
\begin{equation*}
K^n_{-n}\equiv\Big\{\s^0_{-n}\in \S^0_{-n} \mid\text{ for all }s^n_n \in S^n_n\text{ we have } \mG^n_n(s^n_n, \s^0_{-n}) \leq G_{n}(Q) \Big\}.
\end{equation*}
And let $\tilde\partial	K^n_{-n}$ be the subset of $K^n_{-n}$ where player $n$ has at least one deviation for which she is indifferent:
\begin{equation*}
\tilde\partial	K^n_{-n}\equiv \Big\{ \s^0_{-n} \in K^n_{-n} \mid \text{ for some }s^n_n \in S^n_n\text{ we have }\mG^n_n(s^n_n, \s^0_{-n}) = G_{n}(Q) \Big\}.
\end{equation*}
The set $\tilde{\partial}K^n_{-n}$ contains the boundary of $K^n_{-n}$ relative to $\S^0_{-n}$ but, in general, it is not equal to it. 
Player $n$'s \textit{supporting polytope} is $K^n\equiv\S^n_n \times K^n_{-n}$. 
We also define $\tilde\partial K^n\equiv\S^n_n \times\tilde\partial K^n_{-n}$. Later we show that under a generic choice of terminal payoffs for the game tree $K^{n}$ is a full dimensional neighborhood in the strategy set of the excluded without equilibria in its boundary realtive to this strategy set, thus admiting an index.

\begin{example}\label{ex:A}

In Figure~\ref{Entry}, the supporting polytope $K^1$ to the ``Out'' component is $\D(\{L,R\})\times\big\{\a\ell+(1-\a)r\mid 0\leq \a\leq2/3\big\}$.
\end{example}

\begin{remark}
 
Proposition \ref{nonzeroindex} implies that if a zero-index component of equilibria in mixed strategies $K$ induces a unique outcome then such an outcome must not have full support. 
This result does not require any genericity assumption and it is a consequence of the two-player environment. 
Furthermore, assume that the component $K$ is such that for both $n=1,2$ we have $\tilde\partial K^{n} = \emptyset$. 
Then every strategy $s^n_n \in S^n_n$ pays strictly less than $G_n(Q)$ against any equilibrium strategy of player $-n$.
Hence, we can eliminate the deviations $s^n_n$ from $\mG$ for both players leaving the index of component $K$ invariant (cf. Property \ref{I3}). 
The resulting game obtained after this elimination has an extensive-form where the outcome $Q$ induced by $K$ is completely mixed. 
Proposition \ref{nonzeroindex} then implies that $K$ has non-zero index. 
Therefore, any zero-index equilibrium component $K$ is such that $\tilde\partial K^{n}\neq \emptyset$ for some $n=1,2$.
\end{remark}

\begin{definition}[Included game]
The \textit{included game} $\mG^{\otimes}$ associated with component $K$ is the normal-form game in which player $n$'s strategy set is $S_n^{\otimes}$ and payoff functions for player $m=1,2$ are given by:
\begin{equation*}
{\mG}^\otimes_m(s^\otimes_n, s^\otimes_{-n}) \equiv \mG_m(s_n, s_{-n}), 
\end{equation*}
where $(s_n,s_{-n}) \in (q^{\otimes}_n\times q^{\otimes}_{-n})^{-1}(s^{\otimes}_{n},s^{\otimes}_{-n})$. 
\end{definition}

Note that the payoffs in $\mG^{\otimes}$ are well-defined since they do not depend on the particular choice of $(s_n, s_{-n})\in (q^{\otimes}_n\times q^{\otimes}_{-n})^{-1}(s^{\otimes}_{n},s^{\otimes}_{-n})$.

\section{Genericity Assumptions}\label{SEC:GenAssump}
In Section~\ref{sec:proof} we fix a game $G$ with game tree $\Gamma$ and a zero-index component $K$ and prove that $K$ is not hyperstable provided it satisfies the following two assumptions:

\begin{enumerate}[start=1,label={\textbf{A.\arabic*}}]

\item\label{P1} The outcome associated to $K$ is unique.
Moreover, after eliminating all branches and nodes from $\Gamma$ which have zero probability under the equilibrium outcome induced by $K$, the set $q^\otimes(K)$ is a component of equilibria of $G^{\otimes}$.

\item\label{P2} If  $K$ induces a unique outcome in which player $n \in \{1,2\}$ has an excluded game, $K^n$ is a full-dimensional polytope in $\S^n$ and every equilibrium payoff of $\mathbb{G}^n$ to player $n$ in $K^n$ is strictly lower than player $n$'s equilibrium payoff induced by $K$.

\end{enumerate}

Assumption \ref{P1} is a standard property of generic extensive-form games (see \citealp{KW1982} or \citealp{GW2001}).
For our purposes, note that the definition of excluded game relies on the component inducing a unique outcome.
The new assumption is \ref{P2}. The statement concerning the full-dimensionality of polytope $K^n$ in $\S^n$ is satisfied generically.%
\footnote{~For generically chosen terminal payoffs of a fixed game tree, any equilibrium component in mixed strategies induces a unique outcome and contains an equilibrium profile for which any observable deviation by any player is a strictly inferior reply to that profile (cf. the first paragraph of the proof of Theorem 4.2 in \citealp{GW2002}). 
This guarantees full-dimensionality of $K^{n}$.} Assumption $\ref{P2}$ as a whole tells us that $K^n$ is the closure of an admissible neighborhood in $\S^n$ and, therefore, admits an index.  It is a technical assumption in order to well-define the index of the neighborhood.

The next example shows that \ref{P1} does not imply \ref{P2}.

\begin{example}

\begin{figure}
\caption{Game G}\label{game3}
\begin{minipage}{0.50\textwidth}
\centering
\begin{istgame}[scale=0.8, every node/.style={scale=0.8}]
\xtdistance{15mm}{45mm}
\setistmathTF{1}{1}{1}
\istroot(o)(0,0){1} 
\istb{T}[al] 
\istb{B}[ar]
\endist 

\xtdistance{15mm}{30mm}

\istroot(n)(o-2){N}
\istb{\frac{1}{2}}[al]
\istb{\frac{1}{2}}[ar]
\endist

\xtdistance{15mm}{15mm}

\istroot(p21)(o-1)
\istb{\ell}[al]{1,0}
\istb{r}[ar]{0,0}
\endist

\istroot(p22)(n-1)
\istb{\ell}[al]{0,0}
\istb{r}[ar]{1,1}
\endist

\istroot(p23)(n-2){2}
\istb{a}[al]{2,2}
\istb{b}[ar]{3,0}
\endist

\xtInfoset[dashed](p21)(p22){2}[ar]
\end{istgame}
\end{minipage}
\hfill
\begin{minipage}{0.45\textwidth}
\centering
Excluded game $\mG^1$ in the ``$T$''-component:

\setlength{\extrarowheight}{.1em}
\begin{tabular}{c|c|c|}
\multicolumn{1}{c}{}&\multicolumn{1}{c}{$\ell a$}&\multicolumn{1}{c}{$\ell b$}  \\ \cline{2-3}
$B$&$1,1$&$3/2$,0 \\ \cline{2-3}
\end{tabular}

\end{minipage}
\end{figure}

Consider the game in Figure \ref{game3}.

The game has two equilibrium outcomes, each of which is associated to two different equilibrium components.
In the first one, player 1 plays $T$ and player 2 plays $\ell a$. 
In the second, player 1 plays $B$ and player $2$ plays $ra$. 
This game satisfies \ref{P1} but does not satisfy \ref{P2}.
Recall that in the excluded game $\mG^1$ of player 1 associated to the first component, the non-deviating player 2 plays, at  all information sets which are on equilibrium-path the exact equilibrium distribution - therefore, player 2 must play $\ell$ in the leftmost information set. In this excluded game, the deviating player 1 is a dummy player (she only has one strategy $S^1_1 = \{B\}$) and player 2 has two strategies (i.e., $S^1_2 = \{ \mathit{\ell a}, \mathit{\ell b}\}$).

If player 2 plays $\ell a$, the payoff in the excluded game is $(1,1)$, and if player 2 plays $\mathit{\ell b}$, the payoff is $(3/2,0)$. 
Therefore, $(B,\mathit{\ell a})$ is the obvious equilibrium of the excluded game $\mG^1$, which gives player 1 the equilibrium payoff of the component, violating \ref{P2}. 

\end{example}

The next proposition implies that restricting to games that satisfy Assumptions~\ref{P1} and \ref{P2} is a mild constraint.

\begin{definition}\label{def:genericity}
A subset $\mathcal{G}' \subseteq \mathcal{G}$ is \textit{generic} if its complement is a lower-dimensional semi-algebraic set.  
\end{definition}

\begin{proposition}\label{prop:genericity}
The subset $\mathcal{G}' \subseteq \Re^{N|Z|}$ for which properties \ref{P1} and \ref{P2} are satisfied is generic.  
\end{proposition}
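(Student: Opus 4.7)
The plan is to exhibit the complement $\mathcal{G}\setminus\mathcal{G}'$ as a finite union of lower-dimensional semi-algebraic subsets of $\mathcal{G}=\mathbb{R}^{2|Z|}$. Equilibrium components, the outcomes they induce, the payoffs $\mathbb{G}^n_m$ of the excluded games, and the defining inequalities of the polytopes $K^n$ are all first-order definable in the terminal payoffs $G$, so the Tarski--Seidenberg theorem keeps us inside the semi-algebraic category throughout; it therefore suffices to decompose the failure of \ref{P1} or \ref{P2} into finitely many support-type conditions and to show that each of them imposes at least one nontrivial polynomial equation on $G$.

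For \ref{P1}, the uniqueness of the induced outcome on every equilibrium component is the classical genericity result of \citet{KW1982}. The stronger fact recorded in the footnote to \ref{P2}---that, generically, every equilibrium component contains a profile at which every observable deviation of either player is a strictly inferior reply---follows from Theorem 4.2 of \citet{GW2002}. This fact immediately delivers (i) the second clause of \ref{P1}, because after pruning zero-probability branches from $\Gamma$, strict inferiority of observable deviations isolates $q^\otimes(K)$ as a component of equilibria of $\mathbb{G}^\otimes$, and (ii) the full-dimensionality half of \ref{P2}, since the same profile lies in the interior of $K^n$ where all defining inequalities $\mathbb{G}^n_n(s^n_n,\cdot)\leq G_n(Q)$ hold strictly.

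The remaining, and genuinely new, clause of \ref{P2} is the strict inequality: every Nash equilibrium of $\mathbb{G}^n$ must give player $n$ a payoff strictly less than $G_n(Q)$. For each player $n$ and each pair of supports $(T_n,T_{-n})$ with $T_n\subseteq S^n_n$ and $T_{-n}\subseteq S^0_{-n}$, the set of payoffs $G$ for which $\mathbb{G}^n$ admits an equilibrium with those supports \emph{and} with equilibrium payoff to player $n$ equal to $G_n(Q)$ is cut out by the polynomial best-reply indifference equations on $T_n\cup T_{-n}$, the polynomial inequalities for off-support pure strategies, and the additional equation $\mathbb{G}^n_n(\sigma^n_n,\sigma^0_{-n})=G_n(Q)$. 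The main obstacle is to verify that this additional equation is independent of the rest. This holds because $G_n(Q)$ depends only on terminal payoffs at on-path nodes, whereas $\mathbb{G}^n_n(\sigma^n_n,\sigma^0_{-n})$ necessarily involves at least one terminal node reached by the observable deviation $\sigma^n_n$---and hence off-path under $Q$---so perturbing the payoff at such a single off-path terminal node changes $\mathbb{G}^n_n(\sigma^n_n,\sigma^0_{-n})$ while leaving $G_n(Q)$ and the on-path indifference equations intact. Consequently each support-type contribution lies in a proper semi-algebraic subset of $\mathcal{G}$, and the finite union over the two players, the finitely many support pairs, and the finitely many possible equilibrium components yields the desired lower-dimensional semi-algebraic complement.
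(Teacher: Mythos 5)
Your high-level plan --- reduce to finitely many support types, invoke Tarski–Seidenberg, and show each piece of the bad set satisfies at least one nontrivial polynomial constraint --- is the right framework and matches the spirit of the paper's proof. Your handling of \ref{P1} and of the full-dimensionality clause of \ref{P2} via \citet{KW1982} and \citet{GW2002} also agrees with the paper, which cites the same sources in the footnote to \ref{P2}.

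The gap is in the argument for the new clause: that equilibrium payoffs of $\mathbb{G}^n$ inside $K^n$ are strictly below $G_n(Q)$. You claim that perturbing the payoff at a single off-path terminal node $z$ reached by $\sigma^n_n$ changes $\mathbb{G}^n_n(\sigma^n_n,\sigma^0_{-n})$ while leaving ``$G_n(Q)$ and the on-path indifference equations intact,'' and conclude independence of the equation $\mathbb{G}^n_n(\sigma^n_n,\sigma^0_{-n})=G_n(Q)$ from the rest of the system. But that perturbation does \emph{not} leave the rest of the system intact: the off-path terminal payoff $G_n(z)$ enters the best-reply indifference conditions of $\mathbb{G}^n$ itself (they are all off-path conditions), so the candidate equilibrium $(\sigma^n_n,\sigma^0_{-n})$ moves as you perturb, and the induced change in the equilibrium payoff can in principle cancel the direct effect on $\mathbb{G}^n_n(\sigma^n_n,\sigma^0_{-n})$. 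You have shown that one equation is not a consequence of the \emph{on-path} block, but not that the full system (on-path optimality, player~$1$'s deviation optimality, player~$2$'s off-path best-reply optimality, and the payoff equality) jointly cuts out a set of codimension at least one. Two further issues compound this: (i) the excluded game's payoff matrix in Definition~\ref{def:excluded} averages over $\hat\tau^\otimes_{-n}$, which itself depends on $G$ through the component $K$, so perturbing $G$ moves $\mathbb{G}^n$ through a channel your argument ignores; and (ii) the set of equations characterizing the component that determines $Q$ is not written down, yet $G_n(Q)$ enters your key equality. The paper resolves these by building one joint polynomial system in the variables $(G,(\alpha,\beta),(\mu_1,\mu_2),\mu^0_2)$ --- encoding on-path optimality, the optimality of the deviation with the same payoff $\mu_1$, and player~$2$'s off-path best reply --- and performing an explicit dimension count, using the upper-triangular structure of the Jacobian (inherited from \citealp{GW2001}) to verify the system has full rank. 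Your proposal needs an analogous joint system and rank/dimension argument; the single-node perturbation heuristic does not substitute for it.
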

\begin{proof}
It is well-known that Assumption~\ref{P1} is generic in the space of extensive-form games \citep{KW1982,GW2001}. The proof that generic two-player games satisfy Assumption~\ref{P2} is relegated to Appendix~\ref{app:genericity}.
\end{proof}

\color{black}

\section{Proof of Theorem~\ref{mainthm}}\label{sec:proof}
It is already known that a non-zero index component is hyperstable.
It can be proved by observing that equivalent components have the same index and that a component with non-zero index is robust to sufficiently small payoff perturbations.

\footnote{~See, e.g., \citetalias{GW2005}.}
Henceforth, we focus on showing that a hyperstable component must have non-zero index. 

The strategy of the proof is to show that if component $K$ has zero index then $K$ is not hyperstable. 
Thus, we need to find a game $\bar \mG$ equivalent to $\mG$ and a neighborhood $\bar V$ of $\bar K$ (the component in $\bar \mG$ equivalent to $K$) such that for any $\varepsilon>0$, there is an $\varepsilon$-payoff-perturbation $\bar \mG^{\varepsilon}$ of $\bar \mG$ with no equilibrium in $\bar V$. 
We proceed in three steps.
In Step 1, we modify the original game $\mG$ to an auxiliary game to show that for some $n=1,2$, the supporting polytope $K^{n}$ has zero index in the excluded game $\mG^n$.
In Step 2, we construct a game equivalent to $\mG^n$ and a perturbation so that the perturbed game does not have an equilibrium in the subset of the strategy profiles equivalent to $K^{n}$.
This is used in Step 3 to construct a game equivalent to the original game $\mG$ and a perturbation that shows that the original zero index component $K$ is not hyperstable. Prior to each step we present the key conceptual details about the step and overall strategy of the argument.

\subsection{Step 1: For some player \texorpdfstring{$n$}{n}, the supporting polytope has index 0 in \texorpdfstring{$\mathbb{G}^n$}{Gn}}\label{sec:step1}

From Proposition~\ref{nonzeroindex} we know that at least one player has an excluded game.
We proceed under the assumption that both players have an excluded game as the proof can be easily adapted to the case in which only one has it. Observe that $K^n$ is full-dimensional in $\S^n$ and admissible with respect to $\mG^n$ (cf. \ref{P2}). Therefore, it has a well-defined index in $\mG^n$. We also assume that for each player $n=1,2$ the supporting polytope $K^n$ contains an equilibrium of ${\mG}^n$.
If for some $n=1,2$ the excluded game ${\mG}^n$ does not have an equilibrium in $K^n$ then such a supporting polytope has index $0$ and we can move to Section~\ref{sec:perturbation}.

We now prove that there exists $n=1,2$ such that $K^n$ has index $0$ in ${\mG}^n$.
The objective is to use the multiplication property of the index (cf. \ref{I4}) to express the index of $K$ as the product of the indexes of $K^1$ and $K^2$ in their corresponding excluded games so that if the index of $K$ is zero then either the index of $K^1$ or the index of $K^2$ is zero as well.
With that in mind, we perturb the original game so that each player is forced to play an observable deviation (i.e. an element of the strategy set in their excluded games) with vanishing probability. 
We then show that the non-deviating player can effectively best-reply by separately responding to the event in which the other player is forced to play as in her excluded game, and to the event in which the other player plays according to the equilibrium path defined by the component $K$.
This defines a selection of the best-reply that, after some manipulations invoking the properties of the index, can be expressed as a product of the best replies in $\mG^{\otimes}$, $\mG^{1}$, and $\mG^{2}$.


Let us define the auxiliary extensive-form game $G^{\e_1,\e_2}$ in Figure~\ref{Auxgame2normalform}.
First, Nature randomizes between three states, $\theta_0$ with probability $1-\e_1-\e_2$, $\theta_1$ with probability $\e_1$, and $\theta_2$ with probability $\e_2$. 
For each $n=1,2$, player $n$ observes if Nature has selected $\theta_n$ but cannot distinguish between $\theta_0$ and $\theta_{-n}$.
If Nature chooses $\theta_n$ then player $n$ chooses a member of $S^n_n$.
In turn, if Nature chooses either $\theta_0$ or $\theta_{-n}$ then player $n$, without observing player $-n$'s move, chooses an element from $S_n$.
Payoffs are inherited from the original game $\mG$.

\begin{figure}[t]
\caption{Game ${G}^{\e_1, \e_2}$}\label{Auxgame2normalform}

\vspace{1em}

\begin{istgame}[scale=0.8, every node/.style={scale=0.8}]
\xtdistance{20mm}{60mm}
\istroot(n1){Nature}
\istb{P(\theta_1)=\varepsilon_1}[al]
\istb{P(\theta_0)=1-\varepsilon_1-\varepsilon_2}[fill=white]
\istb{P(\theta_2)=\varepsilon_2}[ar]
\endist

\xtdistance{20mm}{14mm}
\istroot(11)(n1-1){1}
\istb
\istb[white]{\dots}[black]
\istb[thick]
\istb
\endist
\xtActionLabel(11)(11-3){s^1_1\in S^1_1}[yshift = -10pt, fill=white]

\istroot(12)(n1-2){}
\istb
\istb[thick]
\istb[white]{\dots}[black]
\istb
\endist
\xtActionLabel(12)(12-2){s_1\in S_1}[xshift = 0pt, yshift = -10pt, fill=white]

\istroot(13)(n1-3){}
\istb
\istb[thick]
\istb[white]{\dots}[black]
\istb
\endist

\xtInfoset[dashed](12)(13){1}[a]
\xtInfoset[dashed](11-1)(12-4){2}[a]
\xtInfoset[dashed](13-1)(13-4){2}[a]

\istroot(2a)(11-1){}
\istb[white]{\dots}[black]
\endist

\istroot(2b)(11-4){}
\istb[white]{\dots}[black]
\endist

\istroot(2c)(12-1){}
\istb[white]{\dots}[black]
\endist

\istroot(2d)(12-4){}
\istb[white]{\dots}[black]
\endist

\istroot(2e)(13-1){}
\istb[white]{\dots}[black]
\endist

\istroot(2f)(13-4){}
\istb[white]{\dots}[black]
\endist

\xtdistance{20mm}{14mm}
\istroot(21)(11-3){}
\istb
\istb<level distance= 22mm>[thick]{}{\genfrac{}{}{0pt}{0}{{\mG}_1(s^1_1,s_2),}{{\mG}_2(s^1_1,s_2)}}
\istb[white]{\dots}[black]
\istb
\endist

\istroot(22)(12-2){}
\istb
\istb<level distance= 22mm>[thick]{}{\genfrac{}{}{0pt}{0}{\mG_1(s_1,s_2),}{\mG_2(s_1,s_2)}}
\istb[white]{\dots}[black]
\istb
\endist
\xtActionLabel(22)(22-2){s_2\in S_2}[yshift = -16pt, fill = white]

\istroot(23)(13-2){}
\istb
\istb[white]{\dots}[black]
\istb<level distance= 22mm>[thick]{}{\genfrac{}{}{0pt}{0}{{\mG}_1(s_1, s^2_2),}{{\mG}_2(s_1 ,s^2_2)}}
\istb
\endist
\xtActionLabel(23)(23-3){s^2_2\in S^2_2}[yshift = -14pt, xshift = 4pt, fill = white]

\end{istgame}

\end{figure}

Nature, by choosing $\theta_n$, forces player $n$ to play from the same strategy set as in her excluded game. 
Nonetheless, players have their entire strategy set available after $\theta_0$.
Note that game ${G}^{0,0}$ is equivalent to the original game $\mG$. 
We denote the component equivalent to $K$ in ${G}^{0,0}$ by $K^{0,0}$. 
Since ${G}^{0,0}$ is equivalent to $G$, component $K^{0,0}$ has also index $0$ (cf. Property \ref{I2}). 
Fix a neighborhood $\mathcal{O}^{0,0}$ of $K^{0,0}$, such that every equilibrium of ${G}^{0,0}$ in the closure of $\mathcal{O}^{0,0}$ belongs to $K^{0,0}$.
Since the index is locally constant with respect to payoff perturbations, taking $\e_1, \e_2>0$ sufficiently small implies that the neighborhood $\mathcal{O}^{0,0}$ is admissible for the game ${G}^{\e_1, \e_2}$ and that it has index~$0$.   
For notational convenience, we work with behavioral strategies. 
A behavioral strategy for player $n$ in game $G^{\e_1,\e_2}$ is a pair $(\t_n,\zeta^n_n)\in\S_n \times \S^n_n$.

\begin{lemma}\label{auxStep1}
For $\e_1, \e_2>0$ sufficiently small, every equilibrium $(\t_1, \zeta^1_1,\t_2,\zeta^2_2)\in\mathcal{O}^{0,0}$ of game $G^{\e_1,\e_2}$ satisfies $\supp(\t_n)\cap S^n_n=\varnothing$ for each $n=1,2$. In addition, any $s^n_n \in S^n_n$ is a strictly inferior reply (after $\theta_0$) to any such equilibrium. 
\end{lemma}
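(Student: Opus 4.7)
The plan is to argue by contradiction along a sequence $\varepsilon_1^k,\varepsilon_2^k\to 0$ for which equilibria $(\tau^k,\zeta^k)\in\mathcal O^{0,0}$ of $G^{\varepsilon_1^k,\varepsilon_2^k}$ violate the conclusion, and to extract a limit $(\tau^*,\zeta^*)\in K^{0,0}$ (by admissibility of $\mathcal O^{0,0}$) in which the best-reply conditions force the construction of a Nash equilibrium of the excluded game $\mathbb G^n$ whose payoff equals $G_n(Q)$, contradicting Assumption~\ref{P2}.

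For the support claim, fix $n$ and suppose some $s^n_n\in S^n_n$ lies in $\supp(\tau_n^k)$ for every $k$. Since $s^n_n$ is a best reply at player $n$'s information set $\{\theta_0,\theta_{-n}\}$ and the weight $\varepsilon^k_{-n}$ on the $\theta_{-n}$ term vanishes, taking limits in the best-reply equation gives $\mathbb G_n(s^n_n,\tau^*_{-n})=G_n(Q)$. Writing $\tau^{*,0}_{-n}:=q^0_{-n}(\tau^*_{-n})$ and using that both $q^+_{-n}(\tau^*_{-n})$ and $\hat\tau^{\otimes}_{-n}$ produce outcome $Q$, so that payoffs of observable deviations coincide, this translates into $\mathbb G^n_n(s^n_n,\tau^{*,0}_{-n})=G_n(Q)$; hence $\tau^{*,0}_{-n}\in\tilde\partial K^n_{-n}$ and $s^n_n$ is a best reply to $\tau^{*,0}_{-n}$ in $\mathbb G^n$. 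The $\theta_n$ best-reply condition for $\zeta^n_n$ in the sequence passes to the limit, giving that the limit $\zeta^{n,*}_n$ also best-replies to $\tau^{*,0}_{-n}$ in $\mathbb G^n$. The central step is to track Bayes' beliefs at $-n$'s off-path information sets, which in $G^{\varepsilon^k}$ are reached with positive probability both from the $\theta_0$ branch (with weight proportional to $(1-\varepsilon^k)\tau_n^k(s^n_n)$) and from the $\theta_n$ branch (with weight proportional to $\varepsilon_n^k$). Passing to a further subsequence so that $\tau_n^k(s^n_n)/\varepsilon_n^k$ converges in $[0,\infty]$, these beliefs converge to an explicit mixture $\mu\in\S^n_n$ supported on $\{s^n_n\}\cup\supp(\zeta^{n,*}_n)$, against which $\tau^{*,0}_{-n}$ best-replies in $\mathbb G^n$. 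Since $s^n_n$ and every element of $\supp(\zeta^{n,*}_n)$ yield payoff $G_n(Q)$ against $\tau^{*,0}_{-n}$, so does $\mu$, and $(\mu,\tau^{*,0}_{-n})\in K^n$ is a Nash equilibrium of $\mathbb G^n$ with payoff $G_n(Q)$, contradicting Assumption~\ref{P2}.

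For the strict inferiority claim, assume the first assertion already holds for small $\varepsilon_1,\varepsilon_2$, and suppose toward contradiction that along a subsequence some $s^n_n\in S^n_n$ satisfies $\mathbb G_n(s^n_n,\tau^k_{-n})\ge\mathbb G_n(\tau^k_n,\tau^k_{-n})$. The same limiting procedure yields $\tau^{*,0}_{-n}\in\tilde\partial K^n_{-n}$. Because for large $k$ the support claim forces $\tau_n^k$ to place no mass on $S^n_n$, $-n$'s off-path information sets in $G^{\varepsilon^k}$ are reached \emph{only} through the $\theta_n$ branch, so Bayes' beliefs there are determined entirely by $\zeta^n_n$. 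Hence $\tau^{*,0}_{-n}$ best-replies to $\zeta^{n,*}_n$ in $\mathbb G^n$, and together with the $\theta_n$ optimality of $\zeta^{n,*}_n$ we obtain a Nash equilibrium $(\zeta^{n,*}_n,\tau^{*,0}_{-n})\in K^n$ of $\mathbb G^n$ with payoff $G_n(Q)$, again contradicting Assumption~\ref{P2}. The hard part throughout is the careful identification of the limit of Bayes' beliefs at $-n$'s off-path information sets in the first step, where the contributions from the $\theta_0$ and $\theta_n$ branches can vanish at comparable rates; the subcase analysis on $\tau_n^k(s^n_n)/\varepsilon_n^k$ resolves this and uniformly produces a Nash equilibrium of $\mathbb G^n$ whose payoff contradicts Assumption~\ref{P2}.
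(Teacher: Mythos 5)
Your overall strategy matches the paper's: extract a limit equilibrium along a vanishing sequence, show that the off-path limit behavior forms a Nash equilibrium of the excluded game $\mathbb{G}^n$ whose payoff to player $n$ equals $G_n(Q)$, and contradict Assumption~\ref{P2}. The two organizational differences (treating $\e_1$ and $\e_2$ simultaneously rather than one at a time, and proving the support claim first rather than deriving it as a corollary of strict inferiority) are legitimate variants and not gaps.

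There is, however, a genuine slip in the belief-tracking step that your sketch identifies as ``the hard part.'' The posterior at player $-n$'s off-path information sets is determined by the \emph{entire} $\theta_0$-branch mass on $S^n_n$, namely $\alpha_k \equiv \tau^k_n(S^n_n)$ together with the conditional distribution $\tau^{n,k}_n$ of $\tau^k_n$ given $S^n_n$ (this is what the paper tracks via $\psi^k = \e^k_n[\e^k_n + (1-\e^k_n)\alpha_k]^{-1}$ and the limit belief $\psi\zeta^n_n + (1-\psi)\tau^n_n$). Tracking only $\tau^k_n(s^n_n)/\e^k_n$ for the single fixed strategy $s^n_n$ does not determine the posterior when $\supp(\tau^k_n)\cap S^n_n$ contains other strategies, and in that case the limit $\mu$ is \emph{not} supported on $\{s^n_n\}\cup\supp(\zeta^{n,*}_n)$. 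Since you then use $\mu$ as the strategy against which $\tau^{*,0}_{-n}$ is optimal in $\mathbb{G}^n$, the equilibrium construction as written fails. The fix is minor: decompose $\tau^k_n = (1-\alpha_k)\tau^{\times,k}_n + \alpha_k\tau^{n,k}_n$ and pass to a subsequence where $\psi^k\to\psi$; the limit belief is $\psi\zeta^{n,*}_n + (1-\psi)\tau^{n,*}_n$, and every pure strategy in its support earns exactly $G_n(Q)$ against $\tau^{*,0}_{-n}$ (all are limits of best replies at the $\{\theta_0,\theta_{-n}\}$ information set), so the rest of your argument goes through. Once corrected, the route coincides with the paper's in substance.
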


\begin{proof} 
The proof is divided in two cases. For the first, let $(\varepsilon^k_1)_{k}, (\varepsilon^k_2)_k$ be positive sequences converging to $0$ and $(\t^{k}_1, \zeta^{1,k}_1,\t^{k}_2, \zeta^{2,k}_2)_{k\in\mathbb{N}} \subseteq \mathcal{O}^{0,0}$ a sequence of equilibria of $G^{\varepsilon^k_1, \varepsilon^k_2}$.
We prove that for $k$ large enough $\supp(\tau^k_1)\cap S^1_1=\varnothing$ and that, after $\theta_0$, every member of $S^1_1$ is a strictly inferior reply to $\tau^k_2$. 
Passing to subsequences if necessary, we can assume that the sequence of strategy profiles converges to $(\tau_1,\zeta^1_1,\tau_2,\zeta^2_2)$.  
In addition, writing $\tau^k_1 = (1-\alpha_k)\tau^{\times,k}_1 + \alpha_k \tau^{1,k}_1$, with  $\alpha_k \geq 0$, $\supp(\tau^{\times,k}_1) \subseteq (S_1 \setminus S^1_1)$ and $\supp(\tau^{1,k}_1) \subseteq S^1_1$, we also assume $\psi^k \equiv \varepsilon^k_1[\varepsilon^k_1 + (1-\varepsilon^k_1)\alpha_k]^{-1} \to \psi\in [0,1]$. Observe that the support of $\t_2$ is a subset of $(S_2 \setminus S^2_2)$, since $(\tau_1,\zeta^1_1,\tau_2,\zeta^2_2) \in K^{0,0}$.

%
\begin{claim}\label{claimaux} 
Suppose Supp$(\tilde \t_2) \subseteq (S_2 \setminus S^2_2)$.  If $q_2^\otimes(\tilde{\tau}_2)$ is equivalent to $q_2^\otimes(\tau_2)$ in $\mG^{\otimes}$, then for $n=1,2$ we have
$\mG_n(\tau^{\times, k}_1,\tilde\t_2)=\mG_n(\tau^{\times, k}_1,\t_2)$.
\end{claim}
\begin{proof}[Proof of Claim \ref{claimaux}] 
Note that $(\tau^k_1,\tau^k_2)\to (\tau_1,\tau_2) \in K$ so that $\supp(\t_2) \cap S^2_2 = \emptyset$. 
Since $q_2^\otimes(\tilde{\tau}_2)$ is equivalent to $q_2^\otimes(\tau_2)$ in $\mG^{\otimes}$ both $q_2^\otimes(\tilde{\tau}_2)$ and $q_2^\otimes(\tau_2)$ give the same payoff against any player 1's strategy $\t^\otimes_1$ in $\mG^{\otimes}$. 
Given $\t^{\times,k}_1 \in \D(S_1 \setminus S^1_1)$, both $q_2^\otimes(\tilde{\tau}_2)$ and $q_2^\otimes(\tau_2)$ yield the same payoff against $q^{\otimes}_1(\t^{\times,k}_1)$. 
Therefore, $\mG_n(\t^{\times,k}_1, \tilde \t_2) = \mG^{\otimes}_n(q^{\otimes}(\tilde \t_2), q^{\otimes}_1(\t^{\times,k}_1)) = \mG^{\otimes}_n(q^{\otimes}(\t_2), q^{\otimes}_1(\t^{\times,k}_1)) = \mG_n(\t^{\times,k}_1, \t_2)$.
\end{proof}

For $k$ sufficiently large, $\supp(\tau_2)\subseteq\supp(\tau^k_2)$. 
For such a $k$, the limit $\t_2$ maximizes player 2's expected payoff at her information set following $\theta_0$ and $\theta_1$. 
For $x_2 \in \S_2$, player 2's expected payoff is
\begin{equation}\label{eq:maxx2}
[\varepsilon^k_1 + (1-\varepsilon^k_1 - \varepsilon^k_2)\alpha_k]\Big[\psi^k {\mG}_{2}(\zeta^{1,k}_1, x_2) + (1-\psi^k)\mG_2(\tau^{1,k}_1, x_2) \Big] +  \\
(1-\varepsilon^k_1 - \varepsilon^k_2)(1-\alpha_k)\Big[\mathbb{G}_2(\tau^{\times,k}_1, x_2)\Big].
\end{equation}

Let $A_2\subset\S_2$ be defined as: $\tilde \t_2 \in A_2$ if and only if $q^\otimes_2(\tilde{\t}_2)$ is equivalent to $q_2^{\otimes}(\tau_2)$ in $\mG^\otimes$. Since $\tau_2$ maximizes expression~\eqref{eq:maxx2}, Claim~\ref{claimaux} implies that $\tau_2$ also solves 
\begin{equation*} 
\max_{x_2\in A_2}{\psi^k {\mG}_{2}(\zeta^{1,k}_1, x_2) + (1-\psi^k)\mG_2(\tau^{1,k}_1, x_2)}. 
\end{equation*}
From Lemma~\ref{lm:kuhn}, player 2 has a product strategy $\tau^\otimes_2*\tau^0_2$ that is equivalent to $\tau_2$ in $\mG$. 
Since $(\t_1, \t_2) \in K$, then $\t^{\otimes}_2$ is equivalent to $\hat \t^{\otimes}_2$ in $\mG^{\otimes}$. 
By the definition of payoff functions in player 1's excluded game, $\tau^0_2$ solves
\begin{equation*} 
\max_{x^0_2\in\S_2^0}\psi^k {\mG}^1_{2}(\zeta^{1,k}_1, x^0_2) + (1-\psi^k)\mG^1_2(\tau^{1,k}_1, x^0_2).
\end{equation*}
That is, for $k$ large enough, $\t^0_2$ is a best-reply  against $\psi^k \zeta^{1,k}_1 +(1-\psi^k)\tau^{1,k}_1$ in player 1's excluded game ${\mG}^1$.
Of course, $\t^0_2$ is also a best-reply  against the limit $\psi \zeta^{1}_1 +(1-\psi)\tau^{1}_1$. 
 
In turn, $\zeta^{1,k}_1$ is a best-reply (when player 1's strategy is constrained to $\S^1_1$) against $\t^{k}_2$ in ${\mG}$ for every $k$.
By continuity, $\zeta^{1}_1$ is a constrained best-reply against $\t_2$ in $\mG$ and, therefore, also against the equivalent product strategy $\tau^\otimes_2*\tau^0_2$, which implies that $\zeta^{1}_1$ is a best-reply against $\t^0_2$ in $\mG^1$. 
Using an analogous argument, if $\a_k >0$ for all $k$, we also conclude that $\t^1_1$ is a best-reply  against $\t^0_2$ in $\mG^1$.
If we do not have $\a_k>0$ for all $k$ then either there exists a subsequence for which it holds or $\psi=1$. 
In either case, $(\t^0_2,\psi \zeta^{1}_1 +(1-\psi)\tau^{1}_1)$ is an equilibrium of ${\mG}^1$.

From genericity property \ref{P2}, no equilibrium of ${\mG}^1$ gives the same payoff as the equilibrium outcome $Q$ to player 1. 
Therefore, $\t^0_2 \notin \tilde\partial K_2$ and, for every $s^1_1 \in S^1_1$ we have $\mG_1(s^1_1, \t_2) = \mG^1_1(s^1_1, \t^0_2)< G_1(Q)$. 
We conclude that for sufficiently large $k$, every member of $S^1_1$ is a strictly inferior reply to $\t^k_2$. So for $k$ large enough, $\supp(\t^k_1)\subseteq S_1 \setminus S^1_1$ which finishes the proof of the first case. 

For the second case, consider positive sequences $(\varepsilon^k_i)_{k}, i=1,2$ converging to $0$ and $(\sigma^k)_{k \in \mathbb{N}} \subset \mathcal{O}^{0,0}$ a sequence of equilibria of $G^{\varepsilon^{k_1}, \varepsilon^k_2}$. For sufficiently large $k$, using an argument symmetric to the first case, $\supp(\tau^k_2) \subseteq (S_2 \setminus S^2_2)$ and $S^2_2$ is a set of strictly inferior replies after $\theta_0$ to $\tau^k_1$. 
From these two cases, for sufficiently small $\varepsilon_1, \varepsilon_2>0$, Lemma \ref{auxStep1} follows. 
\end{proof}

Take $\e_1, \e_2>0$ as small as prescribed by Lemma \ref{auxStep1}. 
Every equilibrium of ${G}^{\e_1, \e_2}$ in $\mathcal{O}^{0,0}$ is such that any strategy that assigns positive probability to some member of $S^n_n$ after $\theta_0$ is a strictly inferior reply to the equilibrium. 
With abuse of notation, denote also by $G^{\e_1, \e_2}$ the game that is obtained by eliminating $S^1_1$ and $S^2_2$ from players' actions at their information set after $\theta_0$. Since they are always inferior replies, their elimination does not affect the index of $\mathcal{O}^{0,0}$.

For $\d>0$, define
\begin{equation*}
U_{n}^\d \equiv\Big \{ \t_{n} \in \D(S_{n} \setminus S^{n}_{n}) \mid \mG_{-n}(s^{-n}_{-n}, \t_{n}) < G_{-n}(Q) - \d \text{ for all } s^{-n}_{-n} \in S^{-n}_{-n}\Big\}.
\end{equation*}
Since the component $K$ induces a unique distribution $Q$, the component that induces $Q$ in $\mG^{\otimes}$ is a rectangle $R_1 \times R_2$ with $R_n \subseteq \S^{\otimes}_n$.%
\footnote{~The game $\mG^{\otimes}$ is the normal form of the extensive-form game obtained from $G$ by eliminating nodes and branches that have zero probability under $Q$. 
Therefore, the outcome $Q$ is completely mixed in this extensive-form and is induced by a unique component of mixed strategies in $\mG^{\otimes}$. 
Hence, as in the proof of Lemma \ref{nonzeroindex}, $Q$ is induced by a unique enabling strategy profile in the enabling-form of $\mG^{\otimes}$.
(See Appending~\ref{app:nonzeroindex} for the definition of enabling strategies and enabling-form.) 
This implies that the component in $\mG^{\otimes}$ inducing $Q$ in $\mG^{\otimes}$ is a rectangle.}
From this we obtain a $\d$-neighborhood $B^{\d}_{n}\subseteq \S^\otimes_n$ of $R_n$ such that  $B^{\d} \equiv B^{\d}_1 \times B^{\d}_2$ is a neighborhood of $R_1 \times R_2$ and does not contain any other equilibrium of $\mG^{\otimes}$.
Let $V^{\d}_1\equiv(q^{\otimes}_1)^{-1}(B^{\d}_1)$, $V^{\d}_2\equiv(q^{\otimes}_2)^{-1}(B^{\d}_2)$, and $V^{\d}\equiv V^{\d}_1 \times V^{\d}_2$. Fix in addition $\d>0$ sufficiently small such that for both $n=1,2$ all equilibria of player $n$'s excluded game in $K^n$ induce a payoff smaller than $G_n(Q) - \d$ (such $\d>0$ exists due to Assumption \ref{P2}).

\begin{claim}\label{admiss} 
For $\e_1, \e_2>0$ sufficiently small,

\begin{enumerate} 

\item  $\mathcal{O}^{\d}\equiv([V^{\d}_1 \cap U^{\d}_{1}] \times \S^1_1)\times ([V^{\d}_2 \cap U^\d_2] \times \S^2_2) $ is admissible for $G^{\e_1, \e_2}$;

\item The symmetric difference $\mathcal{O}^{\d} \Delta  {\mathcal{O}}^{0,0}$ contains no equilibrium of $G^{\e_1,\e_2}$.

\end{enumerate} 
\end{claim}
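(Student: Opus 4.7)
Both parts will follow from an upper-hemicontinuity transfer from $G^{0,0}$ to $G^{\e_{1},\e_{2}}$. I will first establish that $\mathcal{O}^{\d}$ is admissible for $G^{0,0}$, and then deduce admissibility for $G^{\e_{1},\e_{2}}$ with small $\e_{1},\e_{2}$ by continuity of the equilibrium correspondence in the payoff parameter (the same argument used to pass admissibility of $\mathcal{O}^{0,0}$ to $G^{\e_1,\e_2}$). The two inclusions in part~(2) are then obtained by combining this admissibility with Lemma~\ref{auxStep1}.

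For the admissibility of $\mathcal{O}^{\d}$ with respect to $G^{0,0}$, the plan is to take an equilibrium $(\t_{1},\zeta^{1}_{1},\t_{2},\zeta^{2}_{2})$ of $G^{0,0}$ in $\overline{\mathcal{O}^{\d}}$ and show that it is in the interior. Because $S^{n}_{n}$ has been eliminated after $\t_{0}$ in $G^{0,0}$, the pair $(\t_{1},\t_{2})$ lies in $\D(S_{1}\setminus S^{1}_{1})\times\D(S_{2}\setminus S^{2}_{2})$ and is Nash in $\mG$ restricted to this product. A direct verification shows that the projection $q^{\otimes}(\t)$ is then an equilibrium of $\mG^{\otimes}$ lying in $\overline{B^{\d}}$. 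Choosing $B^{\d}$ so that its closure contains no equilibria of $\mG^{\otimes}$ outside $R_{1}\times R_{2}$, we conclude $q^{\otimes}(\t)\in R_{1}\times R_{2}$, so that $(\t_{1},\t_{2})$ induces the outcome $Q$; Assumption~\ref{P1} then forces $(\t_{1},\t_{2})\in K$. By compactness of $K$ together with the strict inferior-reply property of observable deviations against $K$ (Assumption~\ref{P2}), shrinking $\d$ further yields $K\subset V^{\d}\cap U^{\d}$ coordinate-wise, so $(\t,\zeta^{1}_{1},\zeta^{2}_{2})$ lies in the interior of $\mathcal{O}^{\d}$. Standard upper-hemicontinuity then extends admissibility to $G^{\e_{1},\e_{2}}$.

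For the two inclusions in part~(2), suppose first that $(\t,\zeta^{1}_{1},\zeta^{2}_{2})$ is an equilibrium of $G^{\e_{1},\e_{2}}$ in $\mathcal{O}^{0,0}$. Lemma~\ref{auxStep1} yields $\supp(\t_{n})\subset S_{n}\setminus S^{n}_{n}$ and that every $s^{n}_{n}\in S^{n}_{n}$ is a strictly inferior reply after $\t_{0}$; together with the fact that such equilibria converge to $K^{0,0}$ as $\e_{1},\e_{2}\to 0$, this gives $q^{\otimes}(\t)\in B^{\d}$ and $\t_{n}\in U^{\d}_{n}$ for small $\e_{1},\e_{2}$, placing the equilibrium in $\mathcal{O}^{\d}$. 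Conversely, any sequence of equilibria of $G^{\e^{k}_{1},\e^{k}_{2}}$ in $\mathcal{O}^{\d}$ with $\e^{k}\to 0$ has, by compactness, accumulation points that are equilibria of $G^{0,0}$ in $\overline{\mathcal{O}^{\d}}$; by the admissibility established above, these accumulation points lie in $K^{0,0}\subset\mathcal{O}^{0,0}$, so for $\e^{k}$ small the original equilibria are in $\mathcal{O}^{0,0}$.

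The main technical obstacle is the careful ordering of constants so that every inclusion holds simultaneously: first $B^{\d}$ must be chosen small enough to isolate $R_{1}\times R_{2}$ in $\mG^{\otimes}$, then $\d$ must be small enough to secure the strict-margin condition that defines $U^{\d}_n$ on $K$ (using Assumption~\ref{P2}) and to make each excluded game's equilibria in $K^{n}$ lie strictly below $G_{n}(Q)-\d$, and only then can $\e_{1},\e_{2}$ be taken small enough for the upper-hemicontinuity steps. A secondary subtlety is keeping track of the fact that $V^{\d}_{n}$ is defined as a preimage under $q^{\otimes}_{n}$, so the equivalence ``$\t\in V^{\d}$ if and only if $q^{\otimes}(\t)\in B^{\d}$'' must be invoked at every step where equilibria of the auxiliary game are compared with equilibria of the included game $\mG^{\otimes}$.
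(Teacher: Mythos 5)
Your plan rests on a step that is false: establishing that $\mathcal{O}^{\d}$ is admissible for $G^{0,0}$ and then transferring to $G^{\e_1,\e_2}$ by upper-hemicontinuity. The neighborhood $\mathcal{O}^{\d}$ is in general \emph{not} admissible for $G^{0,0}$. When $\e_1=\e_2=0$, the state $\theta_{-n}$ has probability zero, so $\zeta^{-n}_{-n}$ is completely unconstrained, and a profile $(\t,\zeta)$ is an equilibrium of $G^{0,0}$ as soon as $q^{\otimes}(\t)$ is an equilibrium of $\mG^{\otimes}$ — the off-path marginal $q^{0}_n(\t_n)$ is free. In particular, one can take $\t^0_n$ sitting exactly on the hyperplane $\mG_{-n}(s^{-n}_{-n},\t_n)=G_{-n}(Q)-\d$ for some $s^{-n}_{-n}$, producing an equilibrium of $G^{0,0}$ on $\partial\mathcal{O}^{\d}$. (This is visible already in the running example: for $\e_1=0$, any $(\mathit{Out},\zeta^1_1,\a\ell+(1-\a)r)$ is an equilibrium of $G^{0,0}$, including $\a=(2-\d)/3\in\partial U^{\d}_2$.) So admissibility cannot be established at $\e=0$ and transferred; the positivity of $\e$ is doing real work.

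The reason positivity matters — and the key idea in the paper's proof that your proposal omits — is that when $\e^k_{-n}>0$, the information set after $\theta_{-n}$ is on path, so $\zeta^{-n,k}_{-n}$ must be a best reply in $\S^{-n}_{-n}$ against $\t^k_n$. Combined with the optimality of $\t^k_n$ after $\theta_0$ (equation~\eqref{expected}), this forces the limit pair $(\zeta^{-n}_{-n},\t^0_n)$ to be an \emph{equilibrium of the excluded game} $\mG^{-n}$, lying in the supporting polytope $K^{-n}$. Assumption~\ref{P2}, together with the paper's explicit choice of $\d$ (``all equilibria of player $n$'s excluded game in $K^n$ induce a payoff smaller than $G_n(Q)-\d$''), then gives the strict $\d$-margin $\mG^{-n}_{-n}(\zeta^{-n}_{-n},\t^0_n)<G_{-n}(Q)-\d$, which is what places the tail of the sequence inside $U^{\d}$. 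Without the excluded-game equilibrium structure there is no mechanism to obtain a uniform strict margin.

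Two subsidiary missteps follow from the same misreading of \ref{P2}. First, \ref{P2} does \emph{not} say that observable deviations are strictly inferior replies to every profile in $K$; it only controls \emph{equilibrium} payoffs of the excluded game in $K^n$. For a zero-index component, $\tilde\partial K^{n}\neq\emptyset$ for some $n$ (as noted in the remark after Example~\ref{ex:A}), so there are profiles in $K$ at which some deviation is exactly indifferent. Hence the asserted inclusion ``$K\subset V^{\d}\cap U^{\d}$ after shrinking $\d$'' is false. Second, the inference from ``$q^{\otimes}(\t)\in R_1\times R_2$ and $\t$ is a Nash equilibrium of the restricted game'' to ``$(\t_1,\t_2)\in K$'' is not valid for the same reason: in $G^{0,0}$ the constraint imposed by $\mG$'s full best-reply correspondence on the off-path coordinates has been deleted, so the set of equilibria projecting onto $R_1\times R_2$ is strictly larger than $K^{0,0}$. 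Your argument for part~(2) inherits the same gap: Lemma~\ref{auxStep1} yields strict inferiority of $S^n_n$ but not the quantitative $\d$-margin that membership in $U^{\d}_n$ requires; the paper again obtains the margin via the excluded-game equilibrium argument.
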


\begin{proof}
We start by proving admissibility of $\mathcal{O}^{\d}$ for $G^{\e_1, \e_2}$. Let $(\e^k_n)_{k \in \mathbb{N}}$ converge to $0$ for $n=1,2$ such that $(\zeta^k, \t^k)$ is an equilibrium of $G^{\e^k_1, \e^k_2}$ in $\cl(\mathcal{O}^{\d})$, where the closure is taken with respect to $(\times_{n=1,2}\D(S_n \setminus S^n_n) \times \S^1_1 \times \S^2_2)$.
Consider a limit $(\zeta, \t)$. 
By \ref{P1}, $q^{\otimes}(\t)$ is an equilibrium of $\mG^{\otimes}$ that induces $Q$. 
Therefore, for $k$ large enough, $q^{\otimes}(\t^k) \in B^{\d}$, which implies $\t^k \in V^{\d}$. 
Recall that $\t_n \in \D(S_n \setminus S^{n}_{n})$ and take a product strategy $\t^{\otimes}_n*\t^{0}_n$ equivalent to $\t_n$ in $\mG$. 
We claim that $(\zeta^{-n}_{-n}, \t^0_n)$ is an equilibrium of the excluded game $\mG^{-n}$. 
It is clear that $\zeta^{-n}_{-n}$ is a best-reply of player $-n$ against $\t^0_n$ in $\mG^{-n}$.
It remains to show that $\t^0_n$ is a best-reply of player $n$ against $\zeta^{-n}_{-n}$ in $\mG^{-n}$.  Since $\t^k_n$ is an equilibrium strategy in $G^{\e^k_1, \e^k_2}$, it solves
\begin{equation}\label{expected}
\max_{x_n \in \D(S_n \setminus S^n_n)}\e^k_{-n}\mG_n(x_n, \zeta^{-n,k}_{-n})+(1-\e^k_1-\e^k_2)\mG_n(x_n, \t^{k}_{-n}).
\end{equation}

For $k$ sufficiently large, $\supp(\t_n)\subseteq\supp(\t^k_n)$, so $\t_n$ is also a solution to~\eqref{expected}. 
Note that for every $x_n \in \D(S_n \setminus S^n_n)$ we have $\mG_n(x_n, \t^{k}_{-n}) = \mG^{\otimes}_n(q^{\otimes}_n(x_n), q^{\otimes}_{-n}(\t^k_{-n}))$. Therefore, since $\t^{\otimes}_n$ is equivalent to $\hat \t^{\otimes}_n$ in $\mG^{\otimes}$, we obtain $\mG_n(\t_n, \zeta^{-n,k}_{-n}) = \mG^{-n}_n(\t^0_n, \zeta^{-n,k}_{-n})$. 
Taking limits in $k$, we obtain that $\t^0_n$ is a best-reply against $\zeta^{-n}_{-n}$ in $\mG^{-n}$. 
That is, $(\t^0_n ,\zeta^{-n}_{-n})$ is an equilibrium of $\mG^{-n}$ as claimed. This implies $\mG^{-n}_{-n}(\t^0_n, \zeta^{-n}_{-n}) < \mG_{-n}(Q) - \d$, by Assumption \ref{P2} and our choice of $\d$. 
Hence, $\mG_{-n}(\t_n, \zeta^{-n}_{-n}) = \mG^{-n}_{-n}(\t^0_n, \zeta^{-n}_{-n}) < \mG_{-n}(Q) - \d$. 
That is, $\t^k_n \in U^{\d}_n$ for sufficiently large $k$. 
This proves that for $\e_1, \e_2 >0$ small enough $\mathcal{O}^{\d}$ is admissible for $G^{\e_1, \e_2}$.

We now prove that for sufficiently small $\e_1, \e_2>0$, $\mathcal{O}^{\d}$ contains the equilibria of $G^{\e_1, \e_2}$ that are contained in $\mathcal{O}^{0,0}$. It is sufficient to prove that for any convergent sequence $(\zeta^k, \tau^k)_{k \in \mathbb{N}} \subset \mathcal{O}^{0,0}$ of equilibria $G^{\e^k_1, \e^k_2}$ with limit $(\zeta, \tau)$ and $\e^k_n$ converging to $0$ for $n=1,2$, the sequence is eventually in $\mathcal{O}^{\d}$. First observe that the limit must satisfy $\mG_{-n}(\zeta^{-n}_{-n}, \t_n) \leq G_{-n}(Q)$. Take a product strategy $\t^{\otimes}_n * \t^0_n$ equivalent to $\t_n$ in $\mG$. We claim that $(\zeta^{-n}_{-n}, \t^0_n)$ is an equilibrium of the excluded game $\mG^{-n}$. It is clear that $\zeta^{-n}_{-n}$ is a best-reply of player $-n$ against $\t^0_n$ in $\mG^{-n}$. It remains to show that $\t^0_n$ is a best-reply of player $n$ against $\zeta^{-n}_{-n}$ in $\mG^{-n}$.  Since $\t^k_n$ is an equilibrium strategy in $G^{\e^k_1, \e^k_2}$, it solves \eqref{expected}. Continuing the argument in the same fashion as in the first part, we obtain that $(\t^0_n, \zeta^{-n}_{-n})$ is an equilibrium of $\mG^{-n}$. This implies $\mG^{-n}_{-n}(\t^0_n, \zeta^{-n}_{-n}) < \mG_{-n}(Q) - \d$, by Assumption \ref{P2} and our choice of $\d$. Hence, $\mG_{-n}(\t_n, \zeta^{-n}_{-n}) = \mG^{-n}_{-n}(\t^0_n, \zeta^{-n}_{-n}) < \mG_{-n}(Q) - \d$. Thus  $(\zeta^k, \tau^k)$ belongs to $\mathcal{O}^{\d}$ for sufficiently large $k$. 

The proof of the result that for sufficiently small $\e_1, \e_2>0$, $\mathcal{O}^{0,0}$ contains the equilibria of $G^{\e_1, \e_2}$ that are contained in $\mathcal{O}^{\d}$ can be obtained from noticing that any limit $(\zeta, \t)$ of a sequence of equilibria in $\mathcal{O}^\d$ satisfies $\t \in V^\d$ and $\t \in U^\d$. \end{proof}

We now construct an expression for the index of $\mathcal{O}^{\d}$ that allows us to conclude that for some $n=1,2$, the supporting polytope $K^n$ has zero index with respect to $\mG^n$. 
When player $n$ best-replies against $(\t_{-n}, \zeta^{-n}_{-n})$ in game ${G}^{\e_1, \e_2}$ at her information set after $\theta_0$, player $n$ chooses $\t_{n} \in \D(S_n \setminus S^n_n)$ that maximizes 
\begin{equation}\label{expected2}
\e_{-n}\mG_n(\t_n, \zeta^{-n}_{-n})+(1-\e_1-\e_2)\mG_n(\t_n, \t_{-n}).
\end{equation}
Since actions in $S^n_n, n=1,2$ are no longer available after $\theta_0$, the function $\mG_n(\t_n, \t_{-n})$ only depends on $q_n^\otimes(\t_n)$ and $q_{-n}^\otimes(\t_{-n})$.
In order to compute the best-reply $\t_n$, player $n$ only requires $\z^{-n}_{-n}$ and $\t^{\otimes}_{-n}$.
The correspondence that assigns to each $(\zeta_{-n}^{-n},\tau_{-n}^{\otimes})$ the optimal $\t_n \in\D(S_n \setminus S^n_n)$ is 
\begin{equation*}
\mathrm{\BR}^{\otimes, \e_1,\e_2}_n:\S^{-n}_{-n}\times\S^{\otimes}_{-n}\rightrightarrows \D(S_n \setminus S^n_n).
\end{equation*}
In addition, define the correspondence $\BR^{\times}_n:\D(S_{-n}\setminus S^{-n}_{-n})\rightrightarrows \S^n_n$  by:
\begin{equation*} 
\BR^{\times}_n(\t_{-n})\equiv 
\arg\max_{\zeta^n_n \in \D(S^n_n)}\mG_n(\tau_{-n},\zeta^{n}_{n}).
\end{equation*}
Define $\BR^{\times}\equiv\BR^{\times}_1\times\BR^{\times}_2$.
Let $\mathrm{id}_n$ be the identity in $\S_n^n$.
Player $n$'s best-reply correspondence $\BR_n^{\e_1,\e_2}$ in game $G^{\e_1,\e_2}$ is
\begin{equation*}
\BR^{\times}_n
\times
\left[
\BR_n^{\otimes,\e_1,\e_2}
\circ(\mathrm{id}_{-n}\times q_{-n}^{\otimes})
\right]
:\S^{-n}_{-n}\times\Delta(S_{-n}\setminus S^{-n}_{-n})
\rightrightarrows
\S^{n}_{n}\times\Delta(S_{n}\setminus S^{n}_{n}).
\end{equation*}
And the best-reply correspondence in $G^{\e_1,\e_2}$ is $\BR^{\e_1,\e_2}\equiv\BR_1^{\e_1,\e_2}\times\BR_2^{\e_1,\e_2}$.

Instead of using $\BR^{\e_1,\e_2}$ directly to compute the index of $\mathcal{O}^\d$ we use a selection of $\BR^{\e_1,\e_2}$.
That is, a new correspondence whose graph is a subset of the graph of $\BR^{\e_1,\e_2}$.
To define it, consider first the correspondence $\phi^{\e_1,\e_2}_n:\S^{-n}_{-n}\times\S^{\otimes}_{-n}\rightrightarrows\S^\otimes_n\times\S^0_n$ that assigns to each $(\zeta^{-n}_{-n},\t^{\otimes}_{-n})\in\S^{-n}_{-n}\times\S^{\otimes}_{-n}$ the set of pairs $(\t^\otimes_n,\t^0_n)\in\S^\otimes_n\times\S^0_n$ that satisfy
\begin{align}
\t^{\otimes}_n &\in q^{\otimes}_n \circ \BR_n^{\otimes,\e_1,\e_2}(\zeta^{-n}_{-n},\t^{\otimes}_{-n}) \text{, and}\\
\t^0_n &\in\arg\max_{\tilde{\t}^0_n\in\S^0_n} \mathbb{G}_n(\hat{\t}^{\otimes}_n * \tilde{\t}^0_n, \zeta^{-n}_{-n}).
\end{align}
Let $e_{n}:\S_n^{\otimes}\times\S_n^0\to\D(S_n\setminus S^n_n)$ be defined by $e_n(\t^\otimes_n,\t^0_n)\equiv\t^\otimes_n*\t^0_n$.
Define also $w_{n}:\S^n_n\times\Delta(S_n\setminus S^n_n)\to\S^n_n\times\S^{\otimes}_n\times\S^0_n$ by $w_n(\zeta^n_n, \t_n) \equiv (\zeta^n_n, q^{\otimes}_n(\t_n), q^0_n(\t_n))$. 
As usual, $e \equiv e_1 \times e_2$, $w \equiv w_1 \times w_2$, and $\phi^{\e_1,\e_2} \equiv \phi^{\e_1,\e_2}_1 \times \phi^{\e_1, \e_2}_2$.
Consider the correspondence 
\begin{equation*}
\Phi^{\e_1, \e_2} \equiv (e \circ \phi^{\e_1, \e_2} \circ w) : \S^1_1 \times \S^2_2\times \Delta(S_1 \setminus S^1_1)\times\Delta(S_2 \setminus S^2_2) \rightrightarrows \Delta(S_1 \setminus S^1_1)\times \Delta(S_2 \setminus S^2_2).
\end{equation*}

\begin{lemma}
$\Phi^{\e_1, \e_2}$ is a nonempty, contractible, and compact valued, upper hemicontinuous correspondence. 
\end{lemma}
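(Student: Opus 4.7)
The plan is to decompose the correspondence as $\Phi^{\e_1,\e_2}=e\circ\phi^{\e_1,\e_2}\circ w$, treat the single-valued maps $w=w_1\times w_2$ (which is affine) and $e=e_1\times e_2$ (which is bilinear, hence continuous) as auxiliary machinery that preserves the relevant properties, and concentrate the real work on $\phi^{\e_1,\e_2}=\phi^{\e_1,\e_2}_1\times\phi^{\e_1,\e_2}_2$. For each $n$, the value of $\phi^{\e_1,\e_2}_n$ at $(\zeta^{-n}_{-n},\tau^\otimes_{-n})$ is a Cartesian product $A_n(\zeta^{-n}_{-n},\tau^\otimes_{-n})\times B_n(\zeta^{-n}_{-n})$, where $A_n\equiv q^\otimes_n\circ \BR^{\otimes,\e_1,\e_2}_n$ and $B_n\equiv \arg\max_{\tilde\tau^0_n\in\S^0_n}\mG_n(\hat\tau^\otimes_n*\tilde\tau^0_n,\zeta^{-n}_{-n})$.

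Nonemptiness, compactness, convexity, and upper hemicontinuity of $\BR^{\otimes,\e_1,\e_2}_n$ and $B_n$ will follow from Berge's maximum theorem applied to objectives that are linear in the choice variable (hence continuous and concave), continuous in the parameters, and optimized over compact convex polytopes. Composition with the affine map $q^\otimes_n$ preserves these properties, and so do Cartesian products, so $\phi^{\e_1,\e_2}$ is nonempty-valued, compact-valued, convex-valued, and upper hemicontinuous. Pre- and post-composing with the continuous single-valued functions $w$ and $e$ then preserves nonemptiness, compactness, and upper hemicontinuity, settling three of the four claims for $\Phi^{\e_1,\e_2}$.

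The main obstacle is contractibility, since $e_n$ is only bilinear and the image of a convex set under a bilinear map is not automatically convex or even contractible. The key observation I would exploit is that $e_n$ is nevertheless \emph{injective}: $\tau^\otimes_n*\tau^0_n$ is the product probability measure on $S^\otimes_n\times S^0_n=S_n\setminus S^n_n$, whose marginals on $S^\otimes_n$ and $S^0_n$ return $\tau^\otimes_n$ and $\tau^0_n$ uniquely. As a continuous injection from the compact Hausdorff space $\S^\otimes_n\times\S^0_n$ into a Hausdorff space, $e_n$ is a homeomorphism onto its image. Therefore $e_n(A_n\times B_n)$ is homeomorphic to the convex set $A_n\times B_n$, hence contractible; and since $\Phi^{\e_1,\e_2}(\cdot)$ is then a product of two contractible sets, one per player, it is contractible as well. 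This completes the plan.
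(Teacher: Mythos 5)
Your proposal is correct and rests on the same two structural facts as the paper's proof — uniqueness of the product decomposition $\tau_n=\tau^\otimes_n*\tau^0_n$ (i.e., injectivity of $e_n$) and convexity of the factors $A_n$ and $B_n$. The paper packages these by writing the contraction homotopy $J_n(t,\tilde\tau_n)=\left[(1-t)\tilde\tau^\otimes_n+t\check\tau^\otimes_n\right]*\left[(1-t)\tilde\tau^0_n+t\check\tau^0_n\right]$ explicitly, whereas you note that $e_n$ is a homeomorphism onto its image (a continuous injection from a compact Hausdorff space into a Hausdorff space) so the value set is homeomorphic to a convex set; these are the same argument, yours being the slightly more abstract phrasing.
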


\begin{proof} 
The fact that $\Phi^{\e_1, \e_2}$ is nonempty, compact-valued, and upper hemicontinuous follows from standard arguments.
We show that values are contractible.  
Fix $(\t,\zeta)$ and some $\check{\t} \in \Phi^{\e_1, \e_2}(\t,\zeta)$. Then $\check{\t} = \check \t^{\otimes} * \check{\t}^0$.
Consider the homotopy $J = J_1 \times J_2: \Phi^{\e_1, \e_2}(\t,\zeta) \times [0,1] \to \Phi^{\e_1, \e_2}(\t,\zeta)$ that for each $\tilde\t=(\tilde{\t}_1,\tilde{\t}_2)=(\tilde \t^{\otimes}_1 * \tilde \t^0_1,\tilde \t^{\otimes}_2 * \tilde \t^0_2) \in \Phi^{\e_1, \e_2}(\t,\zeta)$ is defined by $J_n(t, \tilde \t_n) = \left[(1-t) \tilde \t^{\otimes}_n + t \check{\t}^{\otimes}_n\right] * \left[(1-t) \tilde \t^{0}_n + t \check{\t}^{0}_n\right]$ for each $n=1,2$. 
The homotopy is well-defined. 
Moreover,  $J_n(0, \tilde \t^{\otimes}_n * \tilde \t^0_n) = \tilde \t^{\otimes}_n * \tilde \t^0_n$ and $J_n(1, \tilde \t^{\otimes}_n * \tilde \t^0_n) = \check{\t}^{\otimes}_n *  \check{\t}^0_n$, which shows that $\Phi^{\e_1, \e_2}(\t,\zeta)$ is contractible valued. 
\end{proof}

\begin{lemma} 
$\BR^{\times} \times \Phi^{\e_1, \e_2}$ is a selection of $\BR^{\e_1, \e_2}$.
\end{lemma}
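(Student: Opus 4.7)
The plan is to reduce the selection claim to a player-by-player assertion and then verify it by decomposing the expected payoff. Because $\BR^{\e_1,\e_2}_n=\BR^{\times}_n\times[\BR^{\otimes,\e_1,\e_2}_n\circ(\mathrm{id}_{-n}\times q^{\otimes}_{-n})]$ and the $\BR^{\times}_n$ factor is identical on both sides of the claim, it suffices to show that for each $n=1,2$ and each $(\t^{\otimes}_n,\t^{0}_n)\in\phi^{\e_1,\e_2}_n(\zeta^{-n}_{-n},q^{\otimes}_{-n}(\t_{-n}))$, the product $\t^{\otimes}_n*\t^{0}_n$ maximizes~\eqref{expected2}, so that $\t^{\otimes}_n*\t^{0}_n\in\BR^{\otimes,\e_1,\e_2}_n(\zeta^{-n}_{-n},q^{\otimes}_{-n}(\t_{-n}))$.

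First, by Kuhn's theorem (Lemma~\ref{lm:kuhn}), every $\t_n\in\D(S_n\setminus S^n_n)$ is outcome-equivalent to its product $q^{\otimes}_n(\t_n)*q^0_n(\t_n)$, so the maximization in~\eqref{expected2} may be restricted to product strategies. Since neither $\t_n$ nor $\t_{-n}$ deviates at an on-path information set, the on-path contribution equals $(1-\e_1-\e_2)\mG^{\otimes}_n(q^{\otimes}_n(\t_n),q^{\otimes}_{-n}(\t_{-n}))$ and depends on $\t_n$ only through $q^{\otimes}_n(\t_n)$. By the definition of the excluded game, $\mG_n(\hat\t^{\otimes}_n*\tilde\t^0_n,\zeta^{-n}_{-n})=\mG^{-n}_n(\tilde\t^0_n,\zeta^{-n}_{-n})$, so the second defining condition of $\phi^{\e_1,\e_2}_n$ says precisely that $\t^{0}_n$ is a best-reply to $\zeta^{-n}_{-n}$ in player~$-n$'s excluded game. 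I would then pick any product maximizer $\bar\t_n=\t^{\otimes}_n*\bar\t^0_n$ of~\eqref{expected2} whose on-path marginal equals $\t^{\otimes}_n$, guaranteed to exist by the first defining condition of $\phi^{\e_1,\e_2}_n$ combined with Kuhn. Because the on-path contributions of $\bar\t_n$ and $\t^{\otimes}_n*\t^{0}_n$ coincide, the claim reduces to the off-path inequality $\mG_n(\t^{\otimes}_n*\t^{0}_n,\zeta^{-n}_{-n})\ge\mG_n(\t^{\otimes}_n*\bar\t^0_n,\zeta^{-n}_{-n})$.

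I expect this last step to be the main obstacle, and it is resolved by invoking perfect recall. At any off-path information set $u\in U^0_n$, perfect recall forces the nodes of $u$ to share a common history of actions for player~$n$, so the conditional distribution on $u$ induced by the profile $(\t^{\otimes}_n*\tilde\t^0_n,\zeta^{-n}_{-n})$ given that $u$ is reached is independent of $\t^{\otimes}_n$. Consequently the optimal behavior at $u$, and inductively throughout the off-path sub-forest of player~$n$, depends only on the continuation produced by $\zeta^{-n}_{-n}$ rather than on which $\t^{\otimes}_n\in\S^{\otimes}_n$ governs player~$n$'s on-path play. Since $\hat\t^{\otimes}_n$ lies in the interior of $\S^{\otimes}_n$, every off-path information set reached under $\t^{\otimes}_n$ is also reached under $\hat\t^{\otimes}_n$, so any maximizer of $\tilde\t^0_n\mapsto\mG_n(\hat\t^{\otimes}_n*\tilde\t^0_n,\zeta^{-n}_{-n})$, in particular the chosen $\t^{0}_n$, is also a maximizer of $\tilde\t^0_n\mapsto\mG_n(\t^{\otimes}_n*\tilde\t^0_n,\zeta^{-n}_{-n})$. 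This yields the required inequality and completes the argument.
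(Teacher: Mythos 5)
Your argument takes essentially the same route as the paper's: use the first defining condition of $\phi^{\e_1,\e_2}_n$ to get a maximizer with the right on-path marginal, pass to a product strategy via Kuhn's theorem, observe that the on-path contributions agree, and settle the off-path comparison by showing that a best reply against $\hat\t^\otimes_n$ is a fortiori a best reply against any on-path play with smaller support. Your explicit perfect-recall argument for the last step---that conditional beliefs at off-path information sets of player $n$ do not depend on $\t^\otimes_n$, and that $\hat\t^\otimes_n$ being interior makes every off-path information set reachable under some $\t^\otimes_n$ also reachable under $\hat\t^\otimes_n$---is precisely the content the paper compresses into the terse claim that $\supp(\t^{*,\otimes}_n)\subseteq\supp(\hat\t^{\otimes}_n)$ yields the argmax inclusion, so you have made a step explicit that the paper leaves implicit.

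There are two small imprecisions, both in how Kuhn's theorem is invoked, that should be fixed before the proof is airtight. First, Lemma~\ref{lm:kuhn} provides \emph{some} product strategy outcome-equivalent to $\t_n$; it does \emph{not} assert that this product is $q^{\otimes}_n(\t_n)*q^0_n(\t_n)$. In general the two are not outcome-equivalent, because the behavior induced at an off-path information set whose reachability depends on actions at \emph{both} on-path and off-path predecessors is a function of the joint distribution rather than of the two marginals. Your conclusion (``the maximization may be restricted to product strategies'') survives, but the justification should cite the existence of some equivalent product, not identify it with the product of marginals. Second, and more consequentially for the bookkeeping, you cannot in general pick a product maximizer $\bar\t_n=\t^{\otimes}_n*\bar\t^0_n$ whose on-path factor \emph{equals} $\t^{\otimes}_n$; Kuhn gives a product maximizer $\bar\t^{\otimes}_n*\bar\t^0_n$ whose on-path factor $\bar\t^{\otimes}_n$ is only \emph{equivalent} to $\t^{\otimes}_n$ (this is what the paper denotes $\t^{*,\otimes}_n$). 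Neither issue is fatal: outcome-equivalence at on-path information sets is exactly what is needed both for ``the on-path contributions coincide'' and for replacing $\bar\t^{\otimes}_n$ with $\t^{\otimes}_n$ in your final off-path inequality. But the proof should say ``equivalent'' rather than ``equals'' and record that the last substitution uses this equivalence, mirroring the paper's implicit passage from $\t^{*,\otimes}_n*\check\t^0_n$ to $\check\t^{\otimes}_n*\check\t^0_n$.
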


\begin{proof} 
Let $(\check{\zeta}, \check{\t}) \in [\BR^{\times} \times \Phi^{\e_1, \e_2}](\zeta, \tau)$. 
Since $\check{\zeta}\in\BR^\times(\zeta,\tau)$ we only need to show $\check{\tau}_n\in\BR_n^{\otimes,\e_1,\e_2}(\zeta^{-n}_{-n},q_{-n}^{\otimes}(\tau))$  for $n=1,2$.
By definition, $\check{\t}_n = \check{\t}^{\otimes}_n * \check{\t}^0_n$, where
\begin{equation*}
\check{\t}^{\otimes}_n \in q^{\otimes}_n\left[\BR_n^{\otimes,\e_1,\e_2}(\zeta^{-n}_{-n},q_{-n}^{\otimes}(\t_{-n}))\right].
\end{equation*}

Therefore, there exists
\begin{equation*}
\t'_n \in\arg\max_{\tilde{\t}_n}\Big\{ \e_{-n} \mG_{n}(\tilde \t_n, \zeta^{-n}_{-n}) + (1-\e_1 -\e_2)\mG_{n}(q^{\otimes}_n(\tilde \t_n), q^{\otimes}_{-n}(\t_{-n}))\Big\}
\end{equation*}
such that  $q^{\otimes}_n(\t'_n) = \check{\t}^{\otimes}_n$. 
Take $\t^*_n = \t^{*,\otimes}_n * \t^{*,0}_n$ equivalent to $\t'_n$ in $\mG$. 
Then $\check{\t}^{\otimes}_n$ must be equivalent to $\t^{*, \otimes}_n$ in $\mG^{\otimes}$. 
Recall $\supp(\t^{*, \otimes}_n) \subseteq \supp(\hat{\t}^{\otimes}_n)$ because $\hat{\t}^{\otimes}_n$ is completely mixed. 
Hence,
\begin{equation*}
\check{\t}^0_n \in \arg\max_{\t^0_n}\sum_{s^{\otimes}_n, s^0_n} \hat{\t}^{\otimes}_n(s^{\otimes}_n)\t^0_n(s^0_n) \mG_n(s^{\otimes}_n, s^0_n, \zeta^{-n}_{-n}) \subseteq \arg\max_{\t^0_n}\sum_{s^{\otimes}_n, s^0_n} \t^{*,\otimes}_n(s^{\otimes}_n)\t^0_n(s^0_n) \mG_n(s^{\otimes}_n, s^0_n, \zeta^{-n}_{-n}).
\end{equation*}
That is,
\begin{equation*}
\check{\t}^{\otimes}_n * \check{\t}^0_n \in \arg\max_{\tilde{\t}_n}\Big\{ \e_{-n} \mG_{n}(\tilde \t_n, \zeta^{-n}_{-n}) + (1-\e_1 -\e_2)\mG_{n}(q^{\otimes}_n(\tilde \t_n), q^{\otimes}_{-n}(\t_{-n}))\Big\},
\end{equation*}
which shows $\check{\t}_n\in\BR_n^{\otimes,\e_1,\e_2}(\zeta^{-n}_{-n},q_{-n}^{\otimes}(\tau_{-n}))$  for $n=1,2$ and, therefore, $(\check{\zeta}, \check{\t}) \in \text{BR}^{\e_1, \e_2}(\zeta, \tau)$.
\end{proof}

Letting $\BR^{\otimes}$ be the best-reply correspondence in the game $\mG^{\otimes}$, let us define $\Phi \equiv \Phi^{0,0} = e \circ (\BR^{1}_2 \times \BR^{\otimes} \times \BR^2_1) \circ w$. 
Similarly to $\Phi^{\e_1,\e_2}$, standard arguments show that $\Phi$ is upper hemicontinuous as well as nonempty, compact, and contractible valued.

\begin{lemma} 
The following statements hold:
\begin{enumerate}
\item The neighborhood $\mathcal{O}^{\d}$ is admissible for $\BR^{\times} \times \Phi$. 
\item For $\e_1$, $\e_2>0$ sufficiently small, $\mathcal{O}^{\d}$ is admissible for $\BR^{\times} \times \Phi^{\e_1, \e_2}$.
\item For $\e_1$, $\e_2>0$ sufficiently small, $\BR^{\times} \times \Phi^{\e_1, \e_2}$ assigns the same index to $\mathcal{O}^{\d}$ as $\BR^{\times} \times \Phi$ does.
\end{enumerate} 
\end{lemma}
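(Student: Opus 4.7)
The plan is to handle the three claims in order, using the decomposition of fixed points of $\BR^{\times}\times\Phi$ at $(\e_1,\e_2)=(0,0)$ as pairs consisting of an equilibrium of the included game $\mG^{\otimes}$ and equilibria of the excluded games, and then propagating the conclusion to small $\e_1,\e_2>0$ via upper hemicontinuity and homotopy invariance of the index.

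For part~(1), I would start by unpacking the fixed-point equation. Reading off $\Phi=e\circ\phi^{0,0}\circ w$ together with the definition of $\BR^{\times}$, any fixed point $(\zeta,\t)$ admits a representation $\t_n=\t^{\otimes}_n*\t^0_n$ with $(\t^{\otimes}_1,\t^{\otimes}_2)$ an equilibrium of $\mG^{\otimes}$ inducing $Q$, with $\t^0_n$ a best reply to $\zeta^{-n}_{-n}$ in $\mG^{-n}$, and with $\zeta^n_n$ a best reply to $\t_{-n}$ among observable deviations. Since the payoffs in~\eqref{payoffsexcluded} are independent of the choice of $\hat{\t}\in K$ inducing $Q$ (and $\t^{\otimes}_n\in R_n$ is such a choice), one has $\mG_{-n}(s^{-n}_{-n},\t_n)=\mG^{-n}_{-n}(s^{-n}_{-n},\t^0_n)$, which shows that $(\zeta^{-n}_{-n},\t^0_n)$ is an equilibrium of $\mG^{-n}$ for each $n$. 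Assume now $(\zeta,\t)\in\cl(\mathcal{O}^{\d})$. For the $V^{\d}$-direction, $q^{\otimes}(\t)\in\cl(B^{\d})$ is an equilibrium of $\mG^{\otimes}$, and $B^{\d}$ was chosen so that the only equilibria of $\mG^{\otimes}$ in its closure lie in $R_1\times R_2\subset B^{\d}$, giving $\t\in V^{\d}$. For the $U^{\d}$-direction, $\t\in\cl(U^{\d})$ forces the equilibrium $(\zeta^{-n}_{-n},\t^0_n)$ into the supporting polytope $K^{-n}$, so by Assumption~\ref{P2} and the choice of $\d$ we obtain $\mG^{-n}_{-n}(\zeta^{-n}_{-n},\t^0_n)<G_{-n}(Q)-\d$; since $\zeta^{-n}_{-n}$ best-responds in $\mG^{-n}$, this yields $\mG_{-n}(s^{-n}_{-n},\t_n)=\mG^{-n}_{-n}(s^{-n}_{-n},\t^0_n)<G_{-n}(Q)-\d$ for every $s^{-n}_{-n}$, placing $\t_n$ in the open set $U^{\d}_n$. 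Hence $(\zeta,\t)\in\mathcal{O}^{\d}$.

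For part~(2), I would argue by contradiction: otherwise there exist sequences $\e^k_n\to 0$ and fixed points $(\zeta^k,\t^k)$ of $\BR^{\times}\times\Phi^{\e^k_1,\e^k_2}$ on $\partial\mathcal{O}^{\d}$; compactness yields a convergent subsequence, and joint upper hemicontinuity of $(\e_1,\e_2,\sigma)\mapsto\Phi^{\e_1,\e_2}(\sigma)$ (routine, as expected payoffs depend continuously on $\e_1,\e_2$) produces a boundary fixed point of $\BR^{\times}\times\Phi$, contradicting~(1). For part~(3), I would use the straight-line homotopy $t\mapsto\BR^{\times}\times\Phi^{t\e_1,t\e_2}$, $t\in[0,1]$. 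Since $\{(t\e_1,t\e_2):t\in[0,1]\}\subset[0,\e_1]\times[0,\e_2]$, the contradiction argument of~(2) applies uniformly in $t$ for $\e_1,\e_2$ small, so no fixed point of the family meets $\partial\mathcal{O}^{\d}$. Homotopy invariance of the best-reply index for upper hemicontinuous, nonempty-compact-contractible-valued correspondences (cf.\ Section~\ref{indextheory} and \citealp{AM2016}) then forces the indexes at $t=0$ and $t=1$ to coincide.

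The main obstacle is step~(1): converting the closure conditions into the strict membership $\t\in V^{\d}\cap U^{\d}$. This hinges on Assumption~\ref{P2}—the strict payoff gap between every excluded-game equilibrium in its supporting polytope and the equilibrium payoff of~$K$—which, together with the independence of~\eqref{payoffsexcluded} from the particular $\hat{\t}\in K$, is exactly what pushes fixed points off the boundary; once~(1) is in hand, (2) and (3) are standard upper-hemicontinuity and homotopy-invariance arguments.
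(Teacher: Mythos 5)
Your proof is correct. Part (1) follows essentially the same chain of reasoning as the paper: decompose the fixed-point strategy $\t_n=\t^{\otimes}_n*\t^0_n$, use the closure conditions and the fact that $B^{\d}$ isolates the $R_1\times R_2$ component of $\mG^{\otimes}$ to obtain $\t^{\otimes}_n\sim\hat{\t}^{\otimes}_n$, deduce that $(\zeta^{-n}_{-n},\t^0_n)$ is an equilibrium of $\mG^{-n}$ lying in $K^{-n}$, and invoke Assumption~\ref{P2} with the choice of $\d$ to obtain the strict inequality that forces the fixed point back into the open neighborhood. (A cosmetic point: you assert $\t^{\otimes}\in R_1\times R_2$ in the opening sentence before establishing it; this should come after, from the $\cl(V^{\d})$ condition, exactly as you then explain.)

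For parts (2) and (3) you take a genuinely different technical route. The paper does not argue by compactness and upper hemicontinuity; instead it constructs the intermediate $\eta$-relaxed best reply $\BR^{\otimes,\eta}$ in $\mG^{\otimes}$, shows that for small $\eta$ the graph of $\BR^{\times}\times[e\circ(\BR^1_2\times\BR^{\otimes,\eta}\times\BR^2_1)\circ w]$ lies in a fixed admissible neighborhood $W$ of $\mathrm{Graph}(\BR^{\times}\times\Phi)$, and then verifies that for small $\e_1,\e_2>0$ the perturbed objective differs from the $\mG^{\otimes}$-objective by at most $\eta$, so $\mathrm{Graph}(\BR^{\times}\times\Phi^{\e_1,\e_2})$ is sandwiched inside $W$. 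Admissibility of $\mathcal{O}^{\d}$ follows because $W$ misses the diagonal over $\partial\mathcal{O}^{\d}$, and equality of indices follows from the ``two maps with graphs in a common sub-neighborhood $V\subseteq W$ are homotopic inside $W$'' property used to define the best-reply index. Your version replaces this explicit graph-containment argument with (2) a compactness/joint-uhc contradiction and (3) the straight-line homotopy $t\mapsto\BR^{\times}\times\Phi^{t\e_1,t\e_2}$ together with homotopy invariance for contractible-valued correspondences. Both are correct and hinge on the same analytic fact, namely joint upper hemicontinuity of $\Phi^{\e_1,\e_2}$ in $(\e_1,\e_2,\sigma)$. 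Your route is shorter; the paper's buys an explicit and quantifiable graph containment that plugs directly into its graph-approximation definition of the index. One caveat worth flagging: the ``homotopy invariance for uhc contractible-valued correspondences'' you cite is not stated as a standalone result in the paper's Section~\ref{indextheory}; to stay strictly inside that framework you would unwind it as the paper does—choose a neighborhood of $\bigcup_{t\in[0,1]}\mathrm{Graph}(\BR^{\times}\times\Phi^{t\e_1,t\e_2})$ missing the diagonal over $\partial\mathcal{O}^{\d}$, pass to a sub-neighborhood with the map-homotopy property, approximate the two endpoints there, and conclude. This is routine, but it is the step your appeal to homotopy invariance is quietly compressing.
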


\begin{proof} 
To prove (1), let $(\zeta, \t)$ be a fixed point of $\BR^{\times} \times \Phi$, where $(\zeta, \tau) \in \cl(\mathcal{O}^{\d})$. 
Then $\t_n = \t^{\otimes}_n * \t^0_n$ and $\t^{\otimes}_n \in B^{\d}$. 
This implies that $\t^{\otimes}_n$ is equivalent to $\hat{\t}^{\otimes}_n$ in $\mG^{\otimes}$. 
Therefore, $\zeta^{-n}_{-n} \in\BR^{\times}_{-n}(\t_n) =\BR^{-n}_{-n}(\t^0_n)$.
Moreover, by construction, $\t^0_n \in\BR^{-n}_n(\zeta^{-n}_{-n})$.  
Hence, $(\t^0_n, \zeta^{-n}_{-n})$ is an equilibrium of $\mG^{-n}$. 
Since $\mG_{-n}(\zeta^{-n}_{-n}, \t_n) \leq G_{-n}(Q) - \d$ we obtain $\mG^{-n}_{-n}(\zeta^{-n}_{-n}, \t^0_n) \leq \mG_{-n}(Q) - \d$.
By \ref{P2} and our choice of $\d$, we have $\mG^{-n}_{-n}(\zeta^{-n}_{-n}, \t^0_n) < G_{-n}(Q) - \d$ and finally $\mG_{-n}(s^{-n}_{-n}, \t_n) < G_{-n}(Q) - \d$ for every $s^{-n}_{-n} \in S^{-n}_{-n}$. 
That is, $(\zeta, \t) \in \mathcal{O}^{\d}$, which proves admissibility of $\mathcal{O}^{\delta}$. 

To prove (2), fix a neighborhood $W$ of the graph of $\BR^{\times} \times \Phi$ such that $W \cap \{ (\t,\zeta,\t,\zeta) \mid (\t,\zeta) \in \cl(\mathcal{O}^{\d}) \setminus \mathcal{O}^{\d} \} = \emptyset$. 
For $\eta>0$, define the set of $\eta$-replies against $\t^{\otimes}\in\S^{\otimes}$ by $\BR^{\otimes, \eta}(\t^{\otimes}) \equiv \big\{ \tilde{\t}^{\otimes}\in\S^{\otimes} \mid\mG^{\otimes}_n(\tilde{\t}^{\otimes}_n, \t^{\otimes}_{-n}) \geq \text{max}_{s^{\otimes}_n} \mG^{\otimes}_{n}(s^{\otimes}_n, \t^{\otimes}_{-n}) - \eta \text{ for }n=1,2 \big\}$. 
Fix $\eta>0$ sufficiently small such that $\mathrm{Graph}\big(\BR^{\times} \times [e \circ (\BR^1_2 \times \BR^{\otimes, \eta} \times \BR^{2}_1) \circ w]\big) \subseteq W$. 
Take now $\e_1, \e_2 >0$ sufficiently small such that for each $n=1,2$, $s_n = (s^{\otimes}_n, s^0_n)$ and $s^{-n}_{-n}$,
\begin{equation*}
\e_{-n}\mG_{n}(s_n, s^{-n}_{-n}) + (1-\e_1-\e_2) \mG^{\otimes}_{n}(s^{\otimes}) \geq \mG^{\otimes}_n(s^{\otimes}) - \eta.
\end{equation*}
Then, $(\tilde \zeta, \tilde \t) \in [\BR^{\times} \times \Phi^{\e_1, \e_2}](\zeta, \tau)$ implies $(\zeta, \t, \tilde \zeta, \tilde \t) \in \mathrm{Graph}\big(\BR^{\times} \times e \circ [\BR^{1}_2 \times \BR^{\otimes, \eta} \times \BR^2_1] \circ w\big)$, which implies $\mathrm{Graph}(\BR^{\times} \times \Phi^{\e_1, \e_2}) \subseteq W$. 
This shows that, for sufficiently small $\e_1, \e_2>0$, the neighbourhood $\mathcal{O}^{\d}$ is admissible for $\BR^{\times} \times \Phi^{\e_1, \e_2}$.

To prove (3), consider the neighborhood $W$ defined in the previous paragraph. 
There exists then a neighborhood $V \subseteq W$ of $\mathrm{Graph}(BR^{\times} \times \Phi)$ such that any two continuous maps whose graphs are contained in $V$ are homotopic by a homotopy contained in $W$. 
By our choice of $W$, any two such maps assign the same index to $\mathcal{O}^{\d}$.
Using a reasoning analogous to the one of the previous item, take then $\e_1$ and $\e_2$ sufficiently small so that $\mathrm{Graph}(\BR^{\times} \times \Phi^{\e_1, \e_2}) \subseteq V$. 
Approximate $\BR^{\times}\times \Phi^{\e_1, \e_2}$ by a continuous function $f$ with $\mathrm{Graph}(f) \subset V$ such that the index it assigns to $\mathcal{O}^{\d}$ determines the index of $\mathcal{O}^{\d}$ with respect to $\BR^{\times} \times \Phi^{\e_1, \e_2}$. This index is then equal to the index of $\mathcal{O}^{\d}$ with respect to $\BR^{\times} \times \Phi$. This concludes the proof.\end{proof}

Let $h_n(\t_n, t) = (1-t)\t_n + t[\hat{\t}^{\otimes}_n * q^{0}_n(\t_n)]$ and define 
\begin{align*}
H: \left(\prod_{n=1,2} \left[\Delta(S_n \setminus S^n_n) \times \S^{n}_n \right]\right) \times [0,1] &\rightrightarrows \prod_{n=1,2} \left[\Delta(S_n \setminus S^n_n) \times \S^{n}_n \right]\text{ by}\\
H(\t,\zeta, t) &= \BR^{\times}(h(\t,t)) \times \Phi (\t, \zeta)\text{ for every }(\t,\zeta, t). 
\end{align*}
The map $H$ is a homotopy of correspondences.
It is a nonempty, contractible, compact valued, upper hemicontinuous correspondence. Letting $\bar{e} (\zeta, \t^{\otimes}, \t^0) = (\zeta, e(\t^{\otimes}, \t^{0}))$, observe that for $t=1$, the homotopy equals $\bar{e} \circ (\BR^{1} \times \BR^{\otimes} \times \BR^2) \circ w$ and for $t=0$ it equals $\BR^{\times} \times \Phi$.

\begin{lemma} 
$\mathcal{O}^{\d}$ is admissible for $H$ and therefore $\bar{e} \circ (\BR^{1} \times \BR^{\otimes} \times \BR^2) \circ w$  assigns the same index to $\mathcal{O}^{\d}$ as $\BR^{\times} \times \Phi$. 
\end{lemma}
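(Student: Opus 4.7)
The plan is to mirror the admissibility argument for $\BR^{\times}\times\Phi$ established in item (1) of the previous lemma and to extend it across the homotopy, exploiting the fact that the $\Phi$-component is constant along $H$ while the input to $\BR^{\times}$ is altered only by a convex combination that does not affect the relevant payoffs at fixed points inside $\cl(\mathcal{O}^{\d})$. Admissibility for $H$ amounts to showing that for every $t\in[0,1]$ any fixed point $(\t,\zeta)\in H(\t,\zeta,t)\cap\cl(\mathcal{O}^{\d})$ already lies in $\mathcal{O}^{\d}$; once this is established, the identity of indexes follows directly from the homotopy property of the fixed-point index for upper hemicontinuous, nonempty, compact, contractible-valued correspondences, as recalled in Section~\ref{indextheory}.

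Fix $t\in[0,1]$ and suppose $(\t,\zeta)$ is a fixed point of $H(\cdot,t)$ in $\cl(\mathcal{O}^{\d})$. The condition $\t\in\Phi(\t,\zeta)$ gives a decomposition $\t_n=\t^{\otimes}_n*\t^{0}_n$ in which $\t^{\otimes}_n$ is a best reply in $\mG^{\otimes}$ against $q^{\otimes}_{-n}(\t_{-n})=\t^{\otimes}_{-n}$, so $(\t^{\otimes}_1,\t^{\otimes}_2)$ is an equilibrium of $\mG^{\otimes}$. Because $\t\in\cl(V^{\d})$, this equilibrium lies in $\cl(B^{\d})$ and hence in $R_1\times R_2$, and therefore $\t^{\otimes}_n$ is equivalent to $\hat{\t}^{\otimes}_n$ in $\mG^{\otimes}$. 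Exactly as in item (1) of the previous lemma, this equivalence combined with the definition of $\mG^{-n}$ yields
\begin{equation*}
\mG_{-n}(\t^{\otimes}_n*\t^{0}_n,\zeta^{-n}_{-n})=\mG_{-n}(\hat{\t}^{\otimes}_n*\t^{0}_n,\zeta^{-n}_{-n})=\mG^{-n}_{-n}(\t^{0}_n,\zeta^{-n}_{-n})
\end{equation*}
for every $\zeta^{-n}_{-n}\in\S^{-n}_{-n}$. Since the payoff of $h_n(\t,t)=(1-t)\t_n+t[\hat{\t}^{\otimes}_n*\t^{0}_n]$ against any such $\zeta^{-n}_{-n}$ is then a convex combination of two equal numbers, it coincides with the payoff of $\t_n$, and hence $\BR^{\times}_{-n}(h_n(\t,t))=\BR^{\times}_{-n}(\t_n)$ for every $t$; the fixed-point condition on the $\BR^{\times}$-component of $H(\cdot,t)$ collapses to the one at $t=0$.

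With this reduction the remainder of the argument repeats item (1) of the previous lemma: $\zeta^{-n}_{-n}\in\BR^{\times}_{-n}(\t_n)=\BR^{-n}_{-n}(\t^{0}_n)$ together with $\t^{0}_n\in\BR^{-n}_n(\zeta^{-n}_{-n})$, which holds by the construction of $\Phi$, shows that $(\t^{0}_n,\zeta^{-n}_{-n})$ is an equilibrium of $\mG^{-n}$. Assumption~\ref{P2} and our choice of $\d$ then give $\mG^{-n}_{-n}(\t^{0}_n,\zeta^{-n}_{-n})<G_{-n}(Q)-\d$, so $\mG_{-n}(\t_n,s^{-n}_{-n})<G_{-n}(Q)-\d$ for every $s^{-n}_{-n}\in S^{-n}_{-n}$ and thus $\t_n\in U^{\d}_n$; combined with $q^{\otimes}(\t)\in R_1\times R_2\subseteq B^{\d}$, which gives $\t\in V^{\d}$, we obtain $(\t,\zeta)\in\mathcal{O}^{\d}$. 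Admissibility of $\mathcal{O}^{\d}$ at every slice of $H$ then allows invoking homotopy invariance of the index to identify the indexes at $t=0$ and $t=1$, namely those of $\BR^{\times}\times\Phi$ and of $\bar{e}\circ(\BR^{1}\times\BR^{\otimes}\times\BR^{2})\circ w$. The delicate ingredient is the equivalence-transfer from $\mG^{\otimes}$ to $\mG$ embodied in the displayed identity, which is the same payoff identity invoked in the previous admissibility proof and is precisely what makes the reduction of $H$ to its $t=0$ slice go through.
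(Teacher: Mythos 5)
Your proof is correct and follows essentially the same route as the paper's. Both arguments rely on the same three facts: that a fixed point of any time-slice of $H$ in $\cl(\mathcal{O}^{\d})$ has $\t_n = \t^{\otimes}_n * \t^0_n$ with $\t^{\otimes}_n$ equivalent to $\hat{\t}^{\otimes}_n$ (which places $\t$ in $V^{\d}$); that this equivalence makes the $\BR^{\times}$-component insensitive to the deformation $h_n(\t,t)=(1-t)\t_n + t[\hat{\t}^{\otimes}_n * q^{0}_n(\t_n)]$; and that the resulting pair $(\zeta^{n}_n,\t^0_{-n})$ is an equilibrium of $\mG^n$, so that \ref{P2} and the choice of $\d$ force $\t_{-n}\in U^{\d}_{-n}$. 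The only cosmetic difference is your packaging of the middle step as a collapse to the $t=0$ slice and a pointer to the previous lemma, whereas the paper instead reads off directly $\BR^{\times}_n(h_{-n}(\t_{-n},t)) = \BR^{\times}_n(\hat{\t}^{\otimes}_{-n}*\t^0_{-n}) = \BR^{n}_{n}(\t^0_{-n})$; these are the same computation. You also supply a bit more justification (the chain $\t\in\cl(V^{\d})\Rightarrow q^{\otimes}(\t)\in\cl(B^{\d})\Rightarrow q^{\otimes}(\t)\in R_1\times R_2$, and the displayed payoff identity) that the paper leaves implicit, which is a reasonable expansion rather than a deviation.
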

\begin{proof} 
Let $(\tau,\zeta) \in \cl(\mathcal{O}^{\d})$ satisfy $(\tau,\zeta)\in\BR^{\times}(h(\t, t)) \times \Phi(\t,\zeta)$ for some $t \in [0,1]$. 
Then $\t_n = \t^{\otimes}_n * \t^0_n$. 
First note that $\t \in V^{\d}$, since $\t_n \in \Phi_n(\t, \zeta)$ and the strategy $\t^{\otimes}_n$ is equivalent to $\hat{\t}^{\otimes}_n$. 
It follows $\zeta^n_n\in \BR^{\times}_n(h_{-n}(\t_{-n},t)) = \BR^{\times}_n(\hat{\t}^{\otimes}_{-n} * \t^0_{-n}) = \BR^{n}_{-n}(\t^0_{-n})$. 
On the other hand, $\t^0_{-n} \in \BR^{n}_{-n}(\zeta^n_n)$ so that $(\zeta^n_n, \t^0_{-n})$ is an equilibrium of $\mG^{n}$. 
By \ref{P2} and our choice of $\d$ we have $\mG^{n}_n(\zeta^n_n, \t^0_{-n}) < G_n(Q) - \d$. 
Therefore, $\mG_n(s^n_n, \t_{-n}) =\mG^n_n(s^n_n, \t^0_{-n}) < G_{n}(Q) - \d$ for every $s^n_n$. 
This proves $(\zeta, \tau) \in \mathcal{O}^{\d}$.
\end{proof}

To finish Step 1, let
\begin{equation*}
\hat{U}^{\d}_{-n} \equiv \Big\{ \t^0_{-n} \in \S^0_{-n} \mid\mG^n_n(s^n_n, \t^0_{-n}) < G_n(Q) - \delta\text{ for all } s^n_n\in S^n_n  \Big\} \text{ and }  
\hat{\mathcal{O}}^{\d} \equiv w^{-1}\Big[\S^1_1 \times \S^2_2 \times B^{\d} \times \hat{U}^{\d}\Big].
\end{equation*}

\begin{lemma}
$\hat{\mathcal{O}}^{\d}$ is admissible for $\bar{e} \circ (\BR^{1} \times \BR^{\otimes} \times \BR^2) \circ w$ and contains the same fixed points of $\bar{e} \circ (\BR^{1} \times \BR^{\otimes} \times \BR^2) \circ w$ as $\mathcal{O}^{\d}$.  
\end{lemma}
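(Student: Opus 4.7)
The plan is to verify separately that (i) $\hat{\mathcal{O}}^{\d}$ is admissible for $F\equiv\bar{e}\circ(\BR^{1}\times\BR^{\otimes}\times\BR^{2})\circ w$, and (ii) the fixed points of $F$ in $\hat{\mathcal{O}}^{\d}$ coincide with those of $F$ in $\mathcal{O}^{\d}$. The step that carries both parts---and the main conceptual point I expect---is the following payoff-equivalence observation. The rectangle $R_{1}\times R_{2}$ consists of all mixed strategy profiles of $\mG^{\otimes}$ inducing $Q$, so every element of $R_{-n}$ induces the same Kuhn behavior at each on-path information set of player $-n$; by perfect recall, the strategies $\tau^{\otimes}_{-n}*\tau^{0}_{-n}$ and $\hat{\tau}^{\otimes}_{-n}*\tau^{0}_{-n}$ are then realization-equivalent in $\mG$, so combining this with \eqref{payoffsexcluded} yields
\begin{equation*}
\mG_{n}(s^{n}_{n},\tau^{\otimes}_{-n}*\tau^{0}_{-n})=\mG_{n}(s^{n}_{n},\hat{\tau}^{\otimes}_{-n}*\tau^{0}_{-n})=\mG^{n}_{n}(s^{n}_{n},\tau^{0}_{-n}),
\end{equation*}
for every $s^{n}_{n}\in S^{n}_{n}$, $\tau^{\otimes}_{-n}\in R_{-n}$, and $\tau^{0}_{-n}\in\S^{0}_{-n}$.

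For admissibility I would take a fixed point $(\zeta,\tau)$ of $F$ in $\cl(\hat{\mathcal{O}}^{\d})$. The composition with $\bar{e}$ forces $\tau_{n}=\tau^{\otimes}_{n}*\tau^{0}_{n}$; the $\BR^{\otimes}$-coordinate of the fixed-point equation makes $\tau^{\otimes}$ an equilibrium of $\mG^{\otimes}$ in $\cl(B^{\d})$, and since $B^{\d}$ contains no other equilibrium of $\mG^{\otimes}$ we obtain $\tau^{\otimes}\in R\subseteq B^{\d}$. The $\BR^{n}$-coordinate makes $(\zeta^{n}_{n},\tau^{0}_{-n})$ an equilibrium of $\mG^{n}$ belonging to $K^{n}$ (the membership $\tau^{0}_{-n}\in K^{n}_{-n}$ follows from $\tau^{0}_{-n}\in\cl(\hat{U}^{\d}_{-n})$). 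By Assumption~\ref{P2} and the choice of $\d$, $\mG^{n}_{n}(\zeta^{n}_{n},\tau^{0}_{-n})<G_{n}(Q)-\d$, and hence $\mG^{n}_{n}(s^{n}_{n},\tau^{0}_{-n})<G_{n}(Q)-\d$ for every $s^{n}_{n}\in S^{n}_{n}$, i.e.\ $\tau^{0}_{-n}\in\hat{U}^{\d}_{-n}$. Therefore $(\zeta,\tau)\in\hat{\mathcal{O}}^{\d}$.

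For the coincidence of fixed-point sets, a fixed point of $F$ in either $\mathcal{O}^{\d}$ or $\hat{\mathcal{O}}^{\d}$ automatically has the structure $\tau_{n}=\tau^{\otimes}_{n}*\tau^{0}_{n}$ with $\tau^{\otimes}\in R$, by the same argument as above and the fact that $\tau^{\otimes}\in B^{\d}$ in both neighborhoods. The displayed payoff equivalence then turns the defining condition ``$\tau_{n}\in U^{\d}_{n}$'' of $\mathcal{O}^{\d}$ into the defining condition ``$\tau^{0}_{n}\in\hat{U}^{\d}_{n}$'' of $\hat{\mathcal{O}}^{\d}$, while ``$\tau^{\otimes}_{n}\in B^{\d}_{n}$'' is shared by both. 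Therefore the two neighborhoods contain exactly the same fixed points of $F$.

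The substantive work is concentrated in the payoff equivalence displayed above; once it is granted, the remaining arguments are a straightforward unpacking of the definitions together with Assumption~\ref{P2} and the choice of $\d$, just as in the preceding claims of Step~1.
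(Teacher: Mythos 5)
Your proof is correct and follows essentially the same route as the paper's. Both proofs extract the structure $\tau_n=\tau^{\otimes}_n*\tau^0_n$ from the composition with $\bar{e}$, identify $(\zeta^n_n,\tau^0_{-n})$ as an equilibrium of $\mG^n$ from the $\BR^n$-coordinate of the fixed-point equation, invoke Assumption~\ref{P2} and the choice of $\d$ to obtain the strict inequality $\mG^n_n(s^n_n,\tau^0_{-n})<G_n(Q)-\d$, and then use the realization-equivalence of $\tau^{\otimes}_{-n}$ and $\hat\tau^{\otimes}_{-n}$ to move between $\mG_n(s^n_n,\tau_{-n})$ and $\mG^n_n(s^n_n,\tau^0_{-n})$, thereby translating between the conditions defining $U^{\d}_n$ and $\hat{U}^{\d}_{-n}$.

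You fill in two small steps the paper leaves implicit, and both are worth having. First, you explicitly derive $\tau^{\otimes}\in R\subseteq B^{\d}$ from the fact that $\tau^{\otimes}$ is an equilibrium of $\mG^{\otimes}$ lying in $\cl(B^{\d})$, where the paper merely asserts $\tau^{\otimes}_n\in B^{\d}_n$; this is exactly the step that upgrades membership in the closure to membership in the open set, which is what admissibility requires. Second, you note that $\tau^0_{-n}\in\cl(\hat{U}^{\d}_{-n})\subseteq K^n_{-n}$, so the equilibrium $(\zeta^n_n,\tau^0_{-n})$ lies in the supporting polytope $K^n$, which is the hypothesis under which \ref{P2} and the choice of $\d$ apply; the paper uses this without flagging it. Your derivation of the payoff identity via the rectangle $R_1\times R_2$ and Kuhn's theorem is also a clean way to justify what the paper compresses into the phrase ``$\tau^{\otimes}_n$ is equivalent to $\hat\tau^{\otimes}_n$ in $\mG^{\otimes}$.''
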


\begin{proof} 
Let $(\tau, \zeta) \in \cl(\hat{\mathcal{O}}^{\d})$ be a fixed point of $\bar{e} \circ (\BR^1 \times \BR^{\otimes} \times \BR^2) \circ w$. 
Then $\t_n = \t^{\otimes}_n * \t^0_n$ and $\t^{\otimes}_n \in B^{\d}_n$, which implies that $\t^{\otimes}_n$ is equivalent to $\hat{\t}^{\otimes}_n$ in $\mG^{\otimes}$.
The profile $(\zeta^n_n, \t^0_{-n})$ is an equilibrium of $\mG^n$ so that $\mG^{n}_n(\zeta^n_n, \t^0_{-n}) < G_n(Q) - \d$ because of \ref{P2} and our choice of $\d$. Therefore, for every $s^n_n \in S^n_n$ we have $\mG^n_n(s^n_n, \t^0_n) < G_n(Q) - \d$. 
Hence $(\t,\zeta) \in \hat{\mathcal{O}}^{\d}$ which proves that $\hat{\mathcal{O}}^{\d}$ is admissible. 

If $(\t,\zeta) \in \hat{\mathcal{O}}^{\d}$ is a fixed point of $\bar{e} \circ (\BR^1 \times \BR^{\otimes} \times \BR^2) \circ w$ then $\t_n = \t^{\otimes}_n * \t^0_n$, so that $\t^{\otimes}_n \in B^{\d}_n$ and $\t^{\otimes}_n$ is equivalent to $\hat{\t}^{\otimes}_n$. 
Moreover, $(\zeta^n_n, \t^0_{-n})$ is an equilibrium of $\mG^n$ and $\mG_n(s^n_n, \t_{-n}) = \mG^n_n(s^n_n, \t^0_{-n}) < G_n(Q) - \d$, for every $s^n_n$. 
Therefore, $(\zeta, \t) \in \mathcal{O}^{\d}$. 
On the other hand, if $(\zeta,\t) \in \mathcal{O}^{\d}$ is a fixed point then again $\t_n = \t^{\otimes}_n * \t^0_n $ and $\t^{\otimes}_n \in B^{\d}_n$ so that $\t^{\otimes}_n$ is equivalent to $\hat{\t}^{\otimes}_n$ in $\mG^{\otimes}$. 
Therefore, $\mG^n_n(s^n_n, \t^0_n) = \mG_n(s^n_n, \t^0_n) < G_n(Q) - \d$ for every $s^n_n$. 
That is, $\t^0_n \in \hat{U}^{\d}_n$ and $(\zeta,\t) \in \hat{\mathcal{O}}^{\d}$.
\end{proof} 

Note now that Property \ref{I6} implies that $\bar{e} \circ (\BR^{1} \times\BR^{\otimes} \times \BR^2) \circ w\vert_{\hat{\mathcal{O}}^{\d}}$ has index zero. 
In turn, Property \ref{I5} implies that the index of $\BR^{1}\vert_{\S^1_1 \times \hat{U}^{\d}_2} \times \BR^{2}\vert_{\S^2_2 \times \hat{U}^{\d}_1} \times \BR^{\otimes}\vert_{B^{\d}}$ is zero. 
Given Proposition~\ref{nonzeroindex}, Property \ref{I4} implies that for some $n=1,2$, the index of $\BR^n\vert_{\S^n_n \times \hat{U}^{\d}_{-n}}$ is zero. 
Since $\BR^n$ is the best-reply correspondence of the excluded game $\mG^n$, the index of the supporting polytope in player $n$'s excluded game is zero.

\begin{example}\label{ex:B}
We continue with Example~\ref{ex:A}.
The ``\textit{Out}''-component induces supporting polytope $K^1=\D(\{L,R\})\times\big\{\a\ell+(1-\a)r\mid 0\leq \a\leq2/3\big\}$ in player 1's excluded game. 
Note that such a supporting polytope contains two equilibria, $(R,r)$ and the mixed $(\frac{3}{4}L+\frac{1}{4}R,\frac{1}{4}\ell+\frac{3}{4}r)$.
As anticipated in the Introduction, the former has index $+1$ while the latter has index $-1$.
Therefore, the supporting polytope has index 0.
\end{example}

\subsection{Step 2: Perturbing the excluded game so that it has no equilibria in the supporting polytope}\label{sec:perturbation}

From Step 1, assume without loss of generality that $K^1$ has index $0$.
Lemma~\ref{lm:excluded} below guarantees that, when that is the case, there is a game $\bar\mG^1$ equivalent to $\mG^1$ and a payoff perturbation of $\bar{\mG}^1$ such that, letting $\bar K^1$ be the set of strategies equivalent to $K^1$ in $\bar \mG^1$, the resulting perturbed game does not have an equilibrium in $\bar{K}^1$.
Such a perturbed game also satisfies some additional properties that are used in Section~\ref{sec:step3}.

To state the Lemma, define $\S^1_2\equiv\Delta(S^0_2)$ so that we can write the set of mixed strategy profiles in $\mG^1$ as $\S^1=\S^1_1\times\S^1_2$.
Define also $K^{1,\eta}_2 \equiv \big\{ \s^1_2 \in \S^1_2 \mid \forall s^1_1 \in S^1_1, \mG^1_1(s^1_1, \s^1_2) \leq G_1(Q) - \eta \big\}$. 

\begin{lemma}\label{lm:excluded}
There exists a game $\bar\mG^1 = (\bar S^1_1, \bar S^0_2, \bar{\mG}^1_1, \bar{\mG}^1_2)$ equivalent to $\mG^1$ such that:
\begin{enumerate}
\item \label{1} Player $1$'s pure strategy set $\bar{S}^1_1$ is the collection of vertices of a polyhedral refinement $\mathcal{P}_1$ of a simplicial subdivision of $\bar{\S}^1_1$.
\item \label{2} Player 2's pure strategy set $\bar{S}^0_2$ is the collection of vertices of a polyhedral refinement $\mathcal{P}_2$ of $\S^1_2$ for which no polyhedron of $\mathcal{S}_2$ that intersects $\tilde{\partial}K_2^1$ has a point in common with $K_2^{1,\eta}$.
\item\label{part:payoffs} 
There exists $\delta>0$, a $\d$-perturbation $\bar{\mG}^{1,\d}$ of $\bar{\mG}^1$ and, for each $\e>0$, an $\e$-perturbation  $\bar{\mG}^{1,\d,\e}$ of $\bar{\mG}^{1,\d}$ such that for $n=1,2$:
\begin{align*}
\lim_{\e\to 0}\bar\mG^{1, \d,\e}_n(s_1,s_2)&=\bar\mG^{1,\d}_n(s_1,s_2)=\bar\mG^1_n(s_1,s_2)\text{ if }s_2\not\in K_2^{1,\eta},\text{ and}\\
\lim_{\e\to 0}\bar\mG^{1, \d, \e}_n(s_1,s_2)&=\bar\mG^{1,\d}_n(s_1,s_2)\geq\bar\mG^{1}_n(s_1,s_2)\text{ if }s_2\in K_2^{1,2\eta}.
\end{align*}
\item\label{part:support} For each $n=1,2$ and $\e>0$ small enough, if mixed strategies $\s_n\neq \tilde{\s}_n$ are equivalent in $\bar\mG^{1,\d}$ and $\s_n$'s support is within the vertices of a polyhedron of $\mathcal{P}_n$ whereas $\tilde \s_n$'s is not, then, for every mixed strategy $\s_{-n}$, we have $\bar\mG^{1,\d,\e}_n(\s_n,\s_{-n})>\bar\mG^{1,\d,\e}_n(\tilde{\s}_n,\s_{-n})$.
\item\label{part:noeq} For $\e>0$ small enough, the game $\bar\mG^{1, \d,\e}=(\bar{S}^1_1,\bar{S}^0_2,\bar\mG_1^{1,\d,\e},\bar\mG_2^{1,\d,\e})$ does not have an equilibrium in $\bar{K}^1$.
\end{enumerate}
\end{lemma}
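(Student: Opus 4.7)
The plan is to construct $\bar{\mathbb{G}}^1$ as a refinement of $\mathbb{G}^1$ whose pure strategies are the vertices of carefully chosen polyhedral subdivisions of the two strategy simplices, and then to exploit the extra payoff parameters introduced by these auxiliary pure strategies to convert the topological fact ``$\bar{K}^1$ has zero index'' into an explicit payoff perturbation whose best-reply correspondence has no fixed point in $\bar{K}^1$.

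For items (1) and (2), I would first observe that $\tilde{\partial}K^1_2$ and $K^{1,\eta}_2$ are disjoint compact subsets of $\Sigma^1_2$ lying at positive Euclidean distance: a point of $\tilde{\partial}K^1_2$ admits some $s^1_1 \in S^1_1$ with $\mathbb{G}^1_1(s^1_1,\sigma^1_2)=G_1(Q)$, while any point of $K^{1,\eta}_2$ satisfies $\mathbb{G}^1_1(s^1_1,\sigma^1_2)\leq G_1(Q)-\eta$ for all such $s^1_1$. A polyhedral refinement $\mathcal{P}_2$ of $\Sigma^1_2$ whose mesh is smaller than this separation, and which (for convenience) incorporates the hyperplanes $\{\mathbb{G}^1_1(s^1_1,\cdot)=G_1(Q)\}$ among its faces, then satisfies (2). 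For $\mathcal{P}_1$ one takes any polyhedral refinement of a simplicial subdivision of $\Sigma^1_1$. Declaring the vertices of $\mathcal{P}_1,\mathcal{P}_2$ to be the pure strategies $\bar{S}^1_1$ and $\bar{S}^0_2$, with payoffs obtained by evaluating the bilinear extension of $\mathbb{G}^1$ at the corresponding convex combinations of originals, yields the equivalent game $\bar{\mathbb{G}}^1$ demanded by (1) and (2).

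For items (3) and (5), Property \ref{I2} gives that $\bar{K}^1$ has index zero in $\bar{\mathbb{G}}^1$, so by the topological content of the index there is a continuous self-map of $\bar{\Sigma}^1$ approximating $\BR^{\bar{\mathbb{G}}^1}$ arbitrarily well and possessing no fixed point in $\bar{K}^1$. I would then realize such a map as the best-reply correspondence of a payoff perturbation $\bar{\mathbb{G}}^{1,\delta}$. The point is that, with $\mathcal{P}_n$ sufficiently fine, having one free payoff parameter per vertex pair provides enough flexibility for the best-reply correspondences of payoff perturbations to approximate any continuous map close to $\BR^{\bar{\mathbb{G}}^1}$. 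Restricting the perturbation to vertices lying in cells of $\mathcal{P}_2$ that meet $\tilde{\partial}K^1_2$ — which by (2) avoid $K^{1,\eta}_2$ entirely — ensures that $\bar{\mathbb{G}}^{1,\delta}$ agrees with $\bar{\mathbb{G}}^1$ for $\sigma_2 \notin K^{1,\eta}_2$, and, with the right sign convention, delivers the weak inequality on $K^{1,2\eta}_2$ required in (3).

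For (4), I would superimpose a tie-breaking $\epsilon$-perturbation $\bar{\mathbb{G}}^{1,\delta,\epsilon}$ that strictly rewards pure strategies whose support lies in the vertices of a single cell of $\mathcal{P}_n$ over equivalent representations spanning several cells. A concrete way is to add to the payoff at each vertex a small amount chosen so that, restricted to each cell, the perturbation is strictly concave (for instance a quadratic form vanishing on the affine hull of the cell's boundary). Strict concavity makes any mixture spanning cells strictly dominated by its single-cell equivalent, the bump vanishes as $\epsilon\to 0$ so that (3) is preserved, and upper hemicontinuity of the equilibrium correspondence ensures that the no-equilibrium conclusion of (5) survives for small $\epsilon$. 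The main obstacle I expect is the realization step in (3): upgrading the abstract existence of a fixed-point-free approximation of $\BR^{\bar{\mathbb{G}}^1}$ to a payoff perturbation that is simultaneously localized to cells meeting $\tilde{\partial}K^1_2$ and signed correctly on $K^{1,2\eta}_2$. Payoff perturbations induce only a constrained subclass of best-reply perturbations, and exhibiting enough of them requires using the granularity of $\mathcal{P}_1,\mathcal{P}_2$ in tandem with the rectangular form $\bar{K}^1 = \bar{\Sigma}^1_1\times \bar{K}^1_2$ to their fullest.
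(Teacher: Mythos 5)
Your outline starts from a claim that cannot be true in general, and this infects the rest of the argument. You write that, because $\bar K^1$ has index zero, ``there is a continuous self-map of $\bar\S^1$ approximating $\BR^{\bar\mG^1}$ arbitrarily well and possessing no fixed point in $\bar K^1$.'' But the supporting polytope $K^1$ may contain several \emph{equilibrium components with nonzero index} whose indices cancel (the running example in the paper has exactly this: a strict equilibrium of index $+1$ and a mixed equilibrium of index $-1$ inside $K^1$). Any map uniformly close to $\BR^{\bar\mG^1}$ on all of $\bar\S^1$ must have a fixed point near each nonzero-index component, hence a fixed point inside $\bar K^1$. What \emph{is} true — and what the paper proves in Lemma~\ref{lm:function} — is that one can find a map $f^0$ that agrees with a near-best-reply map on $\S^1\setminus\mathrm{int}(K^{1,2\eta})$ and has no fixed point in $K^{1,2\eta}$, at the cost of being far from $\BR$ \emph{inside} $K^{1,2\eta}$; this requires the Hopf Extension Theorem applied to the degree-zero displacement on $(K^{1,2\eta},\partial K^{1,2\eta})$. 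This distinction is the whole reason the paper cannot simply cite \citetalias{GW2005} Theorem~3, and it is why the statement of the lemma only asserts $\bar\mG^{1,\d}=\bar\mG^1$ for $s_2\notin K^{1,\eta}_2$ but allows a possibly large (merely signed) gap on $K^{1,2\eta}_2$.

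Two further steps that your sketch leaves as ``flexibility of payoff parameters'' are in fact the technical core and need separate arguments. First, a map $h$ is realizable (even approximately) as the best-reply of a payoff perturbation only if, for each player $n$, the coordinate $h_n$ depends on $\s_{-n}$ alone; the $f^0$ produced by Hopf extension has no such structure. The paper's Part~I handles this by passing to the ``squared'' game $\mG^*$ with strategy sets $S^*_n=S^1_n\times S^1_{n+1}$ and building $h^*$ so each coordinate reads off only the opponent's marginal. Second, Part~\eqref{part:payoffs} requires the perturbation on $K^{1,2\eta}_2$ to satisfy $\bar\mG^{1,\d}_n\geq\bar\mG^1_n$; ``choosing the right sign convention'' is not available once you have fixed $f^0$, since Hopf extension only constrains the displacement to be nonvanishing, not to point in a particular halfspace. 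The paper obtains nonnegativity by constructing the perturbation $g$ in Part~II explicitly as a bonus $\pi_{t_n}(\s_{-n})[v_n(\s_{-n})-\mG^1_n(t_n,\s_{-n})+M\eta_0]\geq 0$ that rewards near-best-replies, rather than by trying to read off signs from the fixed-point-free map. Your idea for item~\eqref{part:support} (a strictly concave tie-breaking term on each cell, scaled by $\e$) is essentially the same as the paper's use of the convex piecewise-linear $\g_n$ from Appendix~B of \citetalias{GW2005}, and that part of your plan is sound.
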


\begin{proof}
See Appendix~\ref{sec:proofLemmaExcluded}. 
\end{proof}

The construction of $\bar{\mG}^1$ in Lemma \ref{lm:excluded} follows closely the construction presented in Theorem 3 in \citetalias{GW2005}. 
However, there are a few differences stemming from the fact that the supporting polytope may contain more than one equilibrium component of the excluded game. Parts \eqref{1} and \eqref{2} simply record that the strategies obtained in $\bar{\mG}^1$ are duplicates originating from vertices of a polyhedral subdivision of the strategy sets. This is needed so that \eqref{part:payoffs} and \eqref{part:support} can be meaningfully stated and used in Section~\ref{sec:step3}. Chiefly, in Part \eqref{part:payoffs}, $s_2$ can be viewed as a vertex of the polyhedral subdivision $\mathcal{P}_2$ of $\S^1_2$; and \eqref{part:support} refers explicitly to the vertices of the polyhedral subdivision. The main difference with the construction in \citetalias{GW2005} is \eqref{part:payoffs}. 
Within $\bar{K}^{1,2\eta}$ the payoff perturbations are not necessarily arbitrarily small, i.e., the inequality may be strict.
This contrasts with \citetalias{GW2005} where all payoff perturbations can be made as small as desired.%
\footnote{~This is a critical difference.  
As in \citetalias{GW2005}, we obtain the payoff-perturbations from a map that does not have fixed points on $K^1$. 
In \citetalias{GW2005} such a map is chosen to be ``close'' to the best-reply correspondence at every point of the domain.
However, in our setting, it is possible for $K^1$ to contain many equilibrium components with nonzero index.
In those cases, a map without fixed point on $K^1$ can be  ``close'' to the best-reply correspondence in the complement of $K^1$ but ``far'' from it on $K^1$.
(See Figure~\ref{fig:perturbation} for an illustration. 
The dotted line is far from the solid line in the interval $(x_1,x_2)$ so that it does not cross the $45^\circ$ line.)
This translates to ``large'' payoff perturbations in $K^{1}$.}

\begin{example}\label{ex:C}
In Example~\ref{ex:A} we established that the supporting polytope $K^1=\D(\{L,R\})\times\big\{\a\ell+(1-\a)r\mid 0\leq \a\leq2/3\big\}$ has index 0.
Therefore, we can find a game equivalent to the subgame in Figure~\ref{Entry} and a perturbation so that the perturbed game has no equilibrium in $K^1$. Since such a subgame is relatively simple, we can accomplish this by directly perturbing it to the following game:
\begin{center}

\setlength{\extrarowheight}{.1em}
\begin{tabular}{c|c|c|}
\multicolumn{1}{c}{}&\multicolumn{1}{c}{$\bar{\ell}$}&\multicolumn{1}{c}{$\bar{r}$}  \\ \cline{2-3}
$\bar{L}$&$3,1$&$2,0$ \\ \cline{2-3}
$\bar{R}$&$0,0$&$1,3$ \\ \cline{2-3}
\end{tabular}
\vspace{1em}
\end{center}
\end{example}

\subsection{Step 3: The equivalent game and its perturbation}\label{sec:step3}
We construct a game $\bar{G}$ equivalent to $G$ and a perturbation of $\bar{G}$ that has no equilibrium in a neighborhood of the equivalent component of equilibria $\bar{K}$.
The main idea is to perturb $G$ so that, with vanishing probability, player 1 is forced to play the game $\bar{\mG}^{1,\delta,\e}$ that has been defined in Step 2 and that has no equilibrium in $\bar{K}^1$. We show that, in this perturbed game, players' behavior induce an equilibrium of $\bar{\mG}^{1,\delta,\e}$ so that player 2's strategy cannot belong to the supporting polytope. This, in turn, implies that the perturbed game of $\bar{G}$ cannot have an equilibrium in the fixed original neighborhood of $\bar{K}$.

Henceforth, extend the original game $\mG$ to the equivalent game $\bar\mG$ in which player 1's strategy set is $S_1\cup\bar{S}^1_1$ and player 2's strategy set is $S_2\cup \bar{S}_2$, where
\begin{equation*}
\bar{S}_2\equiv \Big\{\hat{\tau}^{\otimes}_2 * \bar{s}^0_2\mid \bar{s}^0_2\in\bar{S}^0_2\Big\}.
\end{equation*}
The extension is by equivalence, i.e., by viewing each strategy in $\bar{S}^1_1$ as a duplicate of a strategy in $\S^1_1$, and similarly, viewing each strategy in $\bar{S}_2$ as a duplicate of a strategy in $\S_2$. Let $\bar\mG_1$ and $\bar\mG_2$ be the corresponding bilinear extensions of $\mG_1$ and $\mG_2$ to $\D(S_1 \cup \bar{S}^1_1) \times \D(S_2\cup \bar{S}_2)$.

For a given $\e>0$, we define the extensive-form game $\bar{G}^{\e}$ whose normal-form is $\bar \mG^\e$. 
The extensive-form is represented in Figure~\ref{Auxgame2}.
Nature moves first and chooses between $\kappa_1$ with probability $(1-\e)$ and $\kappa_2$ with probability $\e$. 
Player 1 observes Nature's choices but player 2 does not.
After $\kappa_1$, player 1 chooses from $(S_1 \setminus S^1_1) \cup \bar{S}^1_1$.
In turn, if Nature chooses $\kappa_2$, player 1 has to choose an action from the set $\bar{S}^1_1$.
Player $2$ observes neither Nature's move nor player $1$'s moves, and selects from $S_2\cup\bar{S}_2$.
In the payoff assignment of Figure~\ref{Auxgame2}, the symbol $\mathbb{1}_{[X]}$ represents the indicator function that equals 1 if $X$ is true and 0 otherwise.
Furthermore, the value of $B>0$ is chosen sufficiently large such that $-B<\min_{\bar{s}_2\in \bar{S}_2}\bar\mG^{1, \d,\varepsilon}_2(\bar{s}^1_1, \bar{s}_2)$ for every $\bar{s}^1_1\in\bar{S}^1_1$ and $\e>0$.


\begin{figure}
\caption{Game $\bar{G}^{\e}$}\label{Auxgame2}

\vspace{1em}
\begin{istgame}[scale=0.8, every node/.style={scale=0.8}]
\xtdistance{10mm}{70mm}
\istroot(n){Nature}
\istb{P(\kappa_1)=1-\varepsilon}[al] 
\istb{P(\kappa_2)=\varepsilon}[ar]
\endist

\xtdistance{20mm}{15mm}
\istroot(11)(n-1){1}
\istb
\istb[white]{\dots}[black]
\istb[thick]
\istb
\endist
\xtActionLabel(11)(11-3){s_1 \in (S_1 \setminus S^1_1)\cup\bar{S}^1_1}[xshift=-6pt,yshift=-12pt, fill=white]

\istroot(12)(n-2){1}
\istb
\istb[thick]
\istb[white]{\dots}[black]
\istb
\endist
\xtActionLabel(12)(12-2){\bar{s}^1_1 \in \bar{S}^1_1}[xshift=-2pt, yshift=-12pt,fill=white]

\xtInfoset[dashed](11-1)(12-4){2}[xshift=0mm, above]

\istroot(2a)(11-1){}
\istb[white]{\dots}[black]
\endist

\istroot(2b)(11-4){}
\istb[white]{\dots}[black]
\endist

\istroot(2e)(12-1){}
\istb[white]{\dots}[black]
\endist

\istroot(2f)(12-4){}
\istb[white]{\dots}[black]
\endist

\istroot(22)(11-3){}
\istb
\istb[white]{\dots}[black]
\istb<level distance= 22mm>[thick]{}{\genfrac{}{}{0pt}{0}{\bar\mG_1(s_1, s_2),}{\bar\mG_2(s_1, s_2)}}
\istb
\endist

\istroot(23)(12-2){}
\istb
\istb[white]{\dots}[black]
\istb<level distance= 22mm>[thick]{}{\genfrac{}{}{0pt}{0}{\mathbb{1}_{[s_2\in \bar{S}_2]}\bar \mG^{1, \d,\e}_1(\bar s^1_1, q^0_2(s_2)),}{\mathbb{1}_{[s_2\in\bar{S}_2]}\bar \mG^{1, \d,\e}_2(\bar s^1_1, q^0_2(s_2)) - \mathbb{1}_{[s_2\in S_2]}B}}
\istb
\endist

\xtActionLabel(22)(22-3){s_2 \in S_2 \cup \bar{S}_2}[xshift=-7pt, yshift=-14pt, fill=white]

\end{istgame}
\end{figure}

When $\varepsilon=0$, we write the game $\bar{G}^0$ simply as $\bar{G}$ and note that it is equivalent to the original game $G$. 
Let $\bar{K}$ be the equilibrium component equivalent to $K$ in $\bar{G}$ and let $\bar{\mathcal{O}}$ be a neighborhood of $\bar{K}$ in the mixed strategy set of $\bar{G}^{\e}$ whose closure contains no equilibrium of $\bar{G}^{\e}$ in its boundary. Since $\bar{\mathcal{O}}$ contains no equilibria of $\bar{G}$ in its boundary, this is also the case for the game $\bar{G}^{\e}$ for sufficiently small $\e>0$.
We conclude the proof of Theorem \ref{mainthm} by proving that for $\e>0$ small enough there is no equilibrium of $\bar{G}^{\e}$ in $\bar{\mathcal{O}}$.

Similarly to Step 1, we work with behavioral strategies of game $\bar{G}^{\e}$.
A behavioral strategy profile is a member of $\D( (S_1 \setminus S^1_1) \cup\bar{S}^1_1)\times\bar{\S}^1_1\times\D(S_2\cup \bar{S}_2)$.
Suppose to the contrary that there exists a sequence $(\e_k)_{k \in \mathbb{N}}$ of positive numbers converging to $0$ and a sequence $(\sigma^k)_{k \in \mathbb{N}} \subset \bar{\mathcal{O}}$ converging to $\s\in \bar{K}$ such that for every $k$ the profile $\sigma^k$ is an equilibrium of $\bar{\mG}^{\varepsilon_k}$. 
Passing to a subsequence if necessary, assume $(\sigma_1^k,\sigma_2^k)=(\tau^{k}_1,\bar\zeta^{1,k}_1,\s^k_2)\to (\t_1,\bar\zeta^1_1,\s_2)$.
Write $\tau^{k}_1\equiv(1-\alpha_k)\tau^{\times,k}_1+\alpha_k\bar\tau^{1,k}_1$ where $\tau^{\times,k}_1\in\D(S_1\setminus\bar{S}^1_1)$, $\bar\tau^{1,k}_1\in\D(\bar{S}^1_1)$ and $\alpha_k\in[0,1]$.
Similarly, write $\s^k_2 \equiv (1-\beta_k) \bar{\t}^{k}_2 + \beta_k {\t}^k_2$, where $\bar{\t}^{k}_2 \in \Delta(\bar{S}_2)$, ${\t}^k_2\in \S_2$, and $\beta_k\in[0,1]$.
Let $\varphi^k \equiv \e_k[\e_k + (1-\e_k) \alpha_k]^{-1}$.
Again, passing to the corresponding subsequence if necessary, assume $(\t^{\times,k}_1,\bar\t^{1,k}_1)\to(\t^{\times}_1,\bar\t^1_1)$, $(\bar\t^k_2,\t^k_2)\to(\bar\t_2,\t_2)$, $\b^k\to\b$ and $\varphi^k \to \varphi$.

\begin{claim}\label{auxLemma3}
For $k$ large enough, $\alpha_k =0$.  
\end{claim}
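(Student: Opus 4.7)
I would argue by contradiction, supposing that $\alpha_k>0$ along a subsequence of indices (and restricting to this subsequence). For each such $k$, any $\bar{s}_1^1 \in \supp(\bar{\tau}_1^{1,k})$ is a best reply to $\sigma_2^k$ at the information set reached after $\kappa_1$, so
\begin{equation*}
\bar{\mathbb{G}}_1(\bar{s}_1^1,\sigma_2^k)\;\geq\;\bar{\mathbb{G}}_1(s_1,\sigma_2^k)\qquad\text{for all }s_1\in(S_1\setminus S_1^1)\cup\bar{S}_1^1.
\end{equation*}
Since $\bar{s}_1^1$ was introduced as a duplicate in $\bar{\mathbb{G}}$ of the observable deviation $s_1^1\in S_1^1$, this inequality transfers by equivalence to $\mathbb{G}_1(s_1^1,\sigma_2^k)\geq \mathbb{G}_1(s_1,\sigma_2^k)$. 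Passing to the limit gives $\mathbb{G}_1(s_1^1,\sigma_2)\geq G_1(Q)$; combined with the fact that $\sigma_2\in\bar{K}_2$ forces $\mathbb{G}_1(s_1^1,\sigma_2)\leq G_1(Q)$, equality must hold, so $q_2^0(\sigma_2)$ lies in $\tilde{\partial}K_2^1$ (via the identification between $\bar{G}$ and $G$).

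The next step would be to exploit the perturbation and Lemma \ref{lm:excluded}\eqref{part:noeq}. The $-B$ penalty in the $\kappa_2$-payoff ensures $\beta_k\to 0$, so that $\sigma_2^k$ concentrates on $\bar{S}_2$ and $q_2^0(\bar{\tau}_2^k)\to q_2^0(\sigma_2)\in\tilde{\partial}K_2^1\subset \bar{K}_2^1$. Moreover, $\bar{\zeta}_1^{1,k}$ is (by definition of the $\kappa_2$-payoff) a best reply of player 1 in $\bar{\mathbb{G}}^{1,\delta,\varepsilon_k}$ against $q_2^0(\bar{\tau}_2^k)$. I would then piece together player 2's best-reply condition in $\bar{G}^{\varepsilon_k}$, carefully splitting it into its $\kappa_1$- and $\kappa_2$-contributions, to conclude that in the limit $q_2^0(\bar{\tau}_2)$ is itself a best reply in $\bar{\mathbb{G}}^{1,\delta,\varepsilon_k}$ against $\bar{\zeta}_1^{1,k}$ (for large $k$). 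The resulting profile $(\bar{\zeta}_1^{1,k},q_2^0(\bar{\tau}_2^k))$ then constitutes an equilibrium of $\bar{\mathbb{G}}^{1,\delta,\varepsilon_k}$ lying inside $\bar{K}^1$, contradicting Lemma \ref{lm:excluded}\eqref{part:noeq}.

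The main obstacle I anticipate is the second half: reconciling best replies across the two information sets of player 1 with those of player 2 in $\bar{G}^{\varepsilon_k}$ and then projecting back to the excluded game $\bar{\mathbb{G}}^{1,\delta,\varepsilon_k}$. The scales $\alpha_k,\beta_k,\varepsilon_k$ all vanish and the $\kappa_1$-term in player~2's payoff introduces additional indifferences among the $\bar{s}_2^0$ when $\alpha_k>0$; it is likely necessary to pass to a further subsequence where the relevant ratios of $\alpha_k,\beta_k,\varepsilon_k$ converge, and then use properties \eqref{part:payoffs} and \eqref{part:support} of Lemma \ref{lm:excluded} to ensure that the indifferences in $\bar{G}^{\varepsilon_k}$ translate faithfully into equilibrium conditions of $\bar{\mathbb{G}}^{1,\delta,\varepsilon_k}$, yielding the contradiction.
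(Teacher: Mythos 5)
Your opening steps are sound and parallel the paper's: you correctly identify that $q_2^0(\sigma_2)\in\tilde{\partial}K_2^1$ (the paper proves this contrapositively, noting that if $\sigma_2^0\in K_2^1\setminus\tilde{\partial}K_2^1$ then $S_1^1$ is strictly inferior and $\alpha_k=0$ for large $k$), and you correctly note that the $-B$ penalty forces $\beta_k\to 0$ (though this also requires the observation that an equivalent $\bar{S}_2$-duplicate exists, so that swapping out the $S_2$-part raises the $\kappa_2$-payoff while leaving the $\kappa_1$-payoff unchanged).

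The gap is in the second half, and it is substantive. When $\alpha_k>0$, player 2's objective restricted to her $\bar{S}_2$-actions is
\begin{equation*}
\varphi^k\,\bar{\mathbb{G}}_2^{1,\delta,\varepsilon_k}(\bar{\zeta}_1^{1,k},\,\cdot\,) \;+\; (1-\varphi^k)\,\bar{\mathbb{G}}_2^1(\bar{\tau}_1^{1,k},\,\cdot\,),
\end{equation*}
where $\varphi^k=\varepsilon_k[\varepsilon_k+(1-\varepsilon_k)\alpha_k]^{-1}$. The $\kappa_1$-branch (weight $1-\varphi^k$) uses \emph{unperturbed} payoffs and faces $\bar{\tau}_1^{1,k}$, not $\bar{\zeta}_1^{1,k}$. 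Since $\alpha_k>0$ need not be $o(\varepsilon_k)$, the limit $\varphi$ can be strictly less than $1$, and player~2's best reply is against the mixture $\varphi\bar{\zeta}_1^1+(1-\varphi)\bar{\tau}_1^1$ in the unperturbed game, not against $\bar{\zeta}_1^{1,k}$ alone in $\bar{\mathbb{G}}^{1,\delta,\varepsilon_k}$. Properties~\eqref{part:payoffs} and \eqref{part:support} of Lemma~\ref{lm:excluded} tell you the perturbed and unperturbed payoffs agree on $\bar{\tau}_2^0$'s support as $\varepsilon_k\to 0$, but they cannot remove the unwanted player-1 component $\bar{\tau}_1^{1,k}$ from the optimization. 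Consequently you cannot conclude that $(\bar{\zeta}_1^{1,k},q_2^0(\bar{\tau}_2^k))$ is an equilibrium of $\bar{\mathbb{G}}^{1,\delta,\varepsilon_k}$, and the appeal to Lemma~\ref{lm:excluded}\eqref{part:noeq} does not get off the ground. (Note also that the paper never claims the perturbed-game equilibrium lies \emph{in} $\bar{K}^1$; the final paragraph of Section~\ref{sec:step3}, handling the $\alpha_k=0$ regime, argues it lies in the complement.)

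The paper's contradiction in Claim~\ref{auxLemma3} is therefore of a different kind. Because $\bar{\tau}_2^0\in\tilde{\partial}K_2^1$ (disjoint from $K_2^{1,\eta}$) lies in a polyhedron whose vertices are outside $\bar{K}_2^{1,\eta}$, Lemma~\ref{lm:excluded}\eqref{part:payoffs} shows the perturbations vanish in the limit on its support, so both $\bar{\zeta}_1^1$ and $\bar{\tau}_1^1$ are best replies against $\bar{\tau}_2^0$ in the \emph{unperturbed} excluded game $\bar{\mathbb{G}}^1$, hence so is $\varphi\bar{\zeta}_1^1+(1-\varphi)\bar{\tau}_1^1$. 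But Assumption~\ref{P2} rules out any equilibrium of $\mathbb{G}^1$ on $\tilde{\partial}K_2^1$ (those would give player~1 exactly $G_1(Q)$), so player~2 must have a profitable deviation $\tilde{\tau}_2^0$ in $\bar{\mathbb{G}}^1$. The paper then translates this deviation back into $\bar{G}^{\varepsilon_k}$ for $k$ large, contradicting equilibrium of $\sigma_2^k$. This is the step you need to reconstruct: the contradiction comes from Assumption~\ref{P2} applied to the unperturbed excluded game, not from Lemma~\ref{lm:excluded}\eqref{part:noeq} applied to the perturbed one.
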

\begin{proof}
To the contrary, suppose there is a subsequence of $(\s^k)_{k \in \mathbb{N}}$ for which the corresponding $\alpha_k$'s are all strictly positive. 
From Lemma~\ref{lm:kuhn} there is a product strategy $\s^{\otimes}_2*\s^0_2$ in $\mG$ that is equivalent to $\s_2$ in $\bar \mG$.
First observe $\s^0_2\in\tilde{\partial}K^1_{2}$.
Otherwise, $\s^0_2 \in K^1_2 \setminus \tilde{\partial}K^1_{2}$ which, for $k$ sufficiently large,  implies $\alpha_k=0$ since $\t^k_1$ is an equilibrium strategy.
Player 2's expected payoff against $\s^k_1$ in $\bar{G}^{\e_k}$ for a typical strategy $y_2 = (1-{b})\bar{x}_2 + {b}{x}_2$, with $\bar{x}_2 \in \Delta(\bar{S}_2)$, ${x}_2 \in \S_2$, and ${b} \in [0,1]$ is

\begin{multline}\label{expectedplayer2}
[\varepsilon_k + (1-\varepsilon	_k)\alpha_k]
\Bigg(
\varphi^k
\Big[
(1-{b})\bar\mG^{1, \d,\e_k}_2(\bar\zeta^{1,k}_1, q^0_2(\bar{x}_2)) - {b}B
\Big] +
(1-\varphi^k)
\bar\mG_2(\bar\tau^{1,k}_1, y_2)
\Bigg)+\\
(1-\varepsilon_k)(1-\alpha_k)\bar\mG_2(\tau^{\times, k}_1, y_2).
\end{multline}

Player 2's equilibrium strategy $\s^k_2$ maximizes \eqref{expectedplayer2}.
The limit strategy is $\s_2 = (1-\b)\bar \t_2 + \b \t_2$.
For $k$ sufficiently large, $\s_2$ also maximizes \eqref{expectedplayer2} because $\supp(\s_2) \subseteq \supp(\s^k_2)$. 
Denote by $A_2^\otimes$ the subset of strategies $\sigma'_2\in\Delta((S_2 \setminus S^2_2)\cup\bar{S}_2)$ for which $q_2^\otimes(\s'_2)$ is equivalent to $\hat{\t}^\otimes_2$ in game $\mG^{\otimes}$. Since the payoff to player 2 from all strategies in $A^{\otimes}_2$ against $\t^{\times,k}_1$ in $\mG$ is the same,  
%
%
\begin{equation}\label{eq:optimal2}
\s_2\in\arg\max_{y_2\in A^{\otimes}_2}
\varphi^k 
\Big[ (1-b)  \bar{\mathbb{G}}^{1,\e_k}_2(\bar\zeta^{1,k}_1, q^0_2(\bar{x}_2)) - b B\Big] +
(1-\varphi^k)
\bar\mG_2(\bar\tau^{1,k}_1, y_2).
\end{equation}

Now, $q^{\otimes}_2(\s_2)$ is equivalent to $\hat{\t}^{\otimes}_2$ in $\mG^{\otimes}$ which implies that $q^{\otimes}_2(\t_2)$ is also equivalent to $\hat{\t}^{\otimes}_2$ in $\mG^{\otimes}$, so that $\t_2 \in A^{\otimes}_2$. 
Therefore, there exists a product strategy $\hat{\t}^{\otimes}_2 * \bar{\t}^0_2$ which is equivalent to $\t_2$ in $\mG$. Hence, $\hat{\t}^{\otimes}_2 * \bar{\t}^0_2 \in \D(\bar{S}_2)$ and 
\begin{align}
\bar \mG_2(\bar \t^{1,k}_1, (1-\b)\bar \t_2 + \b(\hat \t^{\otimes}_2 * \bar \t^0_2)) &= \bar \mG_2(\bar \t^{1,k}_1, \s_2),\\
\bar \mG^{1,\d,\e_k}_2(\bar\zeta^{1,k}_1,(1-\b)\bar \t_2 + \b(\hat \t^{\otimes}_2 * \bar \t^0_2)) &> (1-\b)  \bar{\mathbb{G}}^{1, \d, \e_k}_2(\bar\zeta^{1,k}_1, q^0_2(\bar{\t}_2)) - \b B.
\end{align}

Therefore, it must be that $\b=0$, so $\s_2 = \hat \t^{\otimes}_2 * \bar \t^0_2 \in \D(\bar S_2)$. Furthermore, $\bar\t^0_2$'s support must be within the set of vertices of a polyhedron in $\P_2$, otherwise player 2 can do strictly better by Lemma~\ref{lm:excluded}~\eqref{part:support}. Hence, every $\bar{s}^0_2\in\bar{S}^0_2$ in $\bar\t^{0}_2$'s support satisfies $\bar{s}^0_2\not\in\bar{K}_2^{1,\eta}$ (because of~\eqref{2} in Lemma \ref{lm:excluded}). On the other hand, both $\bar\t^{1,k}_1$ and $\bar\zeta^{1,k}_1$ must be optimal against $\sigma_2^k$ for every $k$. The last two facts together with Lemma~\ref{lm:excluded}~\eqref{part:payoffs} imply that at the limit $\e_k \to 0$, for $k \to +\infty$, we have 
%
%
%
%
%
\begin{equation*}
\bar\z^{1}_1\in\arg\max_{x_1\in\D(\bar{S}_1^1)} \bar\mG^{1}_1(x_1,\bar\tau^0_2)
\qquad\text{and}\qquad
\bar\t^{1}_1\in\arg\max_{x_1\in\D(\bar{S}_1^1)} \bar\mG_1(x_1,\bar\t_2).
\end{equation*}
Since $\bar\t_2=\hat\t^{\otimes}_2 * \bar\t^0_2$ strategy $\bar\t^1_1$ is, in fact, a best-reply against $\bar\t^0_2$ in the excluded game $\bar\mG^1$.
Thus, $\varphi\bar\zeta^1_1+(1-\varphi)\bar\t^1_1$ is also a best-reply against $\bar{\t}^0_2$ in $\bar\mG^1$. 
Recall $\bar{\t}^0_2\in\tilde\partial K_2^1$, hence by Assumption \ref{P2}, player 2 must have  a profitable deviation $\tilde\t^0_2$ in $\bar\mG^1$, i.e.,
\begin{multline*}
\bar\mG^1_2(\varphi\bar\zeta^1_1+(1-\varphi)\bar\t^1_1,\tilde\t^0_2)>
\bar\mG^1_2(\varphi\bar\zeta^1_1+(1-\varphi)\bar\t^1_1,\bar\t^0_2)=
\varphi\bar\mG^1_2(\bar \zeta^1_1,\bar\t^0_2)+(1-\varphi)\bar\mG^1_2(\bar\t^1_1,\bar\t^0_2)=\\
\lim_{k\to\infty}\Big[\varphi^k\bar\mG^{1, \d, \e_k}_2(\bar\zeta^{1,k}_1,\bar\t^0_2)+(1-\varphi^k)\bar\mG_2(\bar\t^{1,k}_1, \hat\t^{\otimes}_2 * \bar\t^{0}_2)\Big],
\end{multline*}
where the last equality uses Lemma~\ref{lm:excluded}~\eqref{part:payoffs} since $\bar{\t}^0_2\notin\bar{K}^{1,\eta}_2$.
Similarly, 
\begin{equation*}
\bar\mG^1_2(\varphi\bar\zeta^1_1+(1-\varphi)\bar\t^1_1,\tilde\t^0_2)=
\varphi\bar\mG^1_2(\bar\zeta^1_1,\tilde\t^0_2)+(1-\varphi)\bar\mG^1_2(\bar\t^1_1,\tilde\t^0_2)\leq
\lim_{k \to\infty}\Big[\varphi^k\bar\mG^{1,\delta,\e_k}_2(\bar\zeta^{1,k}_1,\tilde\t^0_2)+(1-\varphi^k)\bar\mG_2(\bar\t^{1,k}_1, \hat\t^{\otimes}_2 * \tilde\t^0_2)\Big],
\end{equation*}
where we have used again Lemma~\ref{lm:excluded}~\eqref{part:payoffs} in the last inequality. 
From this we have a contradiction given that $\hat{\t}^{\otimes}_2 * \tilde{\t}^0_2$ does strictly better than $\bar{\t}_2$ for large enough $k$.
\end{proof}

\begin{claim}\label{lastLemma} 
For $k$ large enough, $(\bar\zeta^{1,k}_1, q^0_2(\bar\t^k_2))$ is an equilibrium of $\bar{\mG}^{1, \d, \e_k}$. 
\end{claim}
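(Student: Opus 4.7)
The plan is to verify directly both Nash best-reply conditions for $(\bar{\zeta}^{1,k}_1, q^0_2(\bar{\t}^k_2))$ in the perturbed excluded game $\bar{\mG}^{1,\d,\e_k}$, building on the conclusion $\alpha_k = 0$ from Claim~\ref{auxLemma3}, so that $\t^k_1 = \t^{\times,k}_1 \in \D(S_1\setminus S^1_1)$ involves no observable deviation. A preliminary issue is that $\bar{\t}^k_2$ must be a well-defined distribution on $\bar{S}_2$, i.e.\ $\beta_k < 1$. This I would establish by contradiction: if $\beta_k = 1$ along a subsequence, then $\s^k_2 = \t^k_2 \in \D(S_2)$ and its limit $\s_2 \in \bar{K}$ admits a product representation $\bar{s}^*_2 = \hat{\t}^{\otimes}_2 * \bar{s}^{0,*}_2 \in \bar{S}_2$ (via Lemma~\ref{lm:kuhn} and the equivalence definition of $\bar{K}$). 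Deviating to $\bar{s}^*_2$ would yield, in the limit, the equilibrium payoff $G_2(Q)$ on the $\kappa_1$-branch but avoid the $-\e_k B$ penalty on the $\kappa_2$-branch, collecting instead a payoff at least $-\e_k B + \e_k\gamma$ for some uniform $\gamma > 0$ (the choice of $B$ in Section~\ref{sec:step3} guarantees this), contradicting equilibrium.

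Once $\beta_k < 1$, player~$1$'s best-reply condition is essentially immediate. At the information set following $\kappa_2$ in $\bar{G}^{\e_k}$, the continuation $\bar{\zeta}^{1,k}_1$ maximizes player~$1$'s expected payoff against $\s^k_2$; because the $\kappa_2$-branch payoff contains the indicator $\mathbb{1}_{[s_2 \in \bar{S}_2]}$, only the $\bar{S}_2$-component $\bar{\t}^k_2$ of $\s^k_2$ contributes, giving $\bar{\zeta}^{1,k}_1 \in \arg\max_{\bar{s}^1_1\in \bar{S}^1_1}(1-\beta_k)\bar{\mG}^{1,\d,\e_k}_1(\bar{s}^1_1, q^0_2(\bar{\t}^k_2))$. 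Dividing by $(1-\beta_k) > 0$ and extending by linearity yields $\bar{\zeta}^{1,k}_1 \in \BR_1^{\bar{\mG}^{1,\d,\e_k}}(q^0_2(\bar{\t}^k_2))$.

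For player~$2$'s condition, I would take any $\bar{s}_2 = \hat{\t}^{\otimes}_2 * \bar{s}^0_2 \in \supp(\bar{\t}^k_2)$ and write its expected payoff in $\bar{G}^{\e_k}$ as $(1-\e_k)\bar{\mG}_2(\t^k_1, \bar{s}_2) + \e_k \bar{\mG}^{1,\d,\e_k}_2(\bar{\zeta}^{1,k}_1, \bar{s}^0_2)$. The key observation is that the first term does not depend on the off-path part $\bar{s}^0_2$: since $\t^k_1$ makes no observable deviation and $\hat{\t}^{\otimes}_2$ prescribes only $K$-support actions at on-path information sets of player~$2$, on-path status propagates forward in the tree, so the realized play under $(\t^k_1, \hat{\t}^{\otimes}_2 * \bar{s}^0_2)$ stays inside $\supp(Q)$ and never visits a player~$2$ off-path information set. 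Thus $\bar{\mG}_2(\t^k_1, \hat{\t}^{\otimes}_2 * \bar{s}^0_2)$ depends only on $\hat{\t}^{\otimes}_2$, not on $\bar{s}^0_2$. Comparing $\bar{s}_2$ with any alternative $\hat{\t}^{\otimes}_2 * (\bar{s}^0_2)' \in \bar{S}_2$, the $\kappa_1$-branch payoffs cancel and the optimality of $\bar{s}_2$ in $\bar{G}^{\e_k}$ reduces to $\bar{s}^0_2$ being a best-reply to $\bar{\zeta}^{1,k}_1$ in $\bar{\mG}^{1,\d,\e_k}$. Linearity across $\supp(\bar{\t}^k_2)$ then gives $q^0_2(\bar{\t}^k_2) \in \BR_2^{\bar{\mG}^{1,\d,\e_k}}(\bar{\zeta}^{1,k}_1)$, completing the argument.

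The main obstacle is the path-propagation argument underlying the independence of $\bar{\mG}_2(\t^k_1, \hat{\t}^{\otimes}_2 * \bar{s}^0_2)$ from $\bar{s}^0_2$: one must verify carefully, using the definitions of on-path information sets and of $S^{\otimes}_n$ and $S^1_1$, that the simultaneous use of $K$-support actions by both players at all on-path information sets confines the realized play to the on-path subtree, rendering off-path behavior payoff-irrelevant. A secondary technical point is the elimination of $\beta_k = 1$, which relies on the uniform strict positivity of $B + \bar{\mG}^{1,\d,\e_k}_2$ on the compact spaces $\bar{\S}^1_1$ and $\bar{S}^0_2$, guaranteed by the construction of $B$ in Section~\ref{sec:step3}.
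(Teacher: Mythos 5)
Your overall architecture matches the paper's: use $\alpha_k=0$ from Claim~\ref{auxLemma3}, exploit that the $\kappa_1$-branch payoff is constant across $\bar{S}_2$ because the induced play stays on $\supp(Q)$ (so it factors through $\mG^{\otimes}$), and read off the two best-reply conditions. Your treatment of player 1's condition (via the separate information set after $\kappa_2$) and of player 2's condition (by cancellation of the on-path terms) is sound.

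There is, however, a genuine gap in your argument that $\beta_k<1$. You compare the payoff of the \emph{finite}-$k$ equilibrium strategy $\s^k_2=\t^k_2$ against the payoff of a deviation $\bar{s}^*_2$ that is equivalent to the \emph{limit} $\s_2$, and you claim a contradiction because the $\kappa_2$-branch improves by $\e_k\gamma$ while the $\kappa_1$-branch payoffs agree ``in the limit.'' But at finite $k$ they need not agree: the $\kappa_1$-branch difference is of order $\|\s^k_2-\s_2\|$, which has no guaranteed relation to $\e_k$. Both vanish, and neither dominates a priori, so the deviation need not be profitable at finite $k$ and no contradiction follows. The paper sidesteps this (it establishes the stronger $\beta_k\to 0$ by ``repeating the arguments'' of Claim~\ref{auxLemma3}) through a standard trick you omit: for $k$ large, $\supp(\s_2)\subseteq\supp(\s^k_2)$, so the limit strategy $\s_2$ is \emph{itself} a best reply to $\s^k_1$ in $\bar{G}^{\e_k}$. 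Comparing $\s_2$ (rather than $\s^k_2$) with its exact equivalent $\bar{s}^*_2\in\D(\bar{S}_2)$ makes the $\kappa_1$-branch payoffs cancel identically by equivalence in $\bar\mG$, while the $\kappa_2$-branch improves strictly by at least $\e_k\gamma$; this does contradict $\s_2$ being a best reply when $\beta=1$. With that repair your proof closes and agrees with the paper's.
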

\begin{proof}
From Claim \ref{auxLemma3}, for $k$ sufficiently large, $\alpha_k =0$. 
For such a $k$, player 2's expected payoff from playing $\s^k_2=(1-\beta_k)\bar\t^k_2+\b_k\t^k_2$ against $\s^{k}_1$ is equal to
\begin{multline}\label{eq:lasteq}
\e_k\left [(1-\beta_k)\bar{\mG}^{1,\delta,\e_k}_2(\bar\zeta^{1,k}_1, q^0_2(\bar{\t}^{k}_2)) - \beta_k B\right] + (1-\e_k)\left[\bar\mG_2(\t^{\times,k}_1, (1-\b_k)\bar\t^k_2 + \b_k\t^k_2)\right]=\\
\e_k\left [(1-\beta_k)\bar{\mG}^{1,\delta,\e_k}_2(\bar\zeta^{1,k}_1, q^0_2(\bar{\t}^{k}_2)) - \beta_k B\right] + (1-\e_k)\left[(1-\b_k)\mG^{\otimes}_2(q^\otimes_1(\t^{\times,k}_1), \hat\t^{\otimes}_2) + \b_k\bar\mG_2(\t^{\times,k}_1,\t^k_2)\right].
\end{multline}

Repeating similar arguments to the proof of Claim~\ref{auxLemma3} we obtain $\beta_k\to 0$.
From~\eqref{eq:lasteq} we therefore see that, for $k$ large enough, $q^0_2(\bar\t^k_2)$ must be a best-reply to $\bar\zeta^{1,k}_1$ in $\bar\mG^{1,\d,\e_k}$.
In turn, $\bar\zeta^{1,k}_1$ must also be best-reply to $q^0_2(\bar\t^k_2)$.
That is, $(\bar\zeta^{1,k}_1, q^0_2(\bar\t^k_2))$ is an equilibrium of $\bar{\mG}^{1,\d, \e_k}$. 
\end{proof}  

Let $(\bar \zeta^1_1, q^0_2(\bar\t_2))$ be a limit of $(\bar\zeta^{1,k}_1, q^0_2(\bar\t^k_2))_{k \in \mathbb{N}}$, that is, an equilibrium of $\bar{\mG}^{1,\d,0}$. By construction, $(\bar \zeta^1_1, q^0_2(\bar\t_2))$ belongs to the complement of $\bar K^1$. Therefore, by definition of $\bar K^1$, there exists a pure strategy $\bar s^1_1$ such that $\bar \mG^{1,\d,0}_1(\bar s^1_1, \hat \t^{\otimes}_2 * q^0_2(\bar\t_2))) = \bar \mG_1(\bar s^1_1,  \hat \t^{\otimes}_2 * q^0_2(\bar\t_2)) > G_1(Q)$, where the equality comes from Lemma \ref{lm:excluded} \eqref{part:payoffs}. 
Hence, $\s \notin \bar K$, which implies $\s^k \notin \bar{\mathcal{O}}$ for sufficiently large $k$. 
This contradicts our initial assumption. We conclude that there is no sequence of equilibria of $\bar{G}^{\e_k}$ that converges to some point in~$\bar{\mathcal{O}}$.



\begin{example}
We finish the example started in Section~\ref{Sec:Example} using the same construction as in Figure~\ref{Auxgame2}.
Since the game equivalent to player 1's excluded game has the same strategy space as the excluded game itself, it is enough to ``plug'' the perturbed version found in Example~\ref{ex:C} after $\kappa_2$.
See Figure~\ref{AuxgameExample} for the resulting game.
Following the same line of reasoning as in the general case above, note that player 1 cannot be made indifferent between \textit{In} and \textit{Out} at the limit of a sequence of equilibria as $\e$ goes to zero.
If that was the case, player 2 would be playing $\ell$ with probability close to 2/3 along the sequence which, in turn, implies that player 1 would be playing $T$ and $\bar{T}$ with probability 1 and, therefore, player 2 would have a profitable deviation to playing $\ell$ with probability~1.
Hence, if there is an equilibrium close to the ``\textit{Out}''-component then, for any $\e>0$, player 1 must strictly prefer playing \textit{Out} and player 2 would be choosing her strategy as if she was best-replying to player 1's move after $\kappa_2$.
That is, player 2 would be choosing $\ell$ with probability 1.
But then, player 1 would have an incentive to deviate to \textit{In} and then choose $T$.
We conclude that the perturbed game has no equilibrium close to the ``\textit{Out}''-component and, correspondingly, that such a component is not hyperstable in the original game.
\end{example}

\begin{figure}
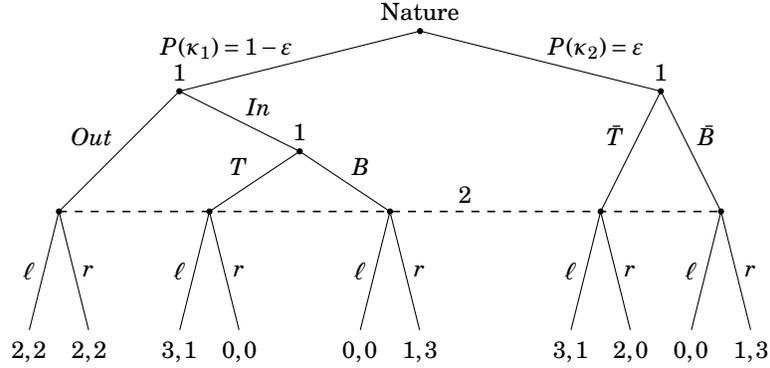

\caption{A perturbation of a game equivalent to Figure~\ref{Entry}}\label{AuxgameExample}

\vspace{1em}
\begin{istgame}[scale=0.8, every node/.style={scale=0.8}]
\xtdistance{10mm}{80mm}
\istroot(n){Nature}
\istb{P(\kappa_1)=1-\varepsilon}[al] 
\istb{P(\kappa_2)=\varepsilon}[ar]
\endist

\xtdistance{10mm}{40mm}
\istroot(11)(n-1){1}
\istb<level distance=20mm>{\mathit{Out}}[al]
\istb{\mathit{In}}[ar]
\endist

\xtdistance{20mm}{20mm}
\istroot(12)(n-2){1}
\istb{\bar{T}}[al]
\istb{\bar{B}}[ar]
\endist

\xtdistance{10mm}{30mm}
\istroot(13)(11-2){1}
\istb{T}[al]
\istb{B}[ar]
\endist

\xtInfoset[dashed](11-1)(12-2){2}[xshift=12.5mm, above]

\xtdistance{20mm}{10mm}
\istroot(21)(11-1){}
\istb{\ell}[l]{2,2}
\istb{r}[r]{2,2}
\endist

\xtdistance{20mm}{10mm}
\istroot(22)(12-1){}
\istb{\ell}[l]{3,1}
\istb{r}[r]{2,0}
\endist

\xtdistance{20mm}{10mm}
\istroot(23)(12-2){}
\istb{\ell}[l]{0,0}
\istb{r}[r]{1,3}
\endist

\xtdistance{20mm}{10mm}
\istroot(24)(13-1){}
\istb{\ell}[l]{3,1}
\istb{r}[r]{0,0}
\endist

\xtdistance{20mm}{10mm}
\istroot(25)(13-2){}
\istb{\ell}[l]{0,0}
\istb{r}[r]{1,3}
\endist

\end{istgame}

\end{figure}

\section{Applications}\label{sec:application}
In this section we analyze some examples including a few prominent models in the economics literature and show how index computation can be used to eliminate non-hyperstable equilibria.
We begin by revisiting the game in Figure \ref{Entry} and showing that the recent refinement concept of \textit{sequentially stable outcomes} of \citet{D2024} diverges from hyperstability. 
We then move to study two finitely repeated games. 
In particular, Example~\ref{ex:MS} illustrates how relying only on pure-strategy subgame-perfect equilibria to analyze these games may lead to selecting equilibrium outcomes that have index zero. 
We conclude by analyzing some classical signaling games.
This serves both to compare hyperstabily to other equilibrium concepts and to highlight how hyperstable components are easy to compute through their index.  
In preparation for the examples, we note a novel way to compute the index of a component using the indexes of the supporting polytopes in their corresponding excluded games.
The following result is a corollary from the arguments in Section~\ref{sec:step1}.

\begin{proposition}\label{prop:factorization}
Let $G$ be a two-player extensive-form game with normal form $\mG$.
Let $K$ be a component for which assumptions~\ref{P1} and~\ref{P2} are satisfied.
Letting $K^{\otimes} = q^{\otimes}(K)$ be the projected component in game $\mG^{\otimes}$, the index of component $K$ in game $\mG$, written as  $\mathrm{Ind}_{\mG}(K)$, can be decomposed as follows
\begin{equation}
\mathrm{Ind}_{\mG}(K)=\mathrm{Ind}_{\mG^1}(K^1)\times\mathrm{Ind}_{\mG^{\otimes}}(K^{\otimes})\times\mathrm{Ind}_{\mG^2}(K^2),
\end{equation}
where, by convention, $\mathrm{Ind}_{\mG^n}(K^n)\equiv 1$ if player $n$ does not have an excluded game.
\end{proposition}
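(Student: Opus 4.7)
The plan is to observe that the chain of reductions carried out in Section~\ref{sec:step1} actually establishes an equality of indices at every step, not merely the one-way implication ``zero index of $K$ forces some factor to vanish'' that was needed for Theorem~\ref{mainthm}. Once the equalities are tracked, the factorization is read off directly from the final application of properties~\ref{I4}, \ref{I5}, and \ref{I6} at the end of Section~\ref{sec:step1}. So the proof is essentially a bookkeeping exercise on top of what is already done; no genuinely new construction is required.

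Concretely, I would first invoke invariance under equivalent presentations (\ref{I2}) together with local constancy under payoff perturbations (\ref{I1}) to obtain $\mathrm{Ind}_{\mG}(K) = \mathrm{Ind}_{G^{\varepsilon_1,\varepsilon_2}}(\mathcal{O}^{0,0})$ for sufficiently small $\varepsilon_1,\varepsilon_2>0$, since $G^{0,0}$ is equivalent to $G$ and the index is a locally constant function of the perturbation. Claim~\ref{admiss} then shows that $\mathcal{O}^{\delta}$ contains the same equilibria of $G^{\varepsilon_1,\varepsilon_2}$ as $\mathcal{O}^{0,0}$, so it carries the same index. Next, the contractible-valued selection $\BR^{\times}\times\Phi^{\varepsilon_1,\varepsilon_2}$ of the best-reply correspondence computes the same index on $\mathcal{O}^{\delta}$, by the framework recalled in Section~\ref{indextheory}. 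The homotopy $H$ constructed in Section~\ref{sec:step1}, combined with the admissibility of $\mathcal{O}^{\delta}$ for $H$, yields $\mathrm{Ind}(\BR^{\times}\times\Phi^{\varepsilon_1,\varepsilon_2}|_{\mathcal{O}^{\delta}})=\mathrm{Ind}(\bar{e}\circ(\BR^{1}\times\BR^{\otimes}\times\BR^{2})\circ w|_{\mathcal{O}^{\delta}})$.

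From here, excision (\ref{I6}) replaces $\mathcal{O}^{\delta}$ by the rectangular admissible neighborhood $\hat{\mathcal{O}}^{\delta}=w^{-1}\big[\S^{1}_{1}\times\S^{2}_{2}\times B^{\delta}\times\hat{U}^{\delta}\big]$, which contains the same fixed points. Commutativity (\ref{I5}) then transfers the index to the correspondence $\BR^{1}\times\BR^{\otimes}\times\BR^{2}$ on the corresponding product, after which multiplication (\ref{I4}) factors it as $\mathrm{Ind}(\BR^{1}|_{\S^{1}_{1}\times\hat{U}^{\delta}_{2}})\cdot\mathrm{Ind}(\BR^{\otimes}|_{B^{\delta}})\cdot\mathrm{Ind}(\BR^{2}|_{\S^{2}_{2}\times\hat{U}^{\delta}_{1}})$. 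By Assumption~\ref{P2} and the choice of $\delta$, each of these neighborhoods is admissible for the corresponding component, so the three factors are by definition $\mathrm{Ind}_{\mG^{1}}(K^{1})$, $\mathrm{Ind}_{\mG^{\otimes}}(K^{\otimes})$, and $\mathrm{Ind}_{\mG^{2}}(K^{2})$. The case in which only one (or neither) player has an excluded game is handled by suppressing the absent factor and invoking the stated convention $\mathrm{Ind}_{\mG^{n}}(K^{n})\equiv 1$; the multiplication step degenerates accordingly. The main ``obstacle'' is purely expository, namely to certify that every reduction in Section~\ref{sec:step1} is an equality rather than just an implication about vanishing; no new technical content is needed.
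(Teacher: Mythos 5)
Your proposal is correct and is precisely the bookkeeping the paper has in mind: the text introduces Proposition~\ref{prop:factorization} as ``a corollary from the arguments in Section~\ref{sec:step1},'' and your chain of equalities (invariance \ref{I2} plus local constancy \ref{I1} to pass to $G^{\e_1,\e_2}$, excision and the equal-fixed-point comparison of $\mathcal{O}^{0,0}$ with $\mathcal{O}^{\d}$, the selection $\BR^{\times}\times\Phi^{\e_1,\e_2}$ computing the same index, the homotopy $H$, excision \ref{I6} to $\hat{\mathcal{O}}^{\d}$, commutativity \ref{I5}, and finally multiplication \ref{I4}) is exactly the sequence established in Section~\ref{sec:step1}, read as equalities rather than merely as a vanishing implication. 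The identification of the three factors with $\mathrm{Ind}_{\mG^{1}}(K^{1})$, $\mathrm{Ind}_{\mG^{\otimes}}(K^{\otimes})$, $\mathrm{Ind}_{\mG^{2}}(K^{2})$ via the admissibility of $\S^{n}_{n}\times\hat{U}^{\d}_{-n}$ (using \ref{P2} and the choice of $\d$) and the degenerate convention when a player lacks an excluded game are likewise exactly what the paper relies on.
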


\begin{example}\label{ex:game3}

\begin{figure}[t!]
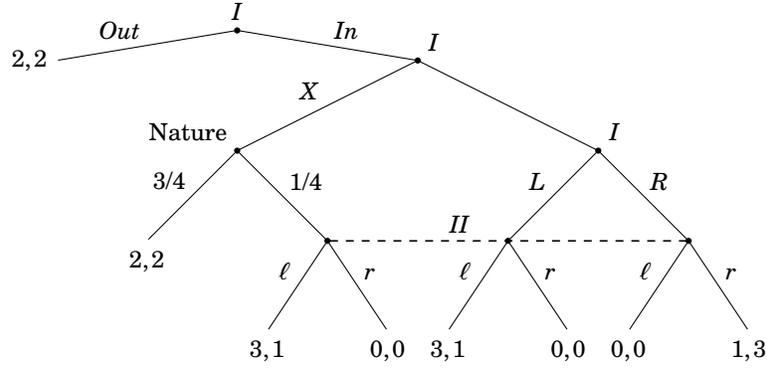

\caption{Equivalent Entry game}\label{Entrymod}
\vspace{1em}
\begin{istgame}[scale=0.8, every node/.style={scale=0.8}]
\centering

\xtdistance{5mm}{60mm}
\istroot(0){$I$}
\istb{\mathit{Out}}[al]{2,2}[west]
\istb{\mathit{In}}[ar]
\endist

\xtdistance{15mm}{60mm}

\istroot(1)(0-2)<above right>{$I$}
\istb{X}[al]
\istb{}[ar]
\endist

\xtdistance{15mm}{30mm}

\istroot(2)(1-1)<above left>{Nature}
\istb{3/4}[al]{2,2}
\istb{1/4}[ar]
\endist

\istroot(3)(1-2)<above right>{$I$}
\istb{L}[al]
\istb{R}[ar]
\endist

\xtdistance{15mm}{20mm}

\istroot(4a)(2-2)
\istb{\ell}[al]{3,1}
\istb{r}[ar]{0,0}
\endist

\istroot(4)(3-1)
\istb{\ell}[al]{3,1}
\istb{r}[ar]{0,0}
\endist

\istroot(5)(3-2)
\istb{\ell}[al]{0,0}
\istb{r}[ar]{1,3}
\endist

\xtInfoset[dashed](4a)(5){$\mathit{II}$}[xshift=-8mm]

\end{istgame}

\end{figure}

Sequential equilibrium is one of the most prominent solution concepts in extensive-form games as it extends the principle of backward induction to imperfect information games. 
However, it has shortcomings which have been extensively discussed in the literature \citep{KM1986,CK1987}.
\textit{Sequentially stable outcomes}, recently defined by \citep{D2024}, refine the set of sequential equilibrium outcomes.  

Roughly, an outcome $\omega$ is sequentially stable if for \textit{any} vanishing sequence of behavioral trembles, there exists a sequence of $\e$-sequential equilibria of the (trembled)-game (with vanishing $\e>0$) inducing outcomes which converge to $\omega$ (see \citealp[p.9]{D2024} for the definition).
Note that while the definition of hyperstability considers payoff perturbations explicitly, by invoking $\e$-sequential equilibria, sequentially stable outcomes considers payoff perturbations implicitly.

Sequentially stable outcomes are a natural refinement of sequential equilibrium.
They appropriately refine the set of equilibria in some examples while perhaps being equally tractable.
Unfortunately, similarly to sequential equilibrium, sequentially stable outcomes rely on the specifics of the extensive-form.
Hence, contrasting with hyperstability, they do not satisfy invariance.%
\footnote{~In addition, as already discussed by \citeauthor{D2024}, sequentially stable outcomes do not satisfy admissibility either.}
We can illustrate this difference returning to Figure~\ref{Entry}.
It can be easily checked that ``$\mathit{Out}$'' is a sequential equilibrium outcome.
It is also a sequentially stable outcome (which can be seen by applying \citealp[Proposition 4.3, Part 2]{D2024}).
However, ``$\mathit{Out}$'' is not selected by either concept in the well-known equivalent representation of the game provided in Figure \ref{Entrymod} (cf. \citealp{JH1994}). 
After adding the  redundant strategy $X$, the unique equilibrium of the subgame that follows $\mathit{In}$ is $(L,\ell)$, making $(\mathit{In}-L,\ell)$ the unique Sequential Equilibrium (or Sequentially Stable) outcome.
Not surprisingly, the forward induction outcome $(\mathit{In}-L,\ell)$ is the unique equilibrium selected by hyperstability both in in the representation of Figure~\ref{Entry} and in the representation of Figure~\ref{Entrymod}. 
(Hyperstability can be computed directly in the latter game by noticing that player 1's excluded game associated with ``$\mathit{Out}$'' does not have any equilibrium in the supporting polytope, implying that the supporting polytope has index 0 and, by Proposition~\ref{prop:factorization}, component ``$\mathit{Out}$'' also has index 0, i.e., it is not hyperstable.)

\end{example}

We move to study a finitely repeated game.
As we show below, finitely repeated games have a recursive structure that aids the computation of the index of equilibrium outcomes.\footnote{~An interesting result in the literature on finitely repeated games is \cite{MO1989}. Proposition 1 in this paper provides a sufficient condition for a pure Nash equilibrium outcome (i.e., an outcome where no randomizations are made at any period by both players) of a finitely repeated, two-player game \textit{not} to be Kohlberg-Mertens stable. 
In particular, this condition implies that this outcome has an index of zero: any non-zero index equilibrium component in mixed strategies contains a Kohlberg-Mertens stable set. If an outcome is not Kohlberg-Mertens stable, then the associated index is zero.
} 

\begin{example}\label{ex:MS}
The repeated game with stage-game in Figure~\ref{fig:rg2} is taken from \citet[Example 4.4.1]{MS2006}.
The game has 3 equilibria in pure strategies $(A,A)$, $(B,B)$ and $(C,C)$ and 4 equilibria in mixed strategies $(\frac{1}{5}A+\frac{4}{5}B,\frac{3}{7}A+\frac{4}{7}B)$, $(\frac{1}{5}A+\frac{4}{5}C,\frac{1}{3}A+\frac{2}{3}C)$, $(\frac{3}{4}B+\frac{1}{4}C,\frac{1}{4}B+\frac{3}{4}C)$ and  $(\frac{1}{17}A+\frac{12}{17}B+\frac{4}{17}C,\frac{3}{13}A+\frac{4}{13}B+\frac{6}{13}C)$.
Pure strategy equilibria are strict and therefore have index +1.
Mixed strategy equilibria that are not completely mixed have index -1.%
\footnote{~
The latter can be seen by deleting the unused pure strategy for both players for each equilibrium as it is an inferior response. 
The resulting truncated game has 3 equilibria, 2 of them strict and one in completely mixed strategies which, therefore, has index -1.
}
Since indexes must add up to one, the completely mixed equilibrium has index +1.

\begin{figure}[t]
\caption{Stage-game for a repeated game with player specific punishments}\label{fig:rg2}

\begin{center}
\setlength{\extrarowheight}{.1em}
\begin{tabular}{c|c|c|c|}
\multicolumn{1}{c}{}&\multicolumn{1}{c}{$A$}&\multicolumn{1}{c}{$B$}&\multicolumn{1}{c}{$C$}  \\ \cline{2-4}
$A$&$4,4$&$0,0$&$0,0$ \\ \cline{2-4}
$B$&$0,0$&$3,1$&$0,0$ \\ \cline{2-4}
$C$&$2,2$&$0,0$&$1,3$ \\ \cline{2-4}
\end{tabular}
\vspace{1em}
\end{center}
\end{figure}

Examples such as the twice-repeated Prisoners Dilemma do not allow for equilibria where in the first stage a profile different from the unique equilibrium of the stage game is played. This contrasts with this example
where the stage-game has more than one equilibrium that can be used differently to punish/reward players in the associated finitely repeated game, and allows for subgame perfect equilibrium outcomes in which a profile that is not necessarily an equilibrium of the stage-game to be played in early stages.
Indeed, \citet[p. 113]{MS2006} note that, if the game in Figure~\ref{fig:rg2} is played twice, then it is possible to support $(C,A)$ in the first period in subgame perfect equilibrium by playing $(A,A)$ in the second stage if no player deviated in the first stage, $(B,B)$ if the second player deviated, and $(C,C)$ if the first player deviated. 
We show in this particular case that supporting the non-equilibrium profile in the first stage is not plausible (from the point of view of hyperstability). 
To be precise, we show that the component $K$ which induces the outcome ``6,6'' in which players play on-path $(C,A)$ and then $(A,A)$ is not hyperstable.

First, we note that such an outcome is not isolated in the set of equilibrium outcomes.
Indeed, if ``2,6'' is the outcome in which players play on-path $(C,C)$ twice, the same equilibrium component induces the continuum of outcomes $\a$``6,6''$+(1-\a)$``2,6'' for $0\leq\a\leq 1$.
Therefore, Assumption~\ref{P1} is not satisfied and we cannot directly apply the results in this paper to analyze the equilibrium component $K$.
Nonetheless, we can use Property~\ref{I1} and perturb player 2's payoffs slightly so that, \textit{in the first stage}, her payoff is $3-\varepsilon$ under profile $(C,C)$ with $\varepsilon>0$.
Call the resulting perturbed game $G(\e)$.
The index of any small enough admissible neighbourhood $\mathcal{O}$ of $K$ in the original game coincides with the index of $\mathcal{O}$ in $G(\e)$ provided $\varepsilon>0$ is small enough.
Furthermore, in $G(\e)$, the unique component in $\mathcal{O}$ does induce a unique outcome and, in this unique outcome, players play on path $(C,A)$ and then $(A,A)$.

Let us compute player 1's excluded game under such an outcome. 
Since it does not depend on $\e$, we simply denote it as $\mG^1$.
Any deviation that involves player 1 playing $B$ in the first stage gives player 1 a payoff of at most 3, which is strictly smaller that her payoff of 6 under the equilibrium component.
Therefore, we may only consider player 1's deviations in which she plays $A$ in the first stage.
After eliminating duplicates, player 1's (simplified) excluded game is represented in Figure~\ref{fig:rg2b} (left). 
Note that payoffs are obtained from Figure~\ref{fig:rg2} after adding to each payoff vector the vector (4,4) and, consequently, their set of Nash equilibria coincide.
Player 1's supporting polytope consists of those strategy profiles in $\mG^1$ that yield player 1 a payoff smaller or equal than 6.
This polytope includes equilibrium $(\mathit{AC},C)$ as well as every mixed strategy equilibrium.
Hence, the index of the supporting polytope in player 1's excluded game is -1.

In turn, consider player 2's excluded game $\mG^{2}(\e)$ in Figure~\ref{fig:rg2b} (right).
Analogously to $\mG^{1}$, this (simplified) excluded game is constructed by noticing that every deviation in which player 2 plays $B$ is an inferior response to the equilibrium outcome and by eliminating duplicates.
Payoffs are obtained from Figure~\ref{fig:rg2} after adding to each payoff vector the vector $(1,3-\e)$. 
That is, both games also have the same Nash equilibria.
Player 2's supporting polytope consists of those strategy profiles in $\mG^2(\e)$ that yield player 2 a payoff smaller than 6.
This polytope includes every equilibrium of $\mG^2(\e)$ but $(A,\mathit{CA})$ and, therefore, the index of the supporting polytope in player 2's excluded game is zero.
Using Proposition~\ref{prop:factorization}, the index of the component $K$ that induces the outcome in which players play $(A,C)$ in the first stage is zero for $\varepsilon\geq 0$ small enough.
That is, $(A,C)$ cannot be sustained in the first stage in a hyperstable equilibrium component in the two-fold repetition of the game in Figure~\ref{fig:rg2}.
\end{example}

\begin{figure}[t]
\caption{Excluded games $\mG^1$ (left) and $\mG^2(\e)$ (right).}\label{fig:rg2b}
\vspace{1em}

\hfill
\setlength{\extrarowheight}{.1em}
\begin{tabular}{c|c|c|c|}
\multicolumn{1}{c}{}&\multicolumn{1}{c}{$A$}&\multicolumn{1}{c}{$B$}&\multicolumn{1}{c}{$C$}  \\ \cline{2-4}
$\mathit{AA}$&$8,8$&$4,4$&$4,4$ \\ \cline{2-4}
$\mathit{AB}$&$4,4$&$7,5$&$4,4$ \\ \cline{2-4}
$\mathit{AC}$&$6,6$&$4,4$&$5,7$ \\ \cline{2-4}
\end{tabular}
\hfill
\begin{tabular}{c|c|c|c|}
\multicolumn{1}{c}{}&\multicolumn{1}{c}{$\mathit{CA}$}&\multicolumn{1}{c}{$\mathit{CB}$}&\multicolumn{1}{c}{$\mathit{CC}$}  \\ \cline{2-4}
$A$&$5,7-\e$&$1,3-\e$&$1,3-\e$ \\ \cline{2-4}
$B$&$1,3-\e$&$4,4-\e$&$1,3-\e$ \\ \cline{2-4}
$C$&$3,5-\e$&$1,3-\e$&$2,6-\e$ \\ \cline{2-4}
\end{tabular}
\hfill{}
\vspace{1em}
\end{figure}

\begin{example}
\citet{VD1989} shows that in the two-fold repetition of the stage game in Figure~\ref{stagegame} the outcomes in which players either play $(T,L)$ or $(B,R)$ twice is not stable in the sense of \citet{KM1986} whereas the outcome in which players alternate is, in turn, stable.
In addition, \citet[p. 429]{VD1989} also points out that ``[c]omputing the set of all stable equilibrium paths seems to be laborious, however, even in this most simple conceivable case.''
Here, we illustrate how the properties of the index and the results in this paper can be used to compute the index of every equilibrium component in this repeated game.
We also note that every nonzero index component contains a stable set (both in the sense of~\citealp{KM1986}, and in the sense of \citealp{M1989}.)

Call $G$ the two-fold repetition of the game in Figure~\ref{stagegame}.
The stage game of $G$ has 3 equilibria.
The two pure equilibria $(T,L)$ and $(B,R)$ are strict and therefore have index $+1$.
The completely mixed equilibrium $(\frac{4}{5}T + \frac{1}{5}B, \frac{1}{5}L + \frac{4}{5}R)$ has index $-1$.

First, every equilibrium component in which players play a completely mixed action profile in the first period has nonzero index.
Indeed, if in such a component players always play the completely mixed equilibrium of the stage game after any realization in the first stage, then this equilibrium component induces a completely mixed outcome which, therefore has nonzero index (cf. Proposition~\ref{nonzeroindex}). 
On the other hand, if for some realization in the first stage, players play a strict equilibrium in the second stage then every observable deviation consists of deviating in the second stage from the prescribed strict equilibrium.
Those deviations are inferior responses to the equilibrium outcome and after eliminating them the outcome is completely mixed and, therefore, has nonzero index.

Let ``8,2'' be the equilibrium outcome in which players play $(T,L)$ twice.
Using a similar argument as above, deviating only in the second stage is an inferior response to the equilibrium outcome. 
After eliminating those deviations from $G$, we can see that player 1 has (after eliminating duplicates) only 2 observable deviations, $\mathit{BT}$ and $\mathit{BB}$ (that is, playing $B$ in the first stage and then playing either $T$ or $B$ in the subgame that follows $(B,L)$).
Player 2's strategy set in player 1's excluded game is $\{L,R\}$.
Note that if player 1 deviates to either $\mathit{BT}$ or $\mathit{BB}$ in $G$ then the payoff vector accrued in the first stage is $(0,0)$ given that player 2 is playing $L$ in such a stage.
Therefore, the excluded game $\mathbb{G}^1$ coincides with the game in Figure~\ref{stagegame}.

Player 1's payoff in $\mathbb{G}^1$ is always strictly smaller than player 1's payoff of 8 if $(T,L)$ is played twice.
Hence, the supporting polytope $K^1$ coincides with the entire strategy space in $\mathbb{G}^1$ which implies that $K^1$ has index +1 in $\mG^1$.
In turn, consider player 2's observable deviations $\mathit{RL}$ and $\mathit{RR}$. 
The excluded game $\mathbb{G}^2$ again coincides with the game in Figure~\ref{stagegame}.

The supporting polytope $K^2$ requires player 1 to play $B$ with probability smaller or equal than $0.5$.
The only equilibrium that lies outside that region is $(B,R)$ which is strict and, therefore, has index $+1$. 
Thus, the supporting polytope $K^2$ has zero index in game $\mG^2$. 
Proposition~\ref{prop:factorization} implies that the component ``8,2'' has zero index and, therefore, it is not hyperstable.
The analogous argument of course holds for the equilibrium component ``2,8''.
It also holds for the equilibrium outcome in which players play the completely mixed equilibrium of the stage game after playing one of the strict equilibria in the first stage.

\begin{figure}[t]
\caption{Coordination stage-game}\label{stagegame}
\begin{center}
\setlength{\extrarowheight}{.1em}
\begin{tabular}{c|c|c|}
\multicolumn{1}{c}{}&\multicolumn{1}{c}{$L$}&\multicolumn{1}{c}{$R$}  \\ \cline{2-3}
$T$&$4,1$&$0,0$ \\ \cline{2-3}
$B$&$0,0$&$1,4$ \\ \cline{2-3}
\end{tabular}
\vspace{1em}
\end{center}
\end{figure}

Finally, consider now the two equilibrium outcomes ``5,5'', one in which players play $(T,L)$ first and $(B,R)$ second, and the other in which this order is reversed.
In both cases player 1 and player 2's excluded games coincide with the game in Figure~\ref{stagegame}.
Thus, every payoff in the two excluded games associated to the corresponding equilibrium component is smaller for both players than the payoff 5 induced by the equilibrium outcome.
This implies that both supporting polytopes have index +1 and, therefore, the two components that induce the equilibrium payoff vector $(5,5)$ also have index +1 and are hyperstable.
\end{example}

We now turn our attention to signaling games.
Signaling games have driven the creation of a wealth of refinement concepts to eliminate implausible equilibria. 
Some of the most well-known are the Intuitive Criterion, D1, D2, Never-weak-best-reply (NWBR), and Universal Divinity. 
Hyperstability is strictly stronger than any of these criteria and, as we show, it is easy to compute by using properties of the index. 
%
The next proposition is known but we recall it for completeness. 

\begin{proposition}\label{sigprop}
Let $G$ be a finite signaling game. 
Suppose the terminal payoffs of $G$ are chosen according to \ref{P1} and \ref{P2} and $Q$ is an equilibrium outcome which is hyperstable. 
Then $Q$ satisfies the Intuitive Criterion, D1, D2, NWBR and Universal Divinity. 
\end{proposition}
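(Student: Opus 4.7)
The plan is to reduce Proposition \ref{sigprop} to a chain of known implications rather than verifying each signaling refinement separately from scratch. The overall structure has three layers: (i) pass from hyperstability of $Q$ to existence of a Mertens-stable set inducing $Q$; (ii) pass from Mertens-stability to the Never-Weak-Best-Reply (NWBR) property of Kohlberg--Mertens; (iii) derive the remaining refinements from NWBR using the standard chain
\begin{equation*}
\text{NWBR} \;\Rightarrow\; \text{Universal Divinity} \;\Rightarrow\; \text{D2} \;\Rightarrow\; \text{D1} \;\Rightarrow\; \text{Intuitive Criterion},
\end{equation*}
all of which are known in the signaling-game literature (\citealp{KM1986,CK1987,BS1987}).

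For step (i), I would invoke the fact already noted in the introduction: in generic extensive-form games, every outcome induced by a hyperstable component is also induced by a Mertens-stable set (\citealp{M1989}). The genericity hypothesis on terminal payoffs suffices to place the signaling game within the regime where this reduction applies, since Assumptions \ref{P1}--\ref{P2} are generic in $\mathcal{G}$ by Proposition \ref{prop:genericity}, and a signaling game with generic payoffs satisfies the standard regularity conditions needed for Mertens's theorem. Fix such a Mertens-stable set $T$ inducing the outcome $Q$.

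For step (ii), Mertens-stable sets satisfy the iterated-dominance and forward-induction properties of \citet{KM1986}. In particular, for any out-of-equilibrium message $m$, the restriction of $T$ to the continuation game after $m$ must survive iterated elimination of strategies that are never weak best replies to any belief over the sender's types concentrated on the set of types for whom $m$ is weakly justifiable given the equilibrium payoffs. This is precisely the NWBR property applied at $m$: if a sender type is never a weak best reply at $m$, then receiver beliefs induced by $T$ after $m$ assign such a type probability zero.

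For step (iii), the implications among the signaling refinements are standard: NWBR eliminates a weakly larger set of types than Universal Divinity, which in turn is at least as demanding as D2, D1, and the Intuitive Criterion. Hence the beliefs supporting the equilibrium outcome $Q$ after every out-of-equilibrium message satisfy each of the listed restrictions, and $Q$ satisfies all five refinements. The main obstacle I anticipate is ensuring that genericity of the extensive-form payoffs translates to genericity in the normal form used for Mertens-stability (so that tie-breaking issues in the definitions of D1, D2, etc.\ do not matter), and verifying that the exceptional, lower-dimensional set of signaling-game payoffs which might cause the reduction in (i) to fail is itself a semi-algebraic set of lower dimension; both are standard but require care with the genericity notion of Definition \ref{def:genericity}.
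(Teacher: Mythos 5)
Your proof is correct in spirit, but it takes a genuinely different route from the paper's one-sentence argument. The paper's proof goes through Kohlberg--Mertens stability (1986): it uses the fact, structural in KM 1986 itself, that every hyperstable set contains a (fully stable, hence a) KM-stable set, so the unique outcome $Q$ of the hyperstable component is a KM-stable outcome, and then invokes Cho--Kreps directly for the fact that KM-stable outcomes of signaling games satisfy all five listed criteria. Crucially, the step ``hyperstable $\Rightarrow$ KM-stable outcome'' does not need this paper's main theorem and only needs genericity to ensure the component induces a unique outcome. You instead pass through Mertens-stability (1989). That reduction is valid, but it leans on heavier machinery: generically a hyperstable component has nonzero index (the paper's main theorem), and nonzero-index components contain Mertens-stable sets (Mertens 1989). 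This is a longer path, and it makes the conclusion appear to depend on the paper's main result when in fact it does not. Your implication chain
\begin{equation*}
\text{NWBR} \;\Rightarrow\; \text{Universal Divinity} \;\Rightarrow\; \text{D2} \;\Rightarrow\; \text{D1} \;\Rightarrow\; \text{Intuitive Criterion}
\end{equation*}
is accurate and standard; what you gain by spelling it out is transparency about which refinement is doing the work, and what you lose is the brevity of simply citing Cho--Kreps's observation that KM-stable outcomes already satisfy NWBR (and hence the rest). One subtlety you should flag if you keep the Mertens route: extracting the Cho--Kreps NWBR condition (a belief restriction message by message) from Mertens's forward-induction property (a statement about small games / continuation games) is not purely notational and requires a sentence or two more than ``forward induction applied at $m$.'' The KM-stability route avoids this translation entirely.
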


The result follows from observing that a stable outcome in the sense of \citet{KM1986} satisfies all the criteria listed above and that every hyperstable outcome is also Kohlberg-Mertens stable. 

\begin{figure}[t!]
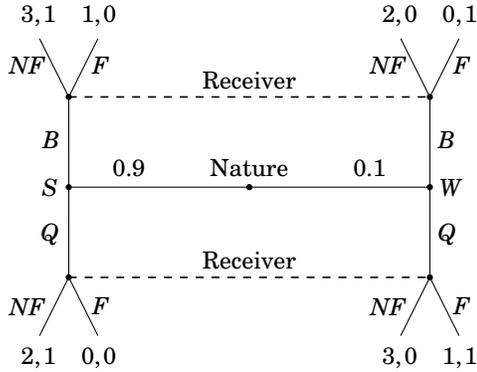

\caption{Beer-and-Quiche Game}\label{BandQ}
\vspace{1em}
\begin{minipage}{0.55\textwidth}
\centering
\begin{istgame}[scale=0.8, every node/.style={scale=0.8}]
\xtdistance{0mm}{60mm}
\istroot(o)(0,0){Nature} 
\istb{0.9}[above, xshift=-14pt]{S}[l] 
\istb{0.1}[above, xshift=14pt]{W}[r] 
\endist 

\xtdistance{0mm}{30mm}

\istroot[left](w)(o-1)
\istb{B}[l]
\istb{Q}[l]
\endist

\istroot'[right](s)(o-2)
\istb{B}[r]
\istb{Q}[r]
\endist

\xtInfoset[dashed](w-1)(s-1){Receiver}[a]
\xtInfoset[dashed](w-2)(s-2){Receiver}[a]

\xtdistance{10mm}{10mm}

\istroot[down](a)(w-2)
\istb{\mathit{NF}}[l]{2,1}[south]
\istb{F}[r]{0,0}[south]
\endist

\istroot[down](b)(s-2)
\istb{\mathit{NF}}[l]{3,0}[south]
\istb{F}[r]{1,1}[south]
\endist

\istroot[up](c)(w-1)
\istb{F}[r]{1,0}[north]
\istb{\mathit{NF}}[l]{3,1}[north]
\endist

\istroot[up](d)(s-1)
\istb{F}[r]{0,1}[north]
\istb{\mathit{NF}}[l]{2,0}[north]
\endist
\end{istgame}
\end{minipage}
\hfill
\begin{minipage}{0.40\textwidth}
\centering
Sender's excluded game in $\mathit{QQ-NF}$ component:

\setlength{\extrarowheight}{.1em}
\begin{tabular}{c|c|c|}
\multicolumn{1}{c}{}&\multicolumn{1}{c}{$\mathit{NF}$}&\multicolumn{1}{c}{$F$}  \\ \cline{2-3}
$\mathit{BQ}$&$3,9/10$&$6/5,0$ \\ \cline{2-3}
$\mathit{QB}$&$2,9/10$&$9/5,1$ \\ \cline{2-3}
$\mathit{BB}$&$29/10,9/10$&$9/10,1/10$ \\ \cline{2-3}
\end{tabular}
\end{minipage}
\vspace{1em}
\end{figure}

\begin{example} Consider the Beer-Quiche Game from \cite{CK1987} (see Figure \ref{BandQ}). 
The game has two equilibrium outcomes.
In the first one, Sender chooses Beer ($B$) for both of his types and Receiver, seeing this, does Not Fight ($\mathit{NF}$). 
Off the equilibrium path, Receiver Fights ($F$) with probability at least $0.5$. 
This equilibrium outcome survives all the classical refinements listed in Cho and Kreps and there are good reasons for it to be selected (see \citealp[pp. 184-185]{CK1987}). 
In the other equilibrium outcome, Sender chooses Quiche ($Q$) for both of his types and, upon seeing this, Receiver does not fight. 
To prevent the strong type ($S$) from deviating to Beer ($B$),  Receiver fights with probability at least $0.5$ off the equilibrium path. 
Cho and Kreps provide intuitive reasons under which Receiver's off-equilibrium beliefs are implausible (see p.185) and this equilibrium outcome does not survive the Intuitive Criterion. 
Therefore, it is not hyperstable (see Proposition \ref{sigprop}). 
But we can verify this directly by computing the index of the second equilibrium outcome.

Sender's excluded game associated to this equilibrium outcome is in the right-hand side of Figure~\ref{BandQ} where $\mathit{QB}$, for instance, is the deviation in which sender chooses $Q$ after $S$ and $B$ after $W$.
First note that the supporting polytope consists of those strategy profiles in the excluded game whose payoff to Sender is smaller than or equal to $\frac{21}{10}$ (i.e. her payoff under the $\mathit{QQ}-\mathit{NF}$ component).
In this excluded game, $\mathit{BB}$ is strictly dominated.
Furthermore, the excluded game has three equilibria, $(\mathit{BQ},\mathit{NF})$, $(\mathit{QB},\mathit{F})$, and $(\frac{1}{10}\mathit{BQ}+\frac{9}{10}\mathit{QB},\frac{3}{8}\mathit{NF}+\frac{5}{8}\mathit{F})$.
Sender's payoff under these equilibria is, respectively, $3$, $\frac{9}{5}$, and $\frac{15}{8}$.
That is, only the last two belong to the supporting polytope and, since strict equilibria have index +1 and the sum of all the indexes must be equal to +1, the index of the supporting polytope is zero. 
%
Proposition~\ref{prop:factorization} implies that $\mathit{QQ-NF}$ has index zero and, consequently, it is not hyperstable. 
%
\end{example}


\begin{example}
Consider the game in \citet[Figure IV, p. 207]{CK1987} and represented in Figure~\ref{fig:FigIV}. 
The game has two equilibrium outcomes.
In the first, both types send $m'$ and, off the equilibrium path, Receiver plays in the convex hull of $(2/3,0, 1/3)$, $(1/6, 1/2, 1/3)$, $(0, 1/2, 1/2)$ and $(0,0,1)$ where probabilities in each vector correspond, in order, to strategies $r_1$, $r_2$ and $r_3$. 
This first component satisfies the Intuitive Criterion, $D1$, $D2$, and Universal Divinity.
In the second equilibrium outcome both types send $m$ and Receiver plays $r_2$.
Since the second equilibrium outcome is induced by a strict equilibrium (hence with index$ +1$) and the indexes of all equilibria always add up to +1 (Property~\ref{I3}) the first component has zero index.
While simple, this reasoning requires knowing all the equilibrium outcomes but, as before, we can also directly compute its index.

\begin{figure}
\caption{Figure IV in Cho \& Kreps}\label{fig:FigIV}
\vspace{1em}
\begin{minipage}{0.55\textwidth}
\centering
\begin{istgame}[scale=0.8, every node/.style={scale=0.8}]

\xtdistance{30mm}{0mm}

\istroot(0)[chance node]{Nature}
\istb<grow=left>{1/2}[a]
\istb<grow=right>{1/2}[a]
\endist

\xtdistance{15mm}{30mm}

\istroot(1)(0-1)<180>{$t_1$}
\istb<grow=north>{m'}[l]{0,0}[north]
\istb<grow=south>{m}[l]
\endist

\istroot(2)(0-2)<0>{$t_2$}
\istb<grow=north>{m'}[r]{0,0}[north]
\istb<grow=south>{m}[r]
\endist

\istroot'[north](a1)(1-1)

\xtdistance{15mm}{10mm}

\istroot(b1)(1-2)
\istb{r_1}[al]{-1,3}
\istb{r_2}[right,yshift=-3mm]{1,2}
\istb{r_3}[ar]{-1,0}
\endist

\istroot(a2)(2-2)
\istb{r_1}[al]{1,0}
\istb{r_2}[right,yshift=-3mm]{1,2}
\istb{r_3}[ar]{-2,3}
\endist

\xtInfoset[dashed](b1)(a2){Receiver}
\end{istgame}
\end{minipage}
\hfill
\begin{minipage}{0.40\textwidth}
\centering
Sender's excluded game in $\mathit{\mathit{m'm'}}$ component:
\setlength{\extrarowheight}{.1em}
\begin{tabular}{c|c|c|c|}
\multicolumn{1}{c}{}&\multicolumn{1}{c}{$r_1$}&\multicolumn{1}{c}{$r_2$}&\multicolumn{1}{c}{$r_3$} \\ \cline{2-4}
$\mathit{m'm}$&1/2, 0&1/2,1&-1,3/2 \\ \cline{2-4}
$\mathit{mm'}$&-1/2,3/2&1/2,1 &-1/2,0 \\ \cline{2-4}
$\mathit{mm}$&0, 3/2&1, 2& -3/2, 3/2 \\ \cline{2-4}
\end{tabular}
\end{minipage}
\end{figure}



Sender's excluded game in the component is on the right-hand side of Figure~\ref{fig:FigIV}.
Its unique equilibrium is $(\mathit{mm}, r_2)$ and it lies outside the supporting polytope which, consequently, has index zero.
Therefore, by Proposition~\ref{prop:factorization}, the $m'm'$ component has index zero and, therefore, it is not hyperstable. 
\end{example}

\begin{example}
We move to the example in \citet[p.216]{CK1987} represented in Figure~\ref{fig:3types}.
This example illustrates how Kohlberg-Mertens stability is strictly stronger than NWBR, D1, D2, Universal Divinity, and the Intuitive Criterion. 
Similarly to the previous example, this game has two equilibrium outcomes.
The first one is the strict equilibrium $(\mathit{mm'm'},r_1$), where $mm'm'$ means player 1 plays $m$ in his leftmost information set, and $m'$ in the middle and rightmost information sets. 
Hence, the other equilibrium outcome in which all three types send $m'$ must have index $0$ and not be hyperstable. 
The formal argument sketched by Cho and Kreps that verifies that the second equilibrium outcome is not Kohlberg-Mertens stable (although satisfying all other refinement criteria) is more involved than what we showed here and, because of the unintuitive nature of their characterization of stability, less convincing.%
\footnote{~This is a position Cho and Kreps express themselves, see last paragraph of p. 220.} 
In contrast, using the properties of the index, identifying hyperstability is easy.
Its intuitiveness comes from the fact that it captures payoff robustness in every equivalent game.  

The supporting polytope in Sender's excluded game is a polyhedron obtained by permuting the vector $(2/3, 1/3,0)$ in Receiver's mixed strategy set.
This supporting polytope contains two equilibria in its interior.
Both indices can be explicitly computed using Shapley's formula (\citealp{S1974}).
One has index $+1$ and the other index $-1$, which implies that the supporting polytope in Sender's excluded game has index zero and, by Proposition~\ref{prop:factorization}, the outcome in which every type chooses $m'$ has zero index and, therefore, fails to be hyperstable.
\end{example}

\begin{figure}[t]
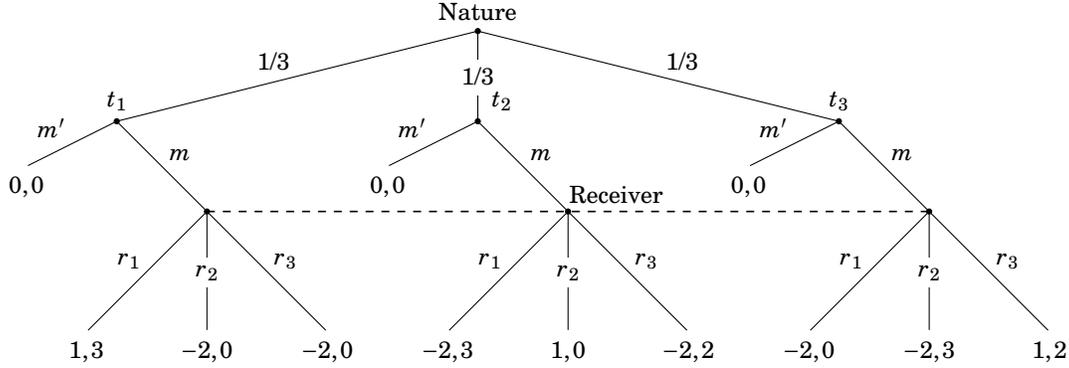

\caption{A signaling game with three types}\label{fig:3types}

\vspace{1em}

\begin{istgame}[scale=0.8, every node/.style={scale=0.8}]
\xtdistance{15mm}{60mm}
\istroot(n1){Nature}
\istb{1/3}[al]
\istb{1/3}[fill=white]
\istb{1/3}[ar]
\endist

\xtdistance{15mm}{30mm}
\istroot(11)(n1-1){$t_1$}
\istbA<level distance=.5*\xtlevdist>{m'}[al]{0,0}
\istbA<level distance=\xtlevdist>{m}[ar]
\endist

\istroot(12)(n1-2){\qquad$t_2$}
\istbA<level distance=.5*\xtlevdist>{m'}[al]{0,0}
\istbA<level distance=\xtlevdist>{m}[ar]
\endist

\istroot(13)(n1-3){$t_3$}
\istbA<level distance=.5*\xtlevdist>{m'}[al]{0,0}
\istbA<level distance=\xtlevdist>{m}[ar]
\endist

\xtInfoset[dashed](11-2)(13-2){Receiver}[xshift=8mm]

\xtdistance{20mm}{20mm}
\istroot(23)(11-2){}
\istb{r_1}[al]{1,3}
\istb{r_2}[fill=white]{-2,0}
\istb{r_3}[ar]{-2,0}
\endist

\istroot(23)(12-2){}
\istb{r_1}[al]{-2,3}
\istb{r_2}[fill=white]{1,0}
\istb{r_3}[ar]{-2,2}
\endist

\istroot(23)(13-2){}
\istb{r_1}[al]{-2,0}
\istb{r_2}[fill=white]{-2,3}
\istb{r_3}[ar]{1,2}
\endist

\end{istgame}
\end{figure}

\begin{example}
We conclude with a finite game version of Spence's job-market model in Figure \ref{SJM}. 
There are two equilibrium outcomes, one pooling and one separating and hyperstability selects the separating equilibrium outcome. 
We can analyze the game in the same manner as the Beer-and-Quiche game to show that the pooling equilibrium (where the Sender sends $e$ for both types, and the Receiver replies with $w$ on equilibrium path, and off equilibrium path, with a probability of $w$ at least $1/2$) has index~$0$.

Interestingly, when we consider the general Spence model \citep[Section V]{CK1987} with any number of types and a continuum of education levels and wages, the same reasoning can be used to eliminate any pooling or semi-pooling equilibrium outcome, leaving only the fully separating one (the so-called Riley outcome) left.
To show that, consider a (semi-)pooling equilibrium outcome where a certain number of types pool at education level $e_p$. 
To prevent the highest of the pooling types, denoted $\theta^*$, from deviating from $e_p$ to a higher education level $e > e_p$ where the associated equilibrium wage $w(e) < (\theta^* e)$ is offered (off-path) after $e$ renders a deviation of all pooling types but $\theta^*$ to $e$ a strictly inferior reply (this follows from single-crossing). 
Therefore, we can eliminate all corresponding pure strategies that assign $e$ to all such pooling types. 
After this elimination, the extensive-form has a subgame after education level $e$. Backward induction now implies that firms best-reply with wage level $(\theta^*e) > w(e)$, which upsets the (semi-)pooling equilibrium. 

We emphasize that the characterization in our main theorem is not known to hold in this infinite dimensional setting.
The purpose here is only to remark that the same two principles invoked previously to compute the index in the finite game setting (elimination of strictly inferior replies and backward induction) can also be invoked in the Spence model to eliminate pooling and semi-pooling equilibria.

\begin{figure}
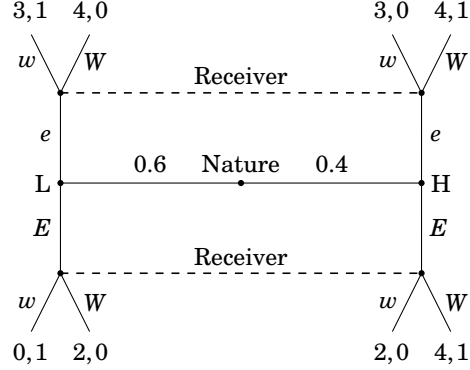

\caption{Spence job-market game}\label{SJM}
\bigskip
\bigskip
\begin{istgame}[scale=0.8, every node/.style={scale=0.8}]
\xtdistance{0mm}{60mm}
\istroot(o)(0,0){Nature} 
\istb{0.6}[a]{$L$}[l] 
\istb{0.4}[a]{$H$}[r] 
\endist 

\xtdistance{0mm}{30mm}

\istroot[left](w)(o-1)
\istb{e}[l]
\istb{E}[l]
\endist

\istroot'[right](s)(o-2)
\istb{e}[r]
\istb{E}[r]
\endist

\xtInfoset[dashed](w-1)(s-1){Receiver}[a]
\xtInfoset[dashed](w-2)(s-2){Receiver}[a]

\xtdistance{10mm}{10mm}

\istroot[down](a)(w-2)
\istb{w}[l]{0,1}[south]
\istb{W}[r]{2,0}[south]
\endist

\istroot[down](b)(s-2)
\istb{w}[l]{2,0}[south]
\istb{W}[r]{4,1}[south]
\endist

\istroot[up](c)(w-1)
\istb{W}[r]{4,0}[north]
\istb{w}[l]{3,1}[north]
\endist

\istroot[up](d)(s-1)
\istb{W}[r]{4,1}[north]
\istb{w}[l]{3,0}[north]
\endist

\end{istgame}
\end{figure}
\end{example}

\appendix

\section{Proof of Proposition~\ref{nonzeroindex}}\label{app:nonzeroindex}
We introduce enabling strategies.
Fix a two-player finite game tree $\Gamma$.
For each player $n=1,2$ and $z \in Z$, let $A_n(z)$ be the set of Player $n$'s actions that precede $z$.
The set of Player $n$'s actions that are $\prec$-maximal in $A_n(z)$ for some $z \in Z$ are denoted $L_n$ and called \textit{Player $n$'s last actions in $\Gamma$}. 
If $L_n = \emptyset$ then $n$ is a dummy player.
Player $n$'s last action that precedes $z$ is $\ell_n(z)$.  
Given last action $i \in L_n$, let $u \in U_n$ satisfy $i \in A_n(u)$. 
The subset $S_n(i)\subset S_n$ consists of those strategies that at every information set $u'$ of Player $n$ that precede $u$ prescribe the unique action leading to $u$. 
Define $q^e_n: \S_n \to [0,1]^{L_n}$ so that for every $\sigma_n \in \S_n$ and $i \in L_n$ we have $q^e_n(\s_n) \equiv (\sum_{s_n \in S_n(i)}\s_n(s_n))_{i \in L_n}$. 
The map $q^e_n$ is affine and therefore $C_n \equiv q^e_n(\S_n)$ is a polytope in $\Re^{L_n}$. 
The polytope $C_n$ is the enabling strategy set of Player $n$. 
If Nature moves in $\Gamma$ we analogously denote the set of Nature's last moves by $L_0$.
Denote $L \equiv L_0\times L_1 \times L_2.$

Given extensive-form payoffs $G \in \mathcal{G}$ and a profile of enabling strategies $(c_1,c_2)\in C\equiv C_1\times C_2$, we construct payoffs are as follows.
First, for every Nature's last move $i_0\in L_0$, define $c_0(i_0)\equiv\prod_{i_0'\in A_0,i'\preceq i_0}\rho(i'_0)$.
Each of Player $n$'s last action in $L_n$ is associated with a unique sequence of Player $n$'s past actions that lead to the information set in which such a last action is available.
Therefore, a vector of last actions $i = (i_0,i_1,i_2)$ either defines a path from the root to a terminal node or it does not. 
In the latter case, define $g_n(i) \equiv 0$. 
In the former case, letting $z$ be the terminal nodes that is reached by such path, define $g_n(i) \equiv G_n(z)$. 
Player $n$'s enabling payoff function $V^{e}_n: C \to \Re$ is defined by
\begin{equation}
V^{e}_n(c) \equiv \sum_{i \in L}g_n(i)c_0(i_0)c_1(i_1)c_2(i_2).
\end{equation}
Note that the payoff function $V^{e}_n$ is affine in each coordinate and, therefore, defines a polytope-form game $\mathfrak{V}^{e} = (C_1,C_2,V^e_1,V^e_2)$ which is called the \textit{enabling-form of $G$}.

To prove Proposition~\ref{nonzeroindex}, let $d_n$ be the dimension of $C_n$.
Since $K$ induces a unique outcome with full support, then $q^e(K) = p$ is an isolated equilibrium in enabling strategies, located in the interior of $C$. 
For each $n=1,2$, take a $d_n$-simplex $F_n$ contained in the relative interior of the enabling strategy set $C_n$ and containing $p_n$ in its relative interior.
If we restrict the enabling payoff function $V^{e}_n$ to $D_n$, then this defines a normal-form game with an isolated completely mixed equilibrium $p$, with the same index as the polytope-form index of the enabling-form (cf. \citealp{LP2023}). It is known that completely mixed isolated equilibria of normal-form games have indexes $+1$ or $-1$, since the payoff matrices are non-singular (and Shapley's formula gives their index, see \citealp{S1974}).  
In particular, $p$ has non-zero index. 
The equivalence between the index in normal and polytope forms (cf. \citealp{LP2023}) then implies that $K$ has non-zero index.

\section{Proof of Proposition~\ref{prop:genericity}}\label{app:genericity}

Fix a two-player game-tree $\Gamma$ with perfect recall. The set of terminal nodes is $Z$ and the set of terminal payoffs is denoted $\Re^{2|Z|}$. A typical element of $\Re^{2|Z|}$ is denoted $G$. We describe a system of polynomial equations whose zeros essentially correspond to an assignment of payoffs $G$ and: (i) an equilibrium of $G$ in behavior strategies; (ii) a \textit{deviation} behavior strategy of player 1 that grants him the same payoff as on-path; (iii) player 2 also best-replies to the deviation strategy of player 1. We prove that set of payoffs at which these three conditions are satisfied has dimension strictly lower than $2|Z|$. The proof where the roles of players 1 and 2 are reversed is obviously analogous. This implies our desired result.

We start with preliminary definitions required to describe the system.  Recall that, given player $n=1,2$, a behavior strategy $b_n = (b_n(i))_{i \in A_n}$ satisfies $(b_n(i))_{i\in A_n(u)}\in\Delta(A_n(u))$ for every $u\in U_n$. Given player 1's set of actions $A_1$, consider subsets $C_1$ and $C'_1$ such that (1) $C_1\cup C'_1=A_1$ and $C_1 \cap C'_1 = \emptyset$ and (2) each action in $C_1$ has all predecessors in $C_1$ or no predecessor. Note that there are finitely many pairs $(C_1,C'_1)$ satisfying this condition. For each $n=1,2$, the collection $U^+_n(C_1)\subset U_n$ is made of those information sets of player $n$ that can be reached (for some strategy of player 2) when player 1 is restricted to choosing actions in $C_1$. Let $U^0_n(C_1)\equiv U_n\setminus U^+_n(C_1)$, i.e., the set of information sets of player $n$ that can be reached when player 1 plays an action in $C'_1$ with positive probability. If $b_n$ is a behavior strategy of player $n$, then $b^+_n$ and $b^0_n$ are the restrictions to, respectively, $U_n^+(C_1)$ and $U^0_n(C_1)$. For $u\in U^+_n(C_1)$, let $G^+_{n,u,i}(b^+_1,b^+_2)$ be the payoff to player $n$ when (i) at every information set of player $n$ that precedes $u$, player $n$ plays the unique action that leads to $u$, (ii) player $n$ chooses action $i$ at $u$; and (iii) play at every other information set in $U^+_1(C_1)\cup U^+_2(C_1)$ is prescribed by $(b^+_1,b^+_2)$. Note that the definition of $G^+_{1,u,i}(b^+_1,b^+_2)$ can be generalized to $G^0_{1,u,i}(b^+_1,b^+_2,b^0_1,b^0_2)$, by requiring (i) and (ii) for $u\in U^0_1(C_1)$ and $i\in A_n(u)$, or $u\in U^+_1(C_1)$ and $i\in A_n(u)\cap C'$ and letting play at all other information sets be directed by $(b^+_1,b^+_2,b^0_1,b^0_2)$. Suppose now that $\vert{U}^0_2(C_1)\vert\geq 1$ so that: 

$$\tilde{C}_1\equiv  C'_1 \bigcap\left(\bigcup_{u\in U^+_1(C_1)}{A_1(u)}\right)\neq\varnothing.$$

Let $\a$ denote an arbitrary probability distribution over $\tilde{C}_1$. Note that for any $\a$, there exists a behavior strategy $b'_1$ for player 1 that induces $\a>0$ in the following sense:

\begin{equation}
\a(i)=\frac{\prod_{j\preceq i}b'_1(j)}{\sum_{i'\in\tilde{C}_1}\prod_{j\preceq i'}b'_1(j)}\text{ for each }i\in\tilde{C}_1.
\end{equation}

For $u \in U^0_2$, let $G^0_{2,u,i}(\a,b^0_1,b^0_2)$ be the payoff to player $2$ when (i) at every information set of player $2$ that precedes $u$, player $2$ plays the unique action that leads to $u$, (ii) the probability distribution $\a$, together with $b^0_1$, is used to compute beliefs at $u$, in the sense that any $b'_1$ inducing $\a$ is used together with $b^0_1$ to compute beliefs at $u$, (iii) player $2$ chooses action $i\in A_2(u)$; and (iv) play at every other information set in $U^0_1(C_1)\cup U^0_2(C_1)$ is prescribed by $(b^0_1,b^0_2)$.\footnote{~The expression $G^0_{2,u,i}(p,b^0_1,b^0_2)$ ignores player 2's utility at final nodes $z$ that do not follow some information set in $U^0_2$ as they do not affect optimal behavior at those information sets.}

Fix from now on a pair $(C_1, C'_1)$, let $C = C_1 \cup A_2$. For each $n$ and each $u\in U_n$ fix some action $k_u\in A_n(u)\setminus\tilde{C}_1$. Let $\hat{U}^+_n(C_1)$ (respectively, $\hat{U}^0_n(C_1)$) be the collection of information sets in $U^+_n(C_1)$ (respectively, $U^0_n(C_1)$) such that no other information set of player $n$ in $U^+_n(C_1)$ (respectively, $U^0_n(C_1)$) precedes them. We call this collection initial. The polynomial system presented below has variables $(G,(\alpha,\beta),(\mu_1,\mu_2),\mu^0_2)\in\mathbb{R}^{2\vert Z\vert}\times\mathbb{R}^{A}_{++}\times\mathbb{R}^{2}\times\mathbb{R}$:

\begin{alignat*}{3}
&g^+_{n,u,i}&&=G^+_{n,u,i}(\beta^+_1,\beta^+_2)-\mu_n,&&\quad\forall n\in N ,u\in U_n^+(C_1), i\in A_n(u)\cap C, i\neq k_u\text{ if }u\notin \hat{U}_n^+(C_1),\\
&g^0_{1,u,i}&&=G^0_{1,u,i}(\beta)-\mu_1,&&\quad\forall u\in U_1^+(C_1), i\in A_1(u)\cap\tilde{C}_1,\\
&g^0_{1,u,i}&&=G^0_{1,u,i}(\beta)-\mu_1,&&\quad\forall u\in U_1^0(C_1), i\in A_1(u), i\neq k_u,\\
&g^0_{2,u,i}&&=G^0_{2,u,i}(\alpha,\beta^0_1,\beta^0_2)-\mu^0_2,&&\quad\forall u\in U_2^0(C_1), i\in A_2(u), i\neq k_u\text{ if }u\notin\hat{U}^0_2(C_1),\\
&f_{n,u}&&=\sum_{i\in A_n(u)\cap C}{\beta_{n,u,i}}-1,&&\quad\forall n\in N,u\in U^+_n(C_1), \\
&f^0_{n,u}&&=\sum_{i\in A_n(u)}{\beta_{n,u,i}}-1,&&\quad\forall n\in N,u\in U^0_n(C_1), \\
&\tilde{f}&&=\sum_{i\in\tilde{C}_1}{\alpha_i}-1.&&
\end{alignat*}

Note that $(\alpha,\beta)$ is in $\mathbb{R}^{A}_{++}$: $\beta$ in particular can be seen as a collection of two subcollections of variables, $\beta_1$ for player 1 and $\beta_2$ for player $2$. $\beta_1$ comprises the collection of real variables of player 1 $(\b_{1,u,i})$ where $u$ is an information set of player 1 in $U_1^+(C_1)$ and $i$ an action available at that information set; or $u$ is an information set in $U^0_1(C_1)$ and $i$ an action available at that information set. Note that the collection has no variables in $\b_{1,u,i}$ where $u$ is an information set in $U_1^+(C_1)$ and $i$ is an available action in $u$ that is in $C'_1$. $\beta_2$ comprises the collection of $(\b_{2,u,i})$ for all information sets of player 2 and all available actions $i$ at that information set. Furthermore, $\alpha \in \Re^{\tilde C_1}_{++}$, which justifies $(\alpha,\beta) \in \mathbb{R}^{A}_{++}$. 

The set of zeros of the polynomial system is denoted $M$. The set $M$ corresponds precisely to the games defined by $G$, their equilibria $\b$ in behavior strategies where player 1 only plays actions in $C_1$ and player 2 completely mixes on all his information sets; and an equilibrium of the excluded game $(b'_1, \b_2)$ (where $b'_1$ induces $\a$) paying the same as $\b$ and where (i) all actions in $\tilde{C}_1$ are played with positive probability by $b'_1$; and (ii) at all information sets $U^0_{1}(C_1)$, $b'_1$ completely mixes. Note that any $b'_1$ which induces $\a$ is such that the beliefs induced on each information set of $U^0_2(C_1)$ that is reached in $(b'_1, \b_2)$ with positive probability is identical (because the relative probabilities assigned to the nodes in each such information set are always the same for any such $b'_1$). The probability $\a$ therefore summarizes a deviation strategy $b'_1$ in what is relevant for player $2$, namely, the belief it induces in the information sets of player 2. Polynomial $g^+_{n,u,i}$ comes from optimality on-path, $g^0_{1,u,i}$ comes from the optimality of players 1's deviation, and $g^0_{2,u,i}$ comes from the optimality of player 2's off-path reply to player 1's deviation.

Similarly to \citet{GW2001}, we have a key condition on the system that is imposed on three groups of polynomials: in the first group, it is  $i\neq k_u\text{ if }u\notin \hat{U}_n^+(C_1)$, which is the exact same condition as in that paper: the optimality of all but the $k_u$ action at a non-initial information set which is reachable by actions in $C_1$ implies the optimality of $k_u$, since all information sets of $U^{+}_n(C_1)$ are on path of $\beta^+_1$ and $\beta^+_2$. The analogous condition is also invoked for player $2$ in the fourth group of polynomials. Finally, the condition is also present in the third group of polynomials (note the absence of the "if" part: it would be redundant since all information sets in $U^0_1(C_1)$ are non-initial by definition.

If $i\in A_n(u)$, $u\in U_n$, let $\bar{Z}(i)$ be the set of terminal nodes that come after $i$ and after each action $k_{u'}$ available at the corresponding information set $u'$ of player $n$ that comes after $u$.
For each $z\in \bar Z(i)$, $\partial g^+_{n',u',i'}/\partial G_{n,z}\neq 0$ if and only if $n'=n$ and either $u'=u$ and $i'=i$ or $u'$ is an information set of $n$ that precedes $u$ and $i'$ is the unique choice that from $u'$ leads to $u$. The same is true about $\partial g^0_{n',u',i'}/\partial G_{n,z}\neq 0$. Similarly to \citet{GW2001}, we can make an assignment of rows to columns in the Jacobian matrix of the polynomials above that shows that it has an upper triangular structure.

We now show that the dimension of $M$ is less than $2|Z|-1$. First note that we have $2\vert Z\vert+\vert A\vert+2+1$ variables in the system. In the first group of polynomials, there are, for player $n$,  $\sum_{u \in U^{+}_n(C_1)}|A_n(u)| - U^+_n(C_1) + \hat{U}^+_n(C_1) $ equations. The second group has $\tilde{C}_1$ equations. The third group of has  $\sum_{u \in U^{0}_1(C_1)}|A_n(u)| - |U^{0}_1(C_1)|$ and the fourth has $\sum_{u \in U^{0}_2(C_1)}|A_2(u)| - U^0_2(C_1) + \hat{U}^0_2(C_1)$ equations. The fifth, sixth and seventh have, together, $|U_1| + |U_2| + 1$ equations. Therefore, the dimension of $M$ is:
\begin{multline*}
2\vert Z\vert+\vert A\vert+2+1-(\vert C\vert-\vert U^+_1(C_1)\vert-\vert U^+_2(C_1)\vert+\vert\hat{U}^+_1(C_1)\vert+\vert\hat{U}^+_2(C_1)\vert)-\vert \tilde{C}_1\vert \\
-(\vert A\setminus (C\cup \tilde{C}_1)\vert - \vert U^0_1(C_1) \vert -\vert U^0_2 (C_1)\vert + \vert\hat{U}^0_2(C_1)\vert) - \vert U_1\vert - \vert U_2 \vert-1\leq 2\vert Z\vert-1,
\end{multline*}
where we have used $\vert\hat{U}^+_n(C_1)\vert\geq 1$, $\vert\hat{U}^0_2(C_1)\vert\geq 1$, $\vert U_n\vert=\vert U^+_n(C_1)\vert+\vert U^0_n(C_1)\vert$, and $\vert A\vert=\vert {C}\vert+\vert \tilde{C}_1\vert+\vert A\setminus (C\cup \tilde{C}_1)\vert$.

If the assumption we made above (i.e. $|\hat{U}^0_2(C_1)| \geq 1$) does not hold, then no information set of player 2 is excluded when player 1 randomizes in $C_1$. Therefore, the deviations of player 1 lead immediately (with the exception of moves by Nature) to a terminal node and, applying the result from \citet{GW2001}, we obtain that only a non-generic choice of terminal payoffs could make player $1$ indifferent between deviating or not. 

We have therefore obtained that $M$ is a semi-algebraic surface of dimension less than $2|Z|-1$. To finalize the argument, we need the following lemma \citep[Theorem 2.8.8]{BCR1998}.
\begin{lemma}\label{lm:sa}
If $X$ is a semi-algebraic set and $f:X\to\mathbb{R}^m$ is a semi-algebraic function then $\dim(f(X))\leq\dim(X)$.
\end{lemma}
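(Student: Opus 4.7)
The plan is to prove this via the semi-algebraic cell decomposition. Recall that the dimension of a non-empty semi-algebraic set $X \subseteq \mathbb{R}^n$ can be characterized as the maximum of the dimensions of the cells appearing in any cylindrical algebraic decomposition (equivalently, any finite stratification by semi-algebraic $C^\infty$ submanifolds) that is compatible with $X$; in particular, $\dim(X)$ is the maximum $d$ such that $X$ contains a subset semi-algebraically homeomorphic to the open cube $(0,1)^d$. Furthermore, dimension is monotone under inclusion and satisfies $\dim(A \cup B) = \max\{\dim(A), \dim(B)\}$ for any two semi-algebraic sets.

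The key step is to apply the $C^p$-cell decomposition theorem for semi-algebraic functions (which is a refinement of the ordinary cell decomposition used in BCR): the graph $\Gamma_f \subseteq X \times \mathbb{R}^m$ of $f$ is a semi-algebraic set, and one can partition $X$ into finitely many semi-algebraic cells $X_1, \dots, X_k$ such that (i) each $X_i$ is semi-algebraically $C^1$-diffeomorphic to an open cube $(0,1)^{d_i}$ with $d_i \leq \dim(X)$, and (ii) the restriction $f|_{X_i}$ is $C^1$. The hardest technical input here is this smooth cell decomposition; granting it, the rest is straightforward.

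On each cell $X_i$, the restriction $f|_{X_i}$ is a $C^1$ map from (the diffeomorphic image of) an open subset of $\mathbb{R}^{d_i}$ into $\mathbb{R}^m$, so its differential has rank at most $d_i$ at every point. By the constant rank theorem (or, in the form closer to the intended audience, by Sard's theorem applied stratum-wise), the image $f(X_i)$ is a semi-algebraic set of dimension at most $d_i$. Combining across cells, $f(X) = \bigcup_{i=1}^k f(X_i)$, whence
\[
\dim(f(X)) = \max_{1 \leq i \leq k} \dim(f(X_i)) \leq \max_{1 \leq i \leq k} d_i \leq \dim(X),
\]
as desired.

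The main obstacle is the invocation of the smooth semi-algebraic cell decomposition: this is what ensures the restriction of $f$ is $C^1$ (rather than merely continuous) on each piece, and without this regularity the rank argument bounding $\dim f(X_i)$ by $d_i$ would not directly apply. Since the paper is content to cite BCR Theorem 2.8.8 for the whole statement, in practice one simply quotes the cell decomposition theorem from the same reference rather than reproving it.
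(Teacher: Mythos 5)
The paper does not prove this lemma; it simply cites \citet[Theorem~2.8.8]{BCR1998}, so there is no proof of record to compare against. What I can do is compare your route with the one BCR actually take, and flag one step that is looser than you acknowledge.

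The standard BCR argument avoids smooth structure entirely. One passes to the graph $\Gamma_f\subset\mathbb{R}^n\times\mathbb{R}^m$, which is semi-algebraic and semi-algebraically homeomorphic to $X$ via the first projection, hence $\dim\Gamma_f=\dim X$. Then $f(X)$ is the image of $\Gamma_f$ under the coordinate projection to $\mathbb{R}^m$, and the lemma reduces to the statement that coordinate projections do not increase semi-algebraic dimension. That reduction is handled, coordinate by coordinate, using cylindrical algebraic decomposition: every cell of $\mathbb{R}^{k+1}$ over a base cell $C'\subset\mathbb{R}^k$ is either a graph over $C'$ (so $\dim=\dim C'$) or a band over $C'$ (so $\dim=\dim C'+1$), and in either case its projection is $C'$ with $\dim C'\le\dim$ of the cell. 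No $C^1$-regularity is invoked anywhere. Your approach instead takes the $C^1$ semi-algebraic cell decomposition compatible with $f$ as the key input and argues locally via differentials. Both are valid and both ultimately rest on cell decomposition; yours is perhaps more geometrically transparent but uses a strictly stronger decomposition theorem (smooth cells) where BCR get by with the cruder cylindrical one.

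The one place your sketch is too quick is the sentence bounding $\dim f(X_i)$. The constant rank theorem requires the rank of $Df$ to be locally constant, which a $C^1$ map on a cell need not satisfy, and Sard's theorem bounds the measure of critical values rather than the dimension of the full image, so neither directly gives $\dim f(X_i)\le d_i$. Two standard fixes: either (a) stratify $X_i$ further into the semi-algebraic loci where $\mathrm{rank}\,Df$ is exactly $r$, decompose each into cells, and apply the constant rank theorem cell by cell to get images of dimension $\le r\le d_i$; or (b) observe that $f|_{X_i}$ is locally Lipschitz, hence does not raise Hausdorff dimension, and that for semi-algebraic sets Hausdorff dimension agrees with semi-algebraic dimension. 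Either patch closes the gap, and with it your argument is correct.
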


\begin{theorem}
The subset of terminal payoffs of $\Re^{2|Z|}$ for which the induced game has a Nash equilibrium outcome with player 1 indifferent between such an outcome and some equilibrium in the excluded game $\mathbb{G}^1$ is a semi-algebraic set of dimension lower than $2|Z|$.
\end{theorem}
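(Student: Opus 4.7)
The plan is to leverage the dimension bound on $M$ already established in the previous paragraphs and transport it to the payoff coordinates via projection. Let $\pi: \Re^{2|Z|} \times \Re^A_{++} \times \Re^2 \times \Re \to \Re^{2|Z|}$ denote the projection onto the $G$-coordinates, i.e. $\pi(G,\alpha,\beta,\mu_1,\mu_2,\mu^0_2) = G$. Since $\pi$ is a polynomial (and hence semi-algebraic) map and $M$ is semi-algebraic with $\dim(M) \leq 2|Z|-1$, Lemma~\ref{lm:sa} yields $\dim(\pi(M)) \leq 2|Z|-1$. Repeating this construction for each of the finitely many admissible pairs $(C_1,C'_1)$ and taking the union produces a semi-algebraic set $B^\star \subseteq \Re^{2|Z|}$ of dimension at most $2|Z|-1$.

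Next I would verify that every payoff vector $G$ in the ``bad'' set $B$ identified in the statement lies in $B^\star$. Given such a $G$, fix an equilibrium $\sigma$ of $G$ with outcome $Q$ and an equilibrium $(\zeta^1_1, \sigma^0_2)$ of player 1's excluded game $\mathbb{G}^1$ that pays player 1 exactly $G_1(Q)$. Choose $C_1$ to be the set of actions of player 1 that appear on the path of $\sigma$, and let $C'_1 = A_1 \setminus C_1$. This pair satisfies the structural conditions imposed on $(C_1, C'_1)$, and the behavior strategies associated with $\sigma$ together with $(\zeta^1_1, \sigma^0_2)$ and the probability distribution $\alpha$ on $\tilde{C}_1$ induced by $\zeta^1_1$, furnish a zero of the corresponding polynomial system. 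Hence $G \in \pi(M_{(C_1,C'_1)}) \subseteq B^\star$. The set $B$ is itself semi-algebraic (being first-order definable over $\Re$ by Tarski-Seidenberg), so the inclusion $B \subseteq B^\star$ yields $\dim(B) \leq 2|Z|-1 < 2|Z|$, as required.

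The main technical obstacle is handling the requirement that the witness $(\alpha,\beta)$ lie in $\Re^A_{++}$, whereas equilibrium strategies need not be completely mixed at all the information sets covered by the $\beta$-variables. I would address this by enlarging the enumeration: in addition to $(C_1,C'_1)$, one indexes cases by the subsets of information sets at which the prescribed equilibrium strategies have full support. Every coordinate of $\beta$ that is forced to vanish removes a corresponding first-order optimality equation from the system but imposes an additional linear equation, so the dimension count in the preceding paragraph does not deteriorate; in fact the refined strata have no larger dimension. Since only finitely many such refinements arise, the union of their projections is still semi-algebraic of dimension at most $2|Z|-1$ and still contains every $G \in B$, which closes the argument.
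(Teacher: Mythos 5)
Your proposal follows essentially the same route as the paper: project the variety $M$ onto the payoff coordinates via Lemma~\ref{lm:sa}, enumerate over the finitely many combinatorial configurations, take the union of the resulting projections, and observe that the ``bad'' set is contained in this union and is itself semi-algebraic by Tarski--Seidenberg. The one place where you add substance beyond the paper's terse final paragraph is in the handling of the positivity restriction $(\alpha,\beta)\in\Re^A_{++}$, which you address by stratifying further over support configurations and arguing that each such stratum has no larger dimension. The paper addresses the same obstacle in a single sentence (``eliminating choices that are not in their corresponding support''), so your explicit stratification is a welcome elaboration rather than a different method.

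Two small imprecisions worth tidying. First, the phrase ``subsets of information sets at which the prescribed equilibrium strategies have full support'' should really refer to subsets of \emph{actions} at each information set (i.e., the supports themselves), since two different supports at the same information set give distinct strata; the intent is clear but the wording is off. Second, your dimension-bookkeeping heuristic (``removes a corresponding first-order optimality equation \dots but imposes an additional linear equation'') conveys the right net count, but to be airtight one should check that the triangular structure of the Jacobian used in the paper's dimension bound survives the deletion; equivalently, one can simply observe that deleting an unplayed action and its optimality equation corresponds to re-running the original count on the sub-tree induced by the surviving actions, which gives the same bound $2|Z|-1$ after accounting for the unconstrained payoffs at the pruned terminal nodes. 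Neither issue invalidates the argument, and the conclusion matches the paper's.
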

\begin{proof}
Our previous construction is such that $A_1$ is a disjoint union of $C_1$ and $C'_1$ and only considers equilibria $(b_1,b_2)$ and a deviation $b'_1$ such that the support of $b_1$ is $C_1$, the support of $b_2$ is $A_2$, and $b'_1$ induces a full support $\a$ and completely mixes after all $\tilde{C}_1$ at all subsequent information sets. In this case, we consider the projection $\pi: M \to \Re^{2|Z|}$ over $\Re^{2|Z|}$ and apply Lemma~\ref{lm:sa}, thus obtaining that $\pi(M)$ is a semi-algebraic set with dimension less than $2|Z|$. For the general case, note that any equilibrium $(b_1,b_2)$ and deviation $b'_1$ can be made to satisfy the restrictions above by eliminating choices that are not in their corresponding support. Since the set of all choices is finite, we only need to take the intersection of finitely many, full-dimensional, semi-algebraic subsets of games to prove the result.
\end{proof}

Proposition~\ref{prop:genericity} readily follows from the theorem above.

\section{Proof of Lemma \ref{lm:excluded}}\label{sec:proofLemmaExcluded}

The proof of the lemma in Step 2 is divided in three parts. Before we present it, a preliminary fixed-point theoretic result must be established (Lemma \ref{lm:function} below). The proof of this auxiliary result resembles closely the proof of Proposition 3.2 in \cite{GLP2023} with minor changes. 

\begin{lemma}\label{lm:function}
For every open neighborhood $W^1$ of the graph of the best-reply correspondence $BR^1$ of game $\mG^1$, there are functions $\tilde{f},f^0:\S^1\to\S^1$ such that $\mathrm{Graph}(\tilde{f})\subset W^1$, $f^0\vert_{\S^1\setminus\mathrm{int}(K^{1,2\eta})}=\tilde{f}\vert_{\S^1\setminus\mathrm{int}(K^{1,2\eta})}$, and $f^0$ has no fixed point in $K^{1,2\eta}$.
\end{lemma}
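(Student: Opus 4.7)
\textit{Proof sketch.}
The plan is to construct $\tilde{f}$ as a continuous approximation of $\BR^1$ with graph in $W^1$, and then surgically modify $\tilde{f}$ on $\mathrm{int}(K^{1,2\eta})$ to produce a map $f^0$ with no fixed points in $K^{1,2\eta}$. I would proceed in three steps.

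First, since $\BR^1$ is upper hemicontinuous with nonempty, compact, convex values, the set of equilibria of $\mG^1$ in $K^1$ is closed; by Assumption \ref{P2} and the choice of $\eta$ implicit in the main text (so that every equilibrium of $\mG^1$ inside $K^1$ yields payoff strictly less than $G_1(Q)-2\eta$ to player $1$), every such equilibrium lies in the open set $\mathrm{int}(K^{1,2\eta})$. By upper hemicontinuity one can pick a neighborhood $V^1\subseteq W^1$ of $\mathrm{Graph}(\BR^1)$ so small that every continuous function $g:\S^1\to\S^1$ with $\mathrm{Graph}(g)\subset V^1$ has all its fixed points contained in $\mathrm{int}(K^{1,2\eta})\cup(\S^1\setminus K^1)$. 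Using the standard partition-of-unity construction over a sufficiently fine triangulation of $\S^1$ (assigning to each vertex a value in $\BR^1$ at that vertex), I build a continuous $\tilde{f}:\S^1\to\S^1$ with $\mathrm{Graph}(\tilde{f})\subset V^1\subseteq W^1$, so that, in particular, $\tilde{f}$ has no fixed points on $\partial K^{1,2\eta}$.

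Second, I would use Step~1 of Section~\ref{sec:step1}, which states that $K^1$ has index $0$ with respect to $\BR^1$. By the homotopy invariance of the index (together with the excision Property~\ref{I6}) and the inclusions of fixed point sets from the previous paragraph, the total fixed-point index of $\tilde{f}$ on the polytope $K^{1,2\eta}$ equals the index of $K^1$ with respect to $\BR^1$, hence equals~$0$.

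Third, I would invoke the following classical extension result from degree/index theory: if $D$ is a compact convex polytope inside a convex compact set $Y$ in a finite-dimensional affine space, and $h:D\to Y$ is continuous with no fixed point on $\partial D$ and total fixed-point index zero on $D$, then $h\vert_{\partial D}$ admits a continuous extension $\hat{h}:D\to Y$ that has no fixed point in $D$. Applied to $D=K^{1,2\eta}$, $Y=\S^1$, and $h=\tilde{f}\vert_{K^{1,2\eta}}$, this yields the required $\hat{h}$; I then define $f^0$ to equal $\tilde{f}$ on $\S^1\setminus\mathrm{int}(K^{1,2\eta})$ and $\hat{h}$ on $K^{1,2\eta}$. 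Continuity at $\partial K^{1,2\eta}$ is automatic since the two definitions agree there, and by construction $f^0$ has no fixed point in $K^{1,2\eta}$.

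The main obstacle is the extension result invoked in the third step. The standard argument considers the displacement $d_{\tilde{f}}=\mathrm{id}-\tilde{f}:\partial K^{1,2\eta}\to\mathbb{R}^{\dim\S^1}\setminus\{0\}$, whose Brouwer degree coincides with the fixed-point index and therefore vanishes; from this vanishing one extends $d_{\tilde{f}}$ to a nowhere-zero continuous map on all of $K^{1,2\eta}$, and $\mathrm{id}$ minus this extension produces a map into $\mathbb{R}^{\dim\S^1}$ with no fixed points in $K^{1,2\eta}$. A final composition with a retraction onto $\S^1$, chosen supported in a tubular neighborhood of the image of $\tilde{f}$, brings the map back into $\S^1$ without creating new fixed points. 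This is essentially the construction in the proof of Proposition~3.2 of \cite{GLP2023}, adapted to the domain $K^{1,2\eta}$ rather than a simplex; the required adaptation is minor because $K^{1,2\eta}$ is itself a full-dimensional convex polytope by Assumption~\ref{P2}.
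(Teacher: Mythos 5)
Your proposal follows the same overall strategy as the paper (approximate $\BR^1$ by a continuous map $\tilde{f}$, observe that the displacement has degree zero over $K^{1,2\eta}$, extend a nowhere-zero displacement from $\partial K^{1,2\eta}$ over the interior via the Hopf Extension Theorem, and then modify the resulting map so it takes values in $\S^1$). The first two steps are essentially correct, and the decision to invoke the index-$0$ conclusion of Step~1 and to reduce to a degree-zero boundary map is exactly right.

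The genuine gap is in your final step. You propose to compose $\mathrm{id}-\hat{d}$ (which maps into the ambient affine space, not into $\S^1$) with ``a retraction onto $\S^1$ chosen supported in a tubular neighborhood of the image of $\tilde{f}$,'' claiming this brings the map into $\S^1$ ``without creating new fixed points.'' That claim does not follow: if the retraction $r$ is, say, the nearest-point projection, then $r(\sigma-\hat{d}(\sigma))=\sigma$ is perfectly possible whenever $\hat{d}(\sigma)$ points ``outward'' from $\S^1$ at $\sigma$, i.e.\ whenever $\sigma$ is the closest point of $\S^1$ to $\sigma-\hat{d}(\sigma)$. Nothing in your construction rules this out, and the tubular-neighborhood qualification is too vague to repair the argument.

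The paper's construction handles this point explicitly and differently: instead of pushing $\sigma-\hat{d}^0(\sigma)$ back into $\S^1$ by a retraction (which can reverse the direction of the step), it \emph{scales} the displacement along its own direction. For each $\sigma$ one chooses the largest $\xi(\sigma)\in[0,1]$ with $\sigma-\xi(\sigma)\hat{d}^0(\sigma)\in\S^1$, and sets $f^0(\sigma)=\sigma-\xi(\sigma)\hat{d}^0(\sigma)$. One then verifies that $\xi(\sigma)>0$ for every $\sigma$, continuity of $\sigma\mapsto\xi(\sigma)$, and that $f^0$ agrees with $\tilde{f}$ off the interior of $K^{1,2\eta}$ (where $\hat{d}^0=d_{\tilde{f}}$ so $\xi=1$). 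Since $\hat{d}^0$ is nowhere zero on $K^{1,2\eta}$ and $\xi(\sigma)>0$, no new fixed points arise. This scaling step is the content you need to make the last paragraph rigorous; the retraction as you describe it does not suffice.
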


\begin{proof}Fix $W^1$ a neighborhood of the graph of the best-reply correspondence  $BR^1$ of game $\mG^1$, such that any continuous map $f$ whose graph is in $W^1$ has a displacement $d_f|_{K^{1, 2\eta}} = (id - f)|_{K^{1, 2\eta}}$ with degree $0$.

Denote by $\partial K^{1, 2\eta}$ the boundary of $K^{1, 2\eta}$ relative to the affine space generated by $\S^1$.  Note that $(K^{1, 2\eta}, \partial K^{1, 2\eta})$ is trivially homeomorphic to a ball with boundary. 
Define $\tilde f_t: \S^1 \to \S^1$ as follows. 
Letting $\t_n^0$ be the barycenter of $\S^1_n$ for each $n$, define $\forall \s \in \S^1 , \tilde f_t(\s) \equiv (1- t)f(\s) + t \t^0$. 
Taking $t>0$ sufficiently small, we ensure that $\partial K^{1, 2\eta}$ has no fixed points of $\tilde f_t$, and $K^{1, 2\eta}$ has index $0$ w.r.t. $\tilde f_t$. 
We fix such a $t$ and omit the parameter $t$ from $\tilde f_t$, denoting it by $\tilde f$. 
Let $A_1$ and $A_2$ be, respectively, the hyperplane in $\Re^{\vert S^1_1\vert}$ and $\Re^{\vert S^0_2\vert}$ through the origin and with normal $(1, \ldots, 1)$, and let $A = \prod_n A_n$. 
The map $d_{\tilde f} = id - \tilde f$ maps $\S^{1}$ into $A$. Therefore, $d_{\tilde f}: (K^{1, 2\eta}, \partial K^{1, 2\eta}) \to (A, A - 0)$ has degree zero. By the Hopf Extension Theorem (cf. Corollary 8.1.18, \citealp{SP1966}) there exists a map $\hat d^0$ from $K^{1, 2\eta}$ to $A - 0$ such that its restriction to $\partial K^{1,2\eta}$ coincides with $d_{\tilde f}$.  Extend $\hat d^0$ to the whole of $\S^1$ by letting it be $d_{\tilde f}$ outside $K^{1, 2\eta}$.  


For each $\s \in \S^1$, there exists $\xi \in (0, 1]$ such that $\s - \xi \hat{d}^0(\s) \in \S^1$.
Indeed, this is obvious for $\s \in K^{1,2\eta} \setminus \partial K^{1, 2\eta}$, since $\s$ belongs to $\S^1 \setminus \partial \S^1$.
For $\s \in \partial K^{1,2\eta} \cup (\S^1 \setminus K^{1,2\eta})$, we can take $\xi = 1$ as $\hat d^0 = d_{\tilde f}$.  
Now for each $\s$, let $\xi(\s)$ be the largest $\xi \in [0, 1]$ such that $\s - \xi(\s)\hat d^0(\s) \in \S^1$. 
Note that the map $\s \mapsto \xi(\s)$ is continuous. 
Define $f^0: \S^1 \to \S^1$ by $f^0(\s) \equiv \s - \xi(\s)\hat d^0(\s)$. 
The map $f^0$ is continuous, coincides with $\tilde f$ on $\S^1 \setminus (K^{1, 2\eta} \setminus \partial K^{1, 2\eta})$ and has no fixed point in $K^{1, 2\eta}$. 
\end{proof}

The reasoning in the three-part proof presented below resembles closely the reasoning presented in the proof of Theorem 3 in \citetalias{GW2005}. For the sake of completeness we include details here, with the necessary explicit modifications required for our setting.

\subsection*{Notation}All norms presented from now on are $\ell_{\infty}$-norms. Given a finite game $\mG = (S_1, S_2, \mG_1, \mG_2)$, let $g \in \mathbb{R}^{|S_1| + |S_2|}$. Denote by $\mG \oplus g$ the finite game where pure strategies are $S_n, n=1,2$ and payoffs are defined by $\mG_n(s_1, s_2) + g_{s_n}, n =1,2$. In Part II, we will use an analogous construction: given a continuous function $g: \S \to \Re^{|S_1| + |S_2|}$, where $\S = \S_1 \times \S_2$ and $\S_n = \D(S_n)$, for each $\s \in \S$, the finite game $\mG \oplus g(\s)$ is then defined. The payoff of a mixed strategy profile $\s'$ to player $n$ is then $\mG_n(\s') + \s'_n \cdot g_n(\s)$, where  $g_n(\s) \in \Re^{|S_n|}$. Given $\mu>0$ and $\s \in \S$, $\t \in \S$ is a $\mu$-reply to $\s$ if it satisfies: $\forall n, s_n, \mG_n(\t_n, \s_{-n}) \geq \mG_n(s_n, \s_{-n}) - \mu$. A duplicate strategy $\bar s_n$ in $\mG$ is a pure strategy that is equivalent to a mixed strategy $\s_n$ of $\mG$, i.e., for all $m, s_{-n} \in S_{-n}, \mG_m(\bar s_n, s_{-n}) = \mG_m(\s_n, s_{-n})$.

\subsection{Part I}

\begin{lemma}\label{partI}
Without loss of generality, game $\mG^1$ satisfies the following property: for every neighborhood $W^1$ of Graph($\BR^1$), there exists a map $h: \S^1 \to \S^1$ s.t. 

\begin{enumerate} 
	\item \label{proximity} Graph($h) \cap [(\S^1 \setminus K^{1,2\eta}) \times \S^1]  \subseteq W^1 \cap [(\S^1 \setminus K^{1, 2\eta}) \times \S^1]$
	
	\item \label{independence} For each player $n$, the $n$-th coordinate map $h_n$ of $h$ depends only on $\S_{-n}$

	\item \label{nofp} $h$ has no fixed point in $K^{1,2\eta}$  

\end{enumerate} 
\end{lemma}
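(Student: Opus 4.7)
I would follow the strategy of Theorem~3 in \citetalias{GW2005}, combining the Hopf-extension construction from Lemma~\ref{lm:function} (which yields a map without fixed points in $K^{1,2\eta}$ but that is not necessarily decoupled) with an enlargement of $\mG^1$ by duplicate pure strategies (to restore decoupling).

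Given a neighborhood $W^1$ of $\mathrm{Graph}(\BR^1)$, I would first build a continuous decoupled approximation $\bar f: \S^1 \to \S^1$ with $\mathrm{Graph}(\bar f) \subseteq W^1$. Such a $\bar f$ exists because $\BR^1_n$ depends only on $\s_{-n}$: take $\bar f_n$ to be a continuous single-valued selection of the $\mu$-best-reply correspondence for small $\mu > 0$. Applying Lemma~\ref{lm:function} with this $\bar f$ in the role of $\tilde f$ yields $f^0: \S^1 \to \S^1$ that agrees with $\bar f$ on $\S^1 \setminus \mathrm{int}(K^{1,2\eta})$ and has no fixed point in $K^{1,2\eta}$. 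Outside $K^{1,2\eta}$, $f^0$ inherits decoupling from $\bar f$; inside, it generally does not. Compactness of $K^{1,2\eta}$ yields a uniform gap
\begin{equation*}
\rho \equiv \min_{\s \in K^{1,2\eta}} \|\s - f^0(\s)\|_\infty > 0.
\end{equation*}

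To restore global decoupling, I would pass to an equivalent enlargement of $\mG^1$ by adjoining, for each of finitely many mixed strategies arising from a sufficiently fine simplicial subdivision $\P_1 \times \P_2$ of $\S^1$, a duplicate pure strategy. In the enlarged game, I would define $h_n : \S^1_{-n} \to \S^1_n$ piecewise-linearly in the barycentric coordinates of $\P_{-n}$, with $h_n(v_j^{(-n)}) = f^0_n(v_n^\ast, v_j^{(-n)})$ for a canonical reference vertex $v_n^\ast \in V(\P_n)$, where each vertex-pair value is represented by a new duplicate pure strategy of the enlarged game. By construction $h_n$ depends only on $\s_{-n}$, giving condition (2). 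A sufficiently fine subdivision makes $\|h - f^0\|_\infty \leq \rho/2$ on $K^{1,2\eta}$, whence condition (3) follows by the triangle inequality; the same subdivision makes the piecewise-linear interpolation of $f^0$ on $\S^1 \setminus \mathrm{int}(K^{1,2\eta})$ remain in $W^1$, giving condition (1).

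The main technical obstacle is the joint satisfaction of conditions (1) and (3): the decoupled, piecewise-linear $h$ must both track a graph lying in $W^1$ outside $K^{1,2\eta}$ and remain close enough to the fixed-point-free $f^0$ inside $K^{1,2\eta}$ to inherit that property. The positive gap $\rho$ supplies the quantitative slack, but reconciling it with the decoupling constraint forces the enlargement of $\mG^1$ to include pure strategies duplicating all vertex-pair values $f^0_n(v_n^\ast, v_j^{(-n)})$, so that the required $h_n$ is a bona fide mixed strategy in the enlarged game; this is the source of the ``without loss of generality'' clause in the statement.
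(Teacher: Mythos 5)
The obstacle you identify—that the Hopf-extension step in Lemma~\ref{lm:function} destroys the decoupled structure inside $K^{1,2\eta}$—is exactly the right one to focus on, but the proposed fix does not close it. Freezing the own-strategy argument at a reference vertex and setting $h_n(\s_{-n}) = f^0_n(v_n^\ast, \s_{-n})$ produces a map that is decoupled, but it is not uniformly close to $f^0$ on $K^{1,2\eta}$: no subdivision fineness controls $\|f^0_n(v_n^\ast, \s_{-n}) - f^0_n(\s_n, \s_{-n})\|$, because this discrepancy is governed by how strongly $f^0_n$ actually depends on $\s_n$ inside $K^{1,2\eta}$—precisely the dependence that the Hopf extension does not and cannot suppress. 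The gap $\rho = \min_{\s \in K^{1,2\eta}}\|\s - f^0(\s)\|$ quantifies the distance of $f^0$ from the identity, not the distance of $h$ from $f^0$, so the triangle-inequality step for condition~(3) does not go through. The argument as written would only establish (3) if $f^0_n$ were already independent of $\s_n$ on $K^{1,2\eta}$, which is the very thing you cannot assume.

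The paper uses a genuinely different enlargement. Set $S^*_n = S^1_n \times S^1_{-n}$, so each mixed strategy $\s^*_n$ carries both player $n$'s payoff-relevant marginal $p_{n,n}(\s^*_n)$ and a redundant ``copy'' of player $-n$'s strategy, the marginal $p_{n,-n}(\s^*_n)$. The map $h^*_n$ is then built from $\s^*_{-n}$ alone: it reads player $-n$'s real behavior off $p_{-n,-n}(\s^*_{-n})$ and the proxy of player $n$'s own behavior off $p_{-n,n}(\s^*_{-n})$, and feeds both into $f^0_n$. This gives exact decoupling (condition~(2)) without approximating $f^0$ at all, so there is no error to control. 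Condition~(3) follows because at any fixed point of $h^*$ the copy components must agree with the payoff-relevant ones, which forces $p(\s^*)$ to be a genuine fixed point of $f^0$—excluded on $K^{1,2\eta}$. The ``without loss of generality'' clause is exactly the passage from $\mG^1$ to this equivalent game $\mG^*$, not the introduction of duplicates indexed by a fine subdivision (the latter does appear later, in Part~II, but for a different purpose).
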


\begin{proof} 
This lemma's proof parallels the reasoning in Step 1 of \citetalias{GW2005} very closely, with no major changes. We quickly recall the details for completeness. First let $\mG^*$ be defined by letting $S^*_n = S^1_n \times S^1_{n+1}$, where $n=1,2$ is taken modulo $2$, be the strategy set of player $n$. For each $n$, and $m \in \{n,n+1\}$, denote the natural projection $p_{n,m}$ from $S^*_n$ to $S^1_m$. Define payoffs $\mG^*_n(s^*) = \mG^1_n(s),$ where for each $m$, $s_m = p_{m,m}(s^*_m)$. Extend $p_{n,m}$ linearly to $\Delta(S^*_n)$ so that $p_{n,m}$ computes the marginal over $S^1_{m}$. Let $p: \S^* \to \S^1$ be the map $p(\s^*) = (p_{1,1}(\s^*_1), p_{2,2}(\s^*_2))$. The map $p^*$ computes the payoff-relevant coordinates of $\s^*$ in the game $\mG^*$. Let $P: \S^* \times \S^* \to \S^1 \times \S^1$ be defined by $P(\s^*,\t^*) = (p(\s^*), p(\t^*))$.

Let $\BR^*$ denote the best-reply correspondence of game $\mG^*$. Let $K^*$ denote the set of $\mG^*$ that is equivalent to $K^{1,2\eta}$. Fix a neighborhood $W^*$ of Graph($\BR^*$). For $\mu>0$, consider $W(\mu) = \{ (\s^1, \t^1) \in \S^1 \times \S^1 \mid \tau^1 \text{ is a $\mu$-reply to }\s \}$. The collection $\{W(\mu)\}_{\mu>0}$ is a basis of neighborhoods of Graph($\BR^*$). Choose $\mu>0$ such that $P^{-1}(W(\mu)) \subseteq W^*$. 

Apply now Lemma \ref{auxStep1} to obtain $f^0$ with (i) Graph($f^0|_{\S^1 \setminus K^{1,2\eta}}) \subseteq W(\mu) \cap [(\S^1 \setminus K^{1, 2\eta}) \times \S^1]$ and (ii) $f^0$ has no fixed point in $K^{1, 2\eta}$.  Define now the map $h^*: \S^* \to \S^*$ such that for each $n$, $h^*_n(\s^*) = \t_n(\s^*) * p_{n+1,n+1}(\s^*_{n+1})$, where $\t_n(\s^*) * p_{n+1,n+1}(\s^*_{n+1})$ is a product probability (measure) with 
$$ \t^{*}_1(\s^*) \equiv  f^0_1 (p_{2,2}(\s^*_2), p_{2,1}(\s^*_2)),$$ 
$$\t^{*}_2(\s^*) \equiv f^0_2 (p_{1,1}(\s^*_2), p_{1,2}(\s^*_2)).$$
Note that each coordinate map $h^*_n$ is defined over $\S^*_{-n}$ and therefore satisfies \eqref{independence}.

We claim that \text{Graph}$(h^*|_{\S^* \setminus K^*}) \subseteq W^* \cap ([\S^* \setminus K^*] \times \S^*)$. From (i) above, it follows that $\t_n(\s^*)$ is a $\mu$-reply to $p_{-n}(\s^*)$. Therefore $(p(\s^*), \t(\s^*))$ belongs to $W(\mu)$. Hence $(\s^*, h^*(\s^*)) \in P^{-1}(W(\mu)) \subseteq W^*$, which concludes the proof of the claim. Therefore,  $h^*$ satisfies \eqref{proximity}.

To conclude the proof of the lemma, suppose now that $\s^*$ is a fixed point of $h^*$. Then $\s^*_n$ is a product strategy with $p_{n,n+1}(\s^*_n) = p_{n+1,n+1}(h^*_n(\s^*_n)).$ Hence, $p_{n,n}(\s^*_{n}) = p_{n,n}(h^*_{n}(\s^*)) = f^0_n(p_{-n, -n}(\s^*), p_{n-1,n}(\s^*_{n-1})) = f^0_n(p(\s^*)).$ Therefore, $p(\s^*)$ is a fixed point of $f^0$. Since from (ii) above $f^0$ has no fixed point in $K^{1,2\eta}$, $\s^* \notin K^*$. So $h^*$ satisfies \eqref{nofp}. \end{proof}

\subsection{Part II}

\begin{lemma}\label{partII}
For any $\d_0>0$, there exists $\d>0$ and an equivalent game $\bar{\mG}^1$ to $\mG^1$ obtained by adding duplicates to $\mG^1$ such that the following holds:  there exists a map $g: \bar{\S}^1 \to \mathbb{R}^{\bar{R}}$  (where $\bar{R} = |\bar{S}_1| + |\bar{S}_2|$) satisfying: 

\begin{enumerate} 
\item \label{independence2} For each player $n$, $g_n$ depends on $\bar{\S}^1_{-n}$.

\item \label{noeq2} No profile $\bar{\s} \in \bar{K}^{1}$ is an equilibrium of the game $\bar{\mG} \oplus g(\bar \s)$.

\item \label{bounddelta0} $\Vert g(\bar \s) \Vert \leq \d_0$, for each $\bar \s \in \bar{\S}^1 \setminus \text{int}_{\bar{\S}^1}(\bar{K}^{1,2\eta})$.

\item \label{bounddelta} $\Vert  g(\sigma) \Vert \leq \d$, for each $\bar \s \in \bar{K}^{1,2\eta}$.
 
\end{enumerate} 
\end{lemma}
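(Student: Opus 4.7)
The plan is to build on Lemma \ref{partI} following the construction in Step~2 of \citetalias{GW2005}, with care taken to separate the small ($\delta_0$) perturbations outside $\bar{K}^{1,2\eta}$ from the merely bounded ($\delta$) perturbations inside. First, for $n=1,2$, I would fix a polyhedral subdivision $\mathcal{P}_n$ of $\S^1_n$ such that (i) $\tilde{\partial}K^1_2$ and the boundary of $K^{1,2\eta}_2$ are unions of faces of $\mathcal{P}_2$ (so no polyhedron straddles them); and (ii) the mesh of $\mathcal{P}_n$ is at most some $\mu>0$ to be chosen later. Let $\bar{S}^1_n$ be the set of vertices of $\mathcal{P}_n$, and define $\bar{\mG}^1$ as the game obtained by adding to $\mG^1$ a duplicate pure strategy $\bar{s}^v_n$ for each vertex $v$ of $\mathcal{P}_n$, where $\bar{s}^v_n$ is equivalent to $v$ viewed as a mixed strategy of $\mG^1$.

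Next, I would define the perturbation $g$ as follows. For $\bar{\sigma}_{-n}\in\bar{\S}^1_{-n}$, identify it with a point $\sigma_{-n}\in\S^1_{-n}$ via equivalence, pick a polyhedron $P_n(\bar{\sigma}_{-n})\in\mathcal{P}_n$ containing $h_n(\sigma_{-n})$, and let $M_n(\bar{\sigma}_{-n})$ denote the maximum of $\bar{\mG}^1_n(\bar{s}^v_n,\bar{\sigma}_{-n})$ over the vertices $v$ of $P_n(\bar{\sigma}_{-n})$. Set $g_n(\bar{\sigma}_{-n})_{\bar{s}^v_n}\equiv M_n(\bar{\sigma}_{-n})-\bar{\mG}^1_n(\bar{s}^v_n,\bar{\sigma}_{-n})$ for vertices of $P_n(\bar{\sigma}_{-n})$, and assign a strictly smaller value (by a fixed positive amount) to every other pure strategy of $\bar{S}^1_n$. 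A standard mollification across boundaries between polyhedra yields a continuous $g$. By construction, $g_n$ depends only on $\bar{\sigma}_{-n}$, giving \eqref{independence2}, and the set of best replies to $\bar{\sigma}_{-n}$ in $\bar{\mG}^1\oplus g(\bar{\sigma})$ is exactly the convex hull of the vertices of $P_n(\bar{\sigma}_{-n})$; in particular, $h_n(\sigma_{-n})$ is always a best reply.

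For the bounds: on $\bar{\S}^1\setminus\bar{K}^{1,2\eta}$, Lemma \ref{partI}(1) guarantees that the graph of $h$ lies in the prescribed neighborhood $W^1$ of $\mathrm{Graph}(\BR^1)$; choosing $W^1$ and $\mu$ small enough ensures simultaneously that $h_n(\sigma_{-n})$ is $\delta_0/2$-optimal against $\bar{\sigma}_{-n}$ in $\bar{\mG}^1$ and that $\bar{\mG}^1_n(\cdot,\bar{\sigma}_{-n})$ oscillates by at most $\delta_0/2$ over any polyhedron of $\mathcal{P}_n$, so $\|g(\bar{\sigma})\|\leq\delta_0$, establishing \eqref{bounddelta0}. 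Inside $\bar{K}^{1,2\eta}$ we do not control $h$ beyond Lemma \ref{partI}(3), so $\|g(\bar{\sigma})\|$ is only bounded by a constant $\delta$ depending on the range of payoffs of $\mG^1$ and on $\mu$, which yields \eqref{bounddelta}.

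The main obstacle is \eqref{noeq2}: ruling out equilibria of $\bar{\mG}^1\oplus g(\bar{\sigma})$ in $\bar{K}^1$. Suppose $\bar{\sigma}\in\bar{K}^1$ were such an equilibrium. Then the support of $\bar{\sigma}_n$ must lie in the vertices of $P_n(\bar{\sigma}_{-n})$, so $\bar{\sigma}_n\in P_n(\bar{\sigma}_{-n})$ and $\|\bar{\sigma}_n-h_n(\bar{\sigma}_{-n})\|\leq\mu$, making $\bar{\sigma}$ a $\mu$-approximate fixed point of $h$ lying in $K^1$. Inside $K^{1,2\eta}$, compactness and Lemma \ref{partI}(3) give $\inf_{\sigma\in K^{1,2\eta}}\|\sigma-h(\sigma)\|\geq\rho>0$, so $\mu<\rho$ excludes such approximate fixed points there. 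On $K^1\setminus K^{1,2\eta}$, the graph of $h$ lies in $W^1$, so every $\mu$-approximate fixed point of $h$ in this set is close to an equilibrium of $\mG^1$; by Assumption \ref{P2} (together with compactness), every equilibrium of $\mG^1$ in $K^1$ gives player $1$ a payoff strictly below $G_1(Q)$ by some uniform $\eta^*>0$, so picking $\eta<\eta^*/2$ places all such equilibria in the interior of $K^{1,2\eta}$ at positive distance from $K^1\setminus K^{1,2\eta}$, while equilibria of $\mG^1$ outside $K^1$ are at positive distance from $K^1$. Shrinking $W^1$ and $\mu$ further rules out $\mu$-approximate fixed points in $K^1\setminus K^{1,2\eta}$ and closes the argument. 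The delicate part is coordinating the four small parameters $\eta$, $\mu$, $W^1$, and the mollification scale so that \eqref{independence2}--\eqref{bounddelta} hold simultaneously.
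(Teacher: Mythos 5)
Your outline follows the same broad template as the paper's construction (build duplicates from vertices of a fine polyhedral subdivision, define a $g_n$ that boosts the pure strategies near $h_n(\sigma_{-n})$, then argue that any equilibrium of $\bar{\mG}^1\oplus g(\bar\sigma)$ in $\bar K^1$ would be an approximate fixed point of $h$), but there is a genuine gap in the definition of $g$ itself. You set the $g$-value on the vertices of $P_n(\bar\sigma_{-n})$ so that they are all tied at $M_n(\bar\sigma_{-n})$, and then ``assign a strictly smaller value to every other pure strategy.'' As written this is either undefined or incompatible with \eqref{bounddelta0}: strategies outside $P_n(\bar\sigma_{-n})$ that were near-optimal in $\bar\mG^1$ (payoff $\approx v_n(\sigma_{-n})\ge M_n(\bar\sigma_{-n})$) would need a \emph{negative} perturbation whose size depends on the payoff spread, not on $\delta_0$, to fall strictly below $M_n$. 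The paper's construction avoids this precisely by adding $M\eta_0$ to the lifted vertices, so that they strictly dominate everything else \emph{without touching the remaining vertices at all}; the smallness of $g$ outside $K^{1,2\eta}$ then comes from the fact that $h_n$ is a $3M\eta_0$-reply there, so every lifted vertex is already a $5M\eta_0$-reply and the boost $v_n - \mG^1_n(t_n,\cdot)+M\eta_0$ is at most $6M\eta_0=\delta_0$. Your version does not have this ``lift above $v_n$'' feature and therefore the domination is not established.

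A second, related gap is the ``standard mollification across boundaries.'' The polyhedron assignment $\bar\sigma_{-n}\mapsto P_n(\bar\sigma_{-n})$ is inherently discontinuous when $h_n(\sigma_{-n})$ lies on a shared face, and mollifying $g$ there destroys the sharp statement ``best replies are exactly the convex hull of the vertices of $P_n(\bar\sigma_{-n})$,'' which is precisely what the fixed-point argument leans on. The paper sidesteps the discontinuity altogether by using the closed star of $h_n(\sigma_{-n})$ (rather than a single polyhedron) and Urysohn functions $\pi_{t_n}$ that interpolate between $1$ on $X(t_n)=h_n^{-1}(\mathrm{ClSt}(t_n))$ and $0$ on $Y(t_n)$; this yields a continuous $g$ directly, and the fixed-point argument then only needs that the set of lifted vertices at a putative equilibrium $\bar\sigma$ sits within distance $\alpha$ of $h_n(\sigma_{-n})$, contradicting $\Vert\sigma-h(\sigma)\Vert>\alpha$ inside $K^{1,2\eta}$. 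Your route is also more indirect for equilibria in $K^1\setminus K^{1,2\eta}$: you try to show approximate fixed points of $h$ there are close to actual equilibria of $\mG^1$ and then apply Assumption \ref{P2}, whereas the paper invokes the auxiliary Lemma \ref{auxlemma2}, which gives a uniform $\delta_1$ such that \emph{no} $\delta_1$-perturbation has an equilibrium in $\bar K^1\setminus\bar K^{1,2\eta}$ — this is cleaner because it decouples that case from any properties of $h$ and only uses compactness and that every point there admits a strictly profitable deviation.
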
 

The proof will require a simple auxiliary result whose proof can be found in \citetalias{GW2005}. Let $M$ be a constant satisfying: for each $\s,\t$, $\Vert \mG^1_n(\s) - \mG^1_n(\t) \Vert \leq M \Vert \s - \t \Vert$, for each $n=1,2$.

\begin{lemma}[Lemma, \citetalias{GW2005}]\label{lemmaaux}
If $\t_n$ is a $\beta_1$-reply against $\s$, $\Vert \t_n -\t'_n \Vert \leq \beta_2$, $\Vert \s - \s' \Vert \leq \beta_3$, then $\t'_n$ is a $\beta_1 + M\beta_2$-reply to $\s$, and $\t_n$ is a $(2M\beta_3 + \beta_1)$-reply to $\s'$.
\end{lemma}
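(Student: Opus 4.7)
The plan is to derive both inequalities purely from the definition of $\beta_1$-reply together with the global Lipschitz estimate $\|\mathbb{G}^1_n(\s) - \mathbb{G}^1_n(\t)\| \leq M\|\s-\t\|$. The content is essentially a triangle-inequality manipulation, so no fixed-point or topological machinery is needed; the only care required is to account for the fact that in part (2) a change in $\s$ affects the payoff at \emph{both} the candidate reply $\t_n$ and every alternative $s_n$, which explains the factor $2$.

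For the first claim, I would fix an arbitrary pure strategy $s_n$ and write
\begin{equation*}
\mathbb{G}^1_n(\t'_n,\s_{-n}) \ge \mathbb{G}^1_n(\t_n,\s_{-n}) - \bigl|\mathbb{G}^1_n(\t'_n,\s_{-n}) - \mathbb{G}^1_n(\t_n,\s_{-n})\bigr| \ge \mathbb{G}^1_n(\t_n,\s_{-n}) - M\beta_2,
\end{equation*}
using the Lipschitz bound and $\|\t_n-\t'_n\|\le\beta_2$. Combining this with the $\beta_1$-reply property $\mathbb{G}^1_n(\t_n,\s_{-n})\ge \mathbb{G}^1_n(s_n,\s_{-n})-\beta_1$ yields $\mathbb{G}^1_n(\t'_n,\s_{-n})\ge \mathbb{G}^1_n(s_n,\s_{-n}) - (\beta_1+M\beta_2)$. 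Since $s_n$ was arbitrary, $\t'_n$ is a $(\beta_1+M\beta_2)$-reply to $\s$.

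For the second claim, fix again an arbitrary $s_n$ and chain three inequalities: the Lipschitz bound applied to changing $\s_{-n}$ to $\s'_{-n}$ at $\t_n$, the $\beta_1$-reply property at $\s$, and the Lipschitz bound again applied to changing $\s'_{-n}$ back to $\s_{-n}$ at $s_n$:
\begin{equation*}
\mathbb{G}^1_n(\t_n,\s'_{-n}) \ge \mathbb{G}^1_n(\t_n,\s_{-n}) - M\beta_3 \ge \mathbb{G}^1_n(s_n,\s_{-n}) - \beta_1 - M\beta_3 \ge \mathbb{G}^1_n(s_n,\s'_{-n}) - \beta_1 - 2M\beta_3.
\end{equation*}
Again, $s_n$ was arbitrary, so $\t_n$ is a $(\beta_1+2M\beta_3)$-reply to $\s'$, proving the second assertion.

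There is no significant obstacle here: the statement is a quantitative stability lemma for approximate best replies under perturbations of either the candidate strategy or the opponent's strategy, and the proof is a direct application of the Lipschitz continuity of the multilinear payoff functions together with the definition of $\mu$-reply. The only subtlety worth highlighting is why the factor $2$ appears in the second claim but not the first: changing $\s_{-n}$ affects both sides of the defining inequality for a $\beta_1$-reply, whereas changing $\t_n$ to $\t'_n$ affects only the left-hand side.
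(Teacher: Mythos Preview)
Your proof is correct; the two claims follow exactly as you show from the Lipschitz bound and the definition of a $\mu$-reply. The paper itself does not give a proof but simply refers to \citetalias{GW2005}, so your argument supplies the elementary triangle-inequality computation that the paper omits.
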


\begin{proof} See \citetalias{GW2005}.\end{proof}

\begin{lemma}\label{auxlemma2}
There exists $\d_1>0$ such that for any equivalent game $\bar{\mG}^1$ to $\mG^1$ and any $\bar{g} \in \mathbb{R}^{|\bar{S}_1| + |\bar{S}_2|}$ with $\Vert \bar{g} \Vert \leq \d_1$, $\bar{\mG}^1 \oplus \bar{g}$ does not have an equilibrium in $\bar{K}^1 \setminus \bar{K}^{1,2\eta}$.
\end{lemma}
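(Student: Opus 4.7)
The plan is to prove Lemma~\ref{auxlemma2} by contradiction and compactness. Suppose the conclusion fails: then for each $k\geq 1$ there is an equivalent game $\bar{\mG}^{1,k}$, a vector $\bar g^{k}$ with $\|\bar g^{k}\|\le 1/k$, and an equilibrium $\bar\s^{k}$ of $\bar{\mG}^{1,k}\oplus \bar g^{k}$ lying in $\bar K^{1,k}\setminus \bar K^{1,k,2\eta}$. Although the strategy spaces $\bar\S^{1,k}$ vary with $k$, equivalence of games supplies payoff-preserving affine maps from each $\bar\S^{1,k}$ back into the fixed set $\S^1$. Define $\s^{k}\in\S^1$ as the strategy equivalent to $\bar\s^{k}$ in $\mG^1$; by hypothesis $\s^{k}\in K^1\setminus K^{1,2\eta}$. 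Since $\S^1$ is compact and $K^1$ is closed, pass to a subsequence so that $\s^{k}\to\s^{*}\in K^1$.

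The next step is to upgrade $\s^{*}$ to a Nash equilibrium of $\mG^1$ and then collide with Assumption~\ref{P2}. For any pure $s^1_n\in S^1_n$, let $\bar s^{1,k}_n$ be a duplicate of $s^1_n$ in $\bar{\mG}^{1,k}$; equivalence gives $\mG^1_n(s^1_n,\s^{k}_{-n})=\bar{\mG}^{1,k}_n(\bar s^{1,k}_n,\bar\s^{k}_{-n})$. Because $\bar\s^{k}$ is an equilibrium of the perturbed game and $\|\bar g^{k}\|\le 1/k\to 0$, the equilibrium inequality together with continuity of payoffs yields, in the limit, $\mG^1_n(s^1_n,\s^{*}_{-n})\le \mG^1_n(\s^{*})$ for every $n$ and every $s^1_n$. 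Hence $\s^{*}$ is a Nash equilibrium of $\mG^1$ contained in $K^1$.

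The contradiction now comes from Assumption~\ref{P2}. By that assumption and compactness of the set of equilibria of $\mG^1$ inside $K^1$, for $\eta>0$ sufficiently small (which we assume when invoking the statement of Lemma~\ref{lm:excluded}) every such equilibrium pays player~$1$ strictly less than $G_1(Q)-3\eta$; in particular $\mG^1_1(\s^{*})\le G_1(Q)-3\eta$. On the other hand, since $\s^{k}_2\notin K^{1,2\eta}_2$ there exists $s^1_1\in S^1_1$ with $\mG^1_1(s^1_1,\s^{k}_2)>G_1(Q)-2\eta$, so player~$1$'s best-reply value in $\bar{\mG}^{1,k}\oplus\bar g^{k}$ exceeds $G_1(Q)-2\eta-\|\bar g^{k}\|$, and hence his unperturbed equilibrium payoff, which coincides with $\mG^1_1(\s^{k})$, is at least $G_1(Q)-2\eta-2\|\bar g^{k}\|$. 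Letting $k\to\infty$ gives $\mG^1_1(\s^{*})\ge G_1(Q)-2\eta>G_1(Q)-3\eta$, contradicting the previous bound and proving the lemma.

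The main subtlety will be handling the varying strategy spaces $\bar\S^{1,k}$ in the compactness step: projecting back to $\S^1$ via the equivalence maps is what places every $\s^{k}$ in a common compact set and makes the limit argument well-posed; it also lets us compare the perturbed and unperturbed payoffs through the single game $\mG^1$. A secondary point is the calibration of $\eta$: Assumption~\ref{P2} furnishes a strict gap between equilibrium payoffs in $K^1$ and $G_1(Q)$, and the construction of Lemma~\ref{lm:excluded} is understood to hold for any $\eta$ smaller than a third of that gap, which is all we need for the last inequality.
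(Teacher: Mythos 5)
Your proof is correct, and it reaches the conclusion by a related but structurally different compactness argument. The paper proceeds directly: it observes that $K^1\setminus\text{int}_{\S^1}(K^{1,2\eta})$ is compact and contains no equilibrium of $\mG^1$, covers it by balls on each of which some fixed pure deviation improves a player's payoff by a uniform $\rho(\s^i)>0$, and takes $\d_1=\tfrac12\min_i\rho(\s^i)$. You instead argue by contradiction with a diagonal sequence of equivalent games and $1/k$-perturbations; your key move of projecting each $\bar\s^k$ back to $\s^k\in\S^1$ is exactly what places everything in a single compact set and neutralizes the quantifier over equivalent games, and the bounds $\mG^1_1(\s^*)\le G_1(Q)-3\eta$ (from Assumption~\ref{P2}) versus $\mG^1_1(\s^*)\ge G_1(Q)-2\eta$ (since $\s^k_2\notin K^{1,2\eta}_2$) then collide. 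The paper's covering argument has the virtue of producing an explicit candidate for $\d_1$; yours is shorter and makes the role of the equivalence maps more transparent---indeed, the paper's line ``$s_n\in S_n\subseteq\bar S_n$'' is not literally true for an arbitrary equivalent game, and your route, by carrying all payoff comparisons back to $\mG^1$, naturally sidesteps that. One wording fix on your side: when you say ``let $\bar s^{1,k}_n$ be a duplicate of $s^1_n$'', this need not be a \emph{pure} strategy of $\bar{\mG}^{1,k}$; you only need an equivalent \emph{mixed} strategy, and your bound $\vert\bar\s_n\cdot\bar g^k_n\vert\le\Vert\bar g^k\Vert$ already covers that case without change. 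Finally, note that both proofs rely on the same implicit calibration of $\eta$ relative to the $\d$ fixed at the start of Section~\ref{sec:step1} (so that all equilibria of $\mG^1$ in $K^1$ lie strictly inside $K^{1,2\eta}$); your aside ``for $\eta>0$ sufficiently small'' is a fair reading of the paper's convention, but it would be cleaner to fix $\eta$ before invoking Lemma~\ref{lm:excluded} rather than inside this proof.
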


\begin{proof}[Proof of Lemma \ref{auxlemma2}] Observe first that $K^1 \setminus \text{int}_{\S^1}(K^{1,2\eta})$ is a compact set. For each $\s \in K^1 \setminus \text{int}_{\S^1}(K^{1,2\eta})$, there exists $\e(\s)>0$, $\rho(\s)>0$, $n \in \{1,2\}$ and $s_n \in S_n$ such that for each $\s' \in B_{\e(\s)}(\s)$, we have
$$ \mG^1_n(s_n, \s'_{-n}) > \mG^1_n(\s') + \rho(\s)$$

The balls $\{B_{\e(\s)}(\s)\}_{\s \in K^1 \setminus \text{int}_{\S^1}(K^{1,2\eta})}$ form an open cover of $K^1 \setminus K^{1,2\eta}$, from which we extract a finite subcover: there exists $\s^1,...,\s^k$ with corresponding $\e(\s^i)>0$, $\rho(\s^i)>0$, $n_i \in \{1,2\}$ and $s_{n_i} \in S_{n_i}$ such that for each $\s' \in B_{\e(\s^i)}(\s^i)$
$$\mG^1_{n_i}(s_{n_i}, \s'_{-n_{i}}) > \mG^1_{n_i}(\s') + \rho(\s^i).$$

Define $\rho = \text{min}_i \rho(\s^i)$. Fix an equivalent game $\bar{\mG}^1$ to $\mG^1$, and $\bar{\s} \in \bar{K}^1 \setminus (\text{int}_{\bar{\S}^1} \bar{K}^{1,2\eta}$). Let $\s$ be the equivalent profile in $\mG$. Then there exist $n, s_n \in S_n \subseteq \bar{S}_n$, 
$$ \bar{\mG}^1_n(s_n, \bar{\s}_{-n}) = \mG^1_n(s_n, \s_{-n}) > \mG^1_n(\s) + \rho = \bar{\mG}^1_n(\bar{\s}) + \rho.$$

Hence, for any $\bar{g} \in \mathbb{R}^{|\bar{S}_1|+|\bar{S}_2|}$ with $\Vert \bar{g} \Vert \leq \frac{\rho}{2} =: \d_1$:
$$ \bar{\mG}^1_n(s_n, \bar{\s}_{-n}) + \bar{g}_{s_n}> \bar{\mG}^1_{n}(\bar{\s}) + \bar{\s}_n \cdot \bar{g}_n.$$ Therefore, $\bar{\s}$ is not an equilibrium of $\bar{\mG}^1 \oplus \bar{g}$. \end{proof}

\begin{proof}[Proof of Lemma \ref{partII}]

Fix $0<\d_0 < \d_1$, where $\d_1$ is obtained from Lemma \ref{auxlemma2}. Fix $\eta_0 = \frac{\d_0}{6M}$. For each $\s$, there exists an open ball $B(\s)$ around $\s$ of radius strictly less than $\eta_0$ such that for each $\s' \in B(\s)$, the set of pure best replies to $\s'$ is a subset of those that are best replies to $\s$. Since the set of best replies for each player $n$ to a strategy profile is the face of $\S^1_n$ spanned by her pure strategies $\BR(\s') \subseteq\BR(\s), \forall \s' \in B(\s)$.

The collection of balls $\{B(\s)\}_{\s}$ define an open covering of $\S^1$. By compactness, there exists a finite set of points $\s^1, \s^2,...,\s^k$ whose corresponding balls form a subcover. For each $\s^i$ let $W(\s^i)$ be the $\eta_0$-neighborhood of $\BR^1(\s^i)$.

Let $W = \bigcup_i (B(\s^i) \times W(\s^i))$. The set $W$ is a neighborhood of Graph($\BR^1$). From Part I, there exists $h: \S^1 \to \S^1$ such that: (i) Graph($h) \cap (\S^1 \setminus K^{1,2\eta}) \times \S^1 \subset W \cap (\S^1 \setminus K^{1,2\eta}) \times \S^1$; (ii) For each player $n$, the coordinate map $h_n$ of $h$ depends on $\S^1_{-n}$; 
(iii) $h$ has no fixed point in $K^{1}$.

Let $\t = h(\s)$, where $\s \in \S^1 \setminus K^{1,2\eta}$. Then there exist $\s^i, \t^i$ with $\s \in B(\s^i),$ $\t^i$ is a best reply to $\s^i$ and $\t$ is within $\eta_0$ of $\t^i$. Therefore, Lemma \ref{lemmaaux} implies that $\t^i$ is a $2M\eta_0$-reply against $\s$ and therefore $\t$ is a $3M\eta_0$-reply against $\s$.

Fix now $\alpha>0$ such that if $\s \in K^{1,2\eta}$, then $\Vert \s - h(\s) \Vert >\alpha$. For each $n=1,2$, take a simplicial subdivision  $\mathcal{T}_n$ of $\S_n$ such that: (a) $K^{1,2\eta}_2$ is the space of a subcomplex of $\mathcal{T}_2$; (b) the  diameter of the subdivision is less than $\eta_0$ and $\alpha$; (c) no simplex that intersects $K^{1,2\eta}_2$ intersects the boundary of $K^{1,\eta}_2$ w.r.t $\S^1_2$. Let $T_n$ be the set of vertices of $\mathcal{T}_n$. Let $T \equiv T_1 \times T_2$. 

We now define a game $\bar{\mG}^1$ that is equivalent to $\mG^1$. For each player $n$, the pure strategy set of player $n$ is $T_n$. The game $\bar{\mG}^1$ is equivalent to $\mG^1$ and the mixed strategy set is denoted $\bar{\S}^1= \bar{\S}^1_1 \times \bar{\S}^1_2$.

For each $t_n \in T_n$, let $St(t_n)$ denote the star of $t_n$ in the simplicial complex $\mathcal{T}_n$ and $ClSt(t_n)$ is the closed star of $t_n$ with respect to $\mathcal{T}_{n}$. Let $X(t_n) \equiv h^{-1}_n(ClSt(t_n)) \subseteq \S^1_{-n}$; let $Y(t_n) \equiv \{ \s_{-n} \in \S^1_{-n} | \Vert t_n - h_n(t_n) \Vert \geq 2\eta_0 \}$. Since the diameter of each simplex of $\mathcal{T}_n$ is less than $\eta_0$, $X(t_n) \cap Y(t_n) = \emptyset$. Use Urysohn's Lemma to define a function $\pi_{t_n}: \S^1_{-n} \to [0,1]$, with $\pi^{-1}_{t_n}(1) \equiv X(t_n)$ and $Y(t_n) \subset \pi^{-1}_{t_n}(0)$.

Let $R' = |T_1| + |T_2|.$ We define a map $g: \bar{\S}^1 \to \mathbb{R}^{R'}$, by first defining it on $\S^1$ and extending to $\bar{\S}^1$ by equivalence. Let $v_n(\s_{-n}) = \text{max}_{s \in S_n} \mG^1_n(s, \s_{-n})$. If $t_2 \notin K^{1,2\eta}_2$, we define $g_{t_2} \equiv 0$. For any other $t_2$: 
$$ g_{t_2}(\s_{1}) \equiv \pi_{t_2}(\s_{1})[v_2(\s_{1}) - \mG^1_2(t_2, \s_{1}) + M\eta_0].$$

Now, for $\s_2 \in \S^1_2 \setminus \text{ int}_{\S^1_2}(K^{1,\eta}_2)$, define, for each $t_1 \in T_1$, $g_{t_1}(\s_2) \equiv 0$. If $\s_2 \in K^{1, 2\eta}_2$, let  
$$ g_{t_1}(\s_{2}) \equiv \pi_{t_1}(\s_{2})[v_1(\s_{2}) - \mG^1_1(t_1, \s_{2}) + M\eta_0].$$ 

Fix $n, t_n, \s \in \S^1 \setminus K^{1,2\eta}$. We want to prove that $0 \leq g_{t_n}(\s) \leq 6M\eta_0 = \d_0$. 

If $\s_1 \in Y(t_2)$, then $g_{t_2}(s_2) =0$. If $\s_1 \notin Y(t_2)$, then $\Vert t_2 - h_2(\s_1)\Vert \leq 2\eta_0$.  Since $h_n(\s_{-n})$ is a $3M\eta_0$-reply to $\s_{-n}$, Lemma \ref{lemmaaux} implies that $t_n$ is a $5M\eta_0$-reply to $\s_{-n}$, i.e., $0 \leq f_n(\s_{-n}) - \mG_{n}(t_n, \s_{-n}) \leq 5M\eta_0$. Hence, $0 \leq g_{t_n}(\s) \leq 6M\eta_0 = \d_0$. Hence $g_{t_2}(\s_1) \in [0,\d_0]$. 

Now assume $\s_2 \in (\S^1_2 \setminus \text{int}_{\S^1_2}K^{1,\eta}_2) \cup K^{1,2\eta}_2$. Note the union is disjoint. If $\s_2 \in Y(t_1)$, then $g_{t_1}(\s_2)=0$. If $\s_2 \notin Y(t_1)$, then repeating the same reasoning as above and considering the definition of $g_{t_1}$ one gets $g_{t_1}(\s_2) \in [0,\d_0]$. Using now Urysohn's Lemma, extend $g_{t_1}$ continuously to $\S^1_2$ so that image of $g_{t_1}$ is contained in $[0,\d_0]$. This proves \eqref{bounddelta0}. Note that, in addition, since $K^{1,2\eta}$ is compact and $g$ continuous, there exists $\d>0$ such that $\Vert  g(\s) \Vert \leq \d$ for each $\s \in K^{1,2\eta}$. This shows \eqref{bounddelta}. It is also clear from the construction of $g$ that it satisfies \eqref{independence2}.

To finish the proof we show \eqref{noeq2}. Suppose $\bar{\s} \in \bar{K}^{1,2\eta}$ is an equilibrium and let $\s$ be the corresponding equivalent strategy in $\S^1$. Following the exact same reasoning as in the last paragraph of Step 2 of \citetalias{GW2005}, the pure best-replies $T'_n$ to $\s_{-n}$ are those $t_n$ for which $\s_{-n} \in X(t_n)$. Obviously, there exists at least one such $t_n$, i.e., a vertex of the carrier of $h_n(\s_{-n})$ in $\mathcal{T}_n$. Hence the distance between $t_n$ and $h_{n}(\s_{-n})$ is $ < \alpha$. The support of $\bar{\s}_n$ being a subset of $T'_n$ implies that $\Vert \s_n - h_n(\s_{-n}) \Vert \leq \alpha$. This is a contradiction and proves that $\bar{K}^{1,2\eta}$ contains no profile $\bar{\s}$ that is an equilibrium of $\bar{\mG}^1 \oplus g(\bar{\s})$.

Suppose now that $\bar{\s} \in \bar{K}^1 \setminus \bar{K}^{1,2\eta}$ and consider the game $\bar{\mG} \oplus g(\bar{\s})$. Then, by construction, $\Vert g(\bar \s) \Vert \leq \d_0 < \d_1$. Our choice of $\d_0$ then implies that $\bar{\s}$ is not an equilibrium of $\bar{\mG}^1 \oplus g(\bar{\s})$. This concludes the proof of this step. 
\end{proof}

\subsection{Part III}
We now conclude by proving Lemma \ref{lm:excluded}. 

Let $\bar{\mG}^1$ to be the game from Lemma \ref{partII}, satisfying properties (1)-(4). Note that $\bar{K}^{1,2\eta}$ is compact, $g$ is continuous and if $\bar{\s} \in \bar{K}^1$, then $\bar{\s}$ is not an equilibrium of $\bar{\mG} \oplus g(\bar \s)$. There exists then $\xi>0$ such that no $\bar{\s} \in \bar{K}^1$ is an equilibrium of $\bar{\mG} \oplus g$, with $\Vert g - g(\bar{\s}) \Vert < \xi$. Choose $\zeta>0$ such that: 
$$ \Vert g(\bar{\s}) - g(\bar{\s}') \Vert \leq \xi \text{, if } \Vert \bar{\s} - \bar{\s}' \Vert < \zeta.$$

Consider now a simplicial subdivision $\mathcal{S}_n$ of $\bar{\S}^1_n$ with diameter less than $\zeta>0$ and satisfying the requirement that no simplex of $\mathcal{S}_2$ that intersects $\tilde{\partial}K^1_2$ has a point in common with $\bar{K}^{1,\eta}$.

Let $\mathcal{P}_n$ be a polyhedral complex generated by $\mathcal{S}_n$, with $\g_n$ the convex function that is linear on each polyhedron ($\mathcal{P}_n$ refines $\mathcal{S}_n$) (cf. Appendix B in \citetalias{GW2005}). Consider now the game $\tilde{\mG}$ where the strategy set of each player $n$ is $P_n$, the set of vertices of $\mathcal{P}_n$. Let $\tilde{\S}_n$ be the mixed strategies of $\tilde{\mG}$. It is clear that $\tilde{\mG}$ satisfies Lemma \ref{lm:excluded}~\eqref{1} and \eqref{2}.

Define now the matrix $A_n \in \mathbb{R}^{|\bar{S}_n| \times |P_n|}$, by letting the column $p_n \in P_n$ have the coordinates of the mixed strategy $p_n \in \bar{\S}_n$ in $\bar S_n$. Therefore, given $\tilde{\s}_m$, $\bar{\s}_m = A_m\tilde{\s}_m$ gives the equivalent strategy $\bar{\s}_m$ to $\tilde{\s}_m$ in $\bar{\mG}$. 

For each $n$, let now $B_n: P_{-n} \to \mathbb{R}^{P_n}$ be defined by $B_n(p_{-n}) = A^T_n g_n(p_{-n})$. We define, 
$$ \tilde \mG^{1,\d}_n(p) \equiv \tilde{\mG}^1_{n}(p) + B_{n,p_n}(p_{-n}).$$

$\tilde{\mG}^{1,\d}$ is a $\d$-perturbation of $\tilde{\mG}^1$, which is equivalent to $\bar{\mG}^1$. For $\e>0$, define: 
$$ \tilde{\mG}^{1,\d,\e}_n \equiv \tilde{\mG}^1_n(p) + B_{n,p_n}(p_{-n}) - \e \g_n(p_n).$$

$\tilde{\mG}^{1,\d,\e}$ is an $\e$-perturbation of $\tilde{\mG}^{1,\d}$. Our construction of $g$ in Part II now implies that $\tilde{G}^{1,\d}$ and $\tilde{G}^{1,\d,\e}$ satisfies Lemma \ref{lm:excluded}~\eqref{part:payoffs}.

We claim that for sufficiently small $\e>0$, the game $\tilde{\mG}^{1,\d,\e}$ has no equilibrium in $\tilde{K}^1$, the set equivalent to $\bar{K}^1$ (and therefore to $K^1$). Suppose to the contrary that there exists $(\e_k)_{k \in \mathbb{N}}$ converging to zero and a corresponding sequence $\tilde{\s}^{k}$ in $\tilde{K}^{1}$. For each $k$, let $\bar{\s}^k$ be the equivalent profile in $\bar{\S}^1$. For each $k$ and each player $n$, if $\tilde{\t}^k_n \in \tilde{\S}^1_n$ such that $A_n\tilde{\t}^k_n = \bar{\s}^k_n$, then $\sum_{p_n \in P_n}\tilde{\t}^k_n(p_n)\g_n(p_n) \geq \sum_{p_n \in P_n}\tilde{\s}^k_n(p_n)\g_n(p_n)$. Thus $\tilde{\s}^k_n$ solves the linear programming problem $\text{min}_{\tilde{\t}^k_n}\sum_{p_n \in P_n}\tilde{\t}^k_n(p_n)\g_n(p_n)$ subject to $A_n\tilde{\t}^k_n = \bar{\s}^k_n$. Let $L^k_n$ be the unique polyhedron of $\mathcal{P}_n$ that contains $\bar{\s}^k_n$ in its interior. Since $\g_n$ is a convex function $\g_n(\tilde \s^k_n) \leq \sum_{p_n \in P_n}\tilde{\t}^k_n(p_n)\g_n(p_n)$. Moreover, the construction of $\g_n$ ensures that this inequality is strict unless the support of $\tilde{\t}^k_n$ is included in $L^k_n$. Therefore Lemma \ref{lm:excluded}~\eqref{part:support} is satisfied and the equilibrium strategy $\tilde{\s}^k_n$ assigns positive probability only to points in $L^k_n$.

Now let $\tilde{\s}$ be a limit of $\tilde{\s}^k$ as $\e^k \to 0$, and let $\bar{\s}$ be an equivalent mixed strategy. Therefore $\bar{\s}$ is an equilibrium of the game $\bar{\mG} \oplus b$, where $b_{n, \bar{s}_n} = \sum_{p_{-n}}g_{n, \bar{s}_n}(p_{-n})\tilde{\s}_{-n}(p_{-n})$. By the arguments above, there exists a polyhedron $P^{\circ}_n \in \mathcal{P}_n$ such that $\tilde{\s}_n$ assigns positive probability only to points in $P^{\circ}_n$. Viewing a vertex $p^*_{-n}$ of $P^{\circ}_{-n}$ as a point in $\bar{\S}^1_{-n}$, we have 
$$ \Vert p^*_{-n} - \bar{\s}_{-n} \Vert < \zeta \implies \Vert g_n(p^*_{-n}) - g_n(\bar{\s}_{-n}) \Vert \leq \xi.$$ This implies 
$$ \Vert \sum_{p_{-n}}g_{n}(p_{-n})\tilde \s_{-n}(p_{-n}) - g_{n}(\bar{\s}_{-n}) \Vert \leq \xi. $$

Letting $\bar{g}_n \in \mathbb{R}^{\bar S_n}$ be defined by $\bar{g}_{n,\bar{s}_n} =  \sum_{p_{-n}}g_{n, \bar{s}_n}(p_{-n})\tilde \s_{-n}(p_{-n})$. Therefore, $\bar \s$ is an equilibrium of $\bar{\mG} \oplus \bar{g}$. However, 
$$\Vert \bar{g} - g(\bar \s) \Vert \leq \xi.$$ Therefore, $\bar{\s} \notin \bar{K}^1$, by our assumption on $\xi$. Therefore, for sufficiently small $\e>0$, $\tilde{\mG}^{1,\d,\e}$ has no equilibrium in $\tilde{K}^{1}$, so Lemma \ref{lm:excluded}~\eqref{part:noeq} is satisfied.

\renewcommand{\bibfont}{\small}
\bibliographystyle{abbrvnat}
\bibliography{references}

\end{document}